\definecolor{c1}{rgb}{0,0,1} 
\definecolor{c2}{rgb}{0,0.3,0.9} 
\definecolor{c3}{rgb}{0.3,0,0.9} 
\theoremstyle{plain}
\newtheorem{de}{Definition}[section]
\newtheorem{de*}{Definition}
\newtheorem{theo}{Theorem}[section] 
\newtheorem{prop}{Proposition}[section]                              
\newtheorem{lem}{Lemma}[section]
\newtheorem{coro}{Corollary}[section]
\theoremstyle{definition}
\newtheorem{ex}{Examples}[section]
\newtheorem{example}{Examples}[section]
\theoremstyle{remark}
\newtheorem{rem}{Remarks}[section]
\newcommand{\Hcal}{\mathcal{H}}
\newcommand{\Kcal}{\mathcal{K}}
\newcommand{\Ncal}{\mathcal{N}}
\newcommand{\DF}{\mathcal{N}(\mathcal{P})}
\newcommand{\Lcal}{\mathcal{L}}
\newcommand{\Pcal}{\mathcal{P}}
\newcommand{\Acal}{\mathcal{A}}
\newcommand{\Ccal}{\mathcal{C}}
\newcommand{\Scal}{\mathcal{S}}
\newcommand{\Ecal}{\mathcal{E}}
\newcommand{\Bcal}{\mathcal{B}}
\newcommand{\R}{\mathbb{R}}
\newcommand{\Nbb}{\mathbb{N}}
\newcommand{\Lbb}{\mathbb{L}}
\newcommand{\C}{\mathbb{C}}
\newcommand{\Sp}{\text{Sp}}
\newcommand{\deco}{\text{deco}}
\newcommand{\scal}[2]{\left\langle #1 \,\big|\,#2\right\rangle}
\newcommand{\sca}[2]{\left\langle #1\, ,\,#2\right\rangle}
\newcommand{\norm}[1]{\left\| #1 \right\|}
\newcommand{\proj}[2]{|#1\rangle\langle #2|}
\newcommand{\Dent}[2]{\operatorname{D}\left(#1\,||\,#2\right)}
\newcommand{\DentDF}[1]{\operatorname{D}\left(#1\,,\,\Ncal\right)}
\newcommand{\IM}[1]{\operatorname{I}\left(\operatorname{A}:\operatorname{B}\right)_{#1}}
\newcommand{\eps}{\varepsilon}
\newcommand{\Ind}{\mathds{1}}
\newcommand{\Tr}{\operatorname{Tr}\,}
\newcommand{\tr}{\operatorname{Tr}}
\newcommand{\Var}{\operatorname{Var}}
\newcommand{\EP}{\operatorname{EP}}
\newcommand{\IP}{\operatorname{IP}}
\newcommand{\Ker}{\operatorname{Ker}\,}
\numberwithin{equation}{section}
\begin{document}

\title{Estimating the decoherence time using non-commutative Functional Inequalities\thanks{Work supported by A.N.R. grant: ANR-14-CE25-0003 "StoQ" }}

\author{Ivan Bardet}
    \affil{\small Institut des Hautes \'Etudes Scientifiques, Université Paris-Saclay, 35 Route de Chartres, 91440 Bures-sur-Yvette, France}
    
\date{October 3, 2017}


\maketitle

\begin{abstract} 
We generalize the notions of the non-commutative Poincar\'e and modified log-Sobolev inequalities for primitive quantum Markov semigroups (QMS) to not necessarily primitive ones. These two inequalities provide estimates on the decoherence time of the evolution. More precisely, we focus on an algebraic definition of environment-induced decoherence in open quantum systems which happens to be generic on finite dimensional systems and describes the asymptotic behavior of any QMS. An essential tool in our analysis is the explicit structure of the decoherence-free algebra generated by the QMS, a central object in the study of passive quantum error correction schemes. The Poincaré constant corresponds to the spectral gap of the QMS, which implies its positivity, while we prove that the modified log-Sobolev constant is positive under the $\Lbb_1$-regularity of the Dirichlet form, a condition that also appears in the primitive case. We furthermore prove that strong $\Lbb_p$-regularity holds for quantum Markov semigroups that satisfy a strong form of detailed balance condition for $p\geq1$. The latter condition includes all known cases where this strong regularity was proved. Finally and to emphasize the mathematical interest of this study compared to the classical case, we focus on two truly quantum scenarios, one exhibiting quantum coherence, and the other, quantum correlations.
\end{abstract}

\section{Introduction}
Functional Inequalities (FI) such as the Poincaré Inequality (PI) and the log-Sobolev Inequality (LSI) play a central role in the study of the asymptotic behavior of open systems. They were introduced for quantum systems in the pioneering article \cite{O-Z} and since then have been the subject of intensive studies \cite{C-M,C-S,CKMT15,MSW2016,MSW16,TPK,KT2013}. In all those works, the dissipative evolution of an open quantum system is assumed to drive the system toward its unique invariant state: this is the \emph{primitive} assumption. The goal of this article is to initiate the study of functional inequalities for not necessarily primitive quantum Markov semigroups, starting with the PI and the modified LSI. There are several motivations behind such a generalization.

\paragraph{}Functional inequalities are particularly relevant in order to prove the rapid mixing of Markovian dynamics \cite{KT2013}. The role of this latter property is well-known in statistical physics where it is used in order to prove the existence of a unique invariant state for local dissipative systems \cite{Zeg90log,Zeg90log2,Zeg92,MO94,Mart99}. It was also recently highlighted as a key assumption in order to prove stability results for such systems \cite{CLMP-G13,LCMP-G15}, which is a much desired property of protocols involving dissipative engineering \cite{KBDKMZ08,VWC09,KRS11}. In this context, it was already mentioned in \cite{CLMP-G13} that withdrawing the primitive assumption would lead to further technical subtleties. This article is meant to establish the bases for such an extension.

\paragraph{}A natural generalization of the primitive assumption is given by the notion of \emph{Environment-Induced Decoherence} (EID). Introduced by Zurek in the eighties \cite{Zur81,Zur82}, EID provides a dynamical argument to the disappearance of quantum states in our classical world. The first mathematical formulation of EID was proposed by Blanchard and Olkiewicz in \cite{B-O} (see also \cite{H11,CSU2,CSU4} for discussions and more recent formulations). In this article, we focus on quantum systems with finite degrees of freedom. Under the legitimate Markovian assumption, one can consider that the evolution of the system is modeled by a quantum Markov semigroup (QMS) $(\Pcal_t)_{t\geq0}$ acting on the algebra of (bounded) operators $\Bcal(\Hcal)$, for some finite dimensional Hilbert space $\Hcal$. That is, $\Pcal$ is a continuous semigroup of completely positive maps, that preserve the identity operator: $\Pcal_t(I_\Hcal)=I_\Hcal$ for all $t\geq0$. For such systems, it was proved that the algebra of observables can be split into two parts \cite{H11,CSU2,DFSU1}:
\begin{itemize}
\item a (von Neumann) subalgebra of $\Bcal(\Hcal)$, called the \emph{Decoherence-Free Algebra} (DF algebra), which is free from any dissipative effect: the evolution of any observable belonging to this subalgebra is given by a $*$-isomorphism, which in our case simply means a unitary evolution;
\item a subspace which is not detectable by any experiment: the observation of any observable belonging to this subspace gives a zero-value in expectation, in the long-time asymptotic, whatever the state of the system is.
\end{itemize}
The second point will be primordial in our analysis, as it provides an alternative to the primitive case where any state convergence to the unique invariant state of the evolution. Thus, in the general case, the dissipative effects induced by the environment drive the observables of the system toward a closed subsystem modeled by the DF algebra. When this algebra is not trivial, there is necessarily an infinity of invariant states and in the long-time behavior, the state of the system can continue to evolve according to a unitary evolution. When the QMS is primitive, the DF algebra is trivial and reduced to the multiples of the identity operator. In this case, all states converge toward the unique invariant state. 

\paragraph{}Apart from providing an alternative to the primitive assumption, EID also plays a determinant role in quantum information theory and quantum control theory. The DF algebra is a particular instance of decoherence-free subsystems, one of the most promising and physically relevant mathematical tools proposed in order to overcome decoherence in quantum computing \cite{LCW98,KBLW01}. Decoherence-free subspaces and subsystems could play a decisive role in the achievement of quantum information processing \cite{AFR14,CT15,LW03}, as they identify regions of the system that are at the same time protected from decoherence and still big enough to allow universal quantum computations \cite{KBLW01}. Therefore, if rapid mixing is responsible for the stability of state preparation protocols based on dissipative engineering (as in \cite{CLMP-G13,LCMP-G15}), than one can hope that a generalization of this concept to non-primitive QMS would similarly permit to proved the stability of "decoherence-free" protocols for quantum computing (as in \cite{LCW98,KBLW01}). In this article, we initiate the investigation of new functional inequalities that imply "rapid decoherence", in the sense that decoherence occurs exponentially fast.

\paragraph{Motivation on one simple example:}
Following the lines of \cite{KT2013} and in order to motivate our work, let focus on one simple situation that perfectly captures the idea of decoherence. Consider the following evolution on the Hilbert space $\Hcal=\C^d$ in the Schrödinger picture:
\begin{equation}\label{eq_def_QMSdeco}
\rho_t=e^{-\gamma t}\,\rho+(1-e^{-\gamma t})\,E_{\Acal}(\rho)\,,
\end{equation}
where $\gamma$ is a positive constant (modeling the strength of the interaction with the environment or the measuring device) and $E_{\Acal}$ is the orthogonal projection on the commutative subalgebra $\Acal\simeq\C^d$ (the algebra of diagonal operators in a certain basis of $\C^d$), for the Hilbert-Schmidt scalar product. This defines a proper QMS $\Pcal^\deco_{*t}(\rho)=\rho_t$\footnote{Here the $*$ refers to the predual of the QMS $\Pcal=(\Pcal_t)_{t\geq0}$. The precise definition will be given in the next section.} that we call the \emph{Decoherence Quantum Markov Semigroup}, as it perfectly reflects the idea that the dissipation induced by the environment will cause the state of the system to collapse into a "classical" state: one has clearly:
\begin{equation}\label{eq_limitQMSdeco}
\underset{t\to+\infty}{\lim}\,\Pcal^\deco_{*t}\left(\rho-E_{\Acal}(\rho)\right)=0\,.
\end{equation}

Further remark that the maximally mixed density matrix $\frac{I_\Hcal}{d}$ is an invariant density matrix and that $\Pcal^\deco$ is a symmetric operator with respect to the Hilbert-Schmidt scalar product. One usually says that $\Pcal^\deco$ satisfies the \emph{Detailed Balance Condition} with respect to $\frac{I_\Hcal}{d}$.\\
Equation \eqref{eq_limitQMSdeco} states that in the long time behavior, quantum correlations disappear as the off-diagonal terms vanish. We can then define the \emph{decoherence time} as
\begin{equation}\label{eq_def_decotime}
\tau_\deco(\eps)=\min\,\left\{t\geq0\,;\,\norm{\Pcal^\deco_{*t}\left(\rho-E_{\Acal}(\rho)\right)}_{\Tr}\leq \eps\quad\forall \rho\right\}\,,
\end{equation}
where $\norm{\cdot}_{\tr}$ is the trace norm, $\norm{X}_{\tr}=\Tr\sqrt{X^*X}$, and where $\eps$ is a positive constant. The use of the trace norm is justified by its operational interpretation as a measure of distinguishability between two states. As proved for instance in \cite{CSU3}, the limit in Equation \eqref{eq_limitQMSdeco} is generic, in the sense that it holds for any QMS with an invariant density matrix with full support. In this case the subalgebra $\Acal_d$ has to be replaced by the decoherence-free algebra of the QMS, and the evolution on this algebra is a unitary evolution. Consequently, the definition of the decoherence time still makes sense.

\paragraph{}We shall now briefly explained how we can defined appropriate functional inequalities in order to estimate the decoherence time. Just as in the usual framework of functional inequalities, where the QMS is \emph{primitive} and converges toward its unique faithful invariant state, we can upper bound the trace norm distance, either in terms of the $\chi^2$-divergence, or in terms of the relative entropy between both states. Both define appropriate Lyapunov functionals of the initial state for the evolution, in the sense that they are non-negative and non-increasing along a trajectory, and that they vanish only on states that are equal to their conditional expectation on the decoherence-free algebra. As in the primitive case, the first tentative leads to an upper bound in terms of the spectral gap of a certain symmetrization of the QMS. This spectral gap is also equal to the optimal constant in a generalized form of the Poincaré Inequality (PI). In the case of $\Pcal^\deco$, it has a real spectrum and it is easy to see that the spectral gap is given by $\gamma$. The bound then reads:
\[\norm{\Pcal^\deco_{*t}\left(\rho-E_{\Acal}(\rho)\right)}_{\Tr}\leq\,\sqrt d\,e^{-\gamma t}\,,\]
so that we obtain the following scaling of the decoherence time:
\[\tau_{\chi^2}=\Omega(\log d)\,,\]
where the symbol $\Omega$ means that $\tau_{\chi^2}$ is at least as big as a constant times $\log d$ for large $d$. One can obtain an other estimate by using instead the Pinsker's Inequality to upper bound the trace norm in terms of the relative entropy. We will show how this bound is related to a certain new form of (modified) log-Sobolev Inequality (MLSI), which is equivalent to the exponential decay of the relative entropy. Denoting by $\alpha$ the constant appearing in this inequality, we obtain with this method
\[\norm{\Pcal^\deco_{*t}\left(\rho-E_{\Acal}(\rho)\right)}_{\Tr}\leq\,\sqrt{2\log d}\,e^{-\alpha t}\,.\]
In the case of $\Pcal^\deco$, we shall prove that $2\alpha\geq \gamma$, so that this time we obtain the following scaling of the decoherence time:
\[\tau_{LS}=\Omega(\log (\log d))\,.\]
We see that, as in the case of a primitive QMS and compared with a PI, a MLSI can drastically improve our scaling of the decoherence time with respect to the size of the system. This comes at the price that the log-Sobolev constant $\alpha$ has to be independent of the system size, which is not always the case.

\paragraph{}Therefore, the goal of this article is to introduce new functional inequalities that are relevant even for non-primitive QMS, based on the theory of environment-induced decoherence. We shall define a PI and a modified LSI and prove that they imply the exponential decay of respectively the appropriate notion of variance and relative entropy. As in the primitive case, the PI is equivalent to the spectral gap of a symmetrization of the QMS, which ensured its positivity when the system is finite dimensional. We also prove the positivity of the MLS constant under a natural condition called $\Lbb_1$-regularity of the Dirichlet form. Such condition was only prove for some particular classes of primitive QMS. One on the main contribution of this paper is a proof of the $\Lbb_p$-regularity of the Dirichlet form under a strong form of detailed balance condition \cite{KFGV77,FV07,CM16}. This condition is satisfied by a large class of QMS, including all ones for which strong $\Lbb_p$-regularity was already proved. The proof relies on a new chain rule formula for the Dirichlet form, that generalized the one studied by Carlen and Maas in \cite{CM16}.

\paragraph{}So far, in the literature, the theory of quantum functional inequalities has mainly followed the classical theory and, apart from some typical technical tools required in the non-commutative framework, the main results were obtained in a similar way as in the classical case. However, the generalization to non-primitive QMS displays strong non-classical features such as the presence of quantum correlations (between spatially separated systems) or the existence of quantum coherence ( or Schrödinger-cat's like states), as in the example of the decoherence QMS above. In this article, we present a general framework in which these special features are exposed. Our main contribution and the originality of this work lie in defining the appropriate framework in which these issues can be treated. We formulate the bases of the theory and hope that it will stimulate further work in this direction.

\paragraph{}This article is structured as follows. In Section \ref{sect1}, we recall some results on non-commutative functional inequalities in the primitive case. We also expose the concept of environment-induced decoherence and the corresponding structure induced on the QMS, on which is built the rest of the article. In Section \ref{sect2}, we introduce our functional inequalities in the non-primitive case and prove that they imply the exponential decay of their corresponding Lyapunov functionals. We also prove that the Poincar\'e constant is an upper bound for the modified log-Sobolev constant, similarly to the primitive case. In Section \ref{sect3}, we prove the $\Lbb_p$-regularity of the Dirichlet form under the strong detailed balance condition. We subsequently prove the positivity of our modified log-Sobolev constant under the weak $\Lbb_1$-regularity of the Dirichlet form. We study two particular situations in Section \ref{sect4}. Section \ref{sect5} is dedicated to the study of the decoherence time. We conclude with some remarks in Section \ref{sect6}.

\section{Notations and some preliminaries}\label{sect1}
In this section we introduce our framework and notations and recall the relevant results relative to the primitive case. In Subsection \ref{sect11} we present the formalism of Environment Induced Decoherence (EID) for quantum Markov semigroups (QMS) on finite dimensional Hilbert spaces. The Poincaré Inequality and the modified log-Sobolev Inequality are introduced in the primitive case in Subsection \ref{sect12}. In Subsection \ref{sect13} we recall the structure of the Lindbladian studied in \cite{DFSU1} that we shall use throughout this article.

\subsection{Environment Induced Decoherence for quantum Markov semigroups}\label{sect11}

Let $\Hcal$ be a finite dimensional Hilbert space of dimension $d$. We denote by $\Bcal(\Hcal)$ the Banach space of bounded operators on $\Hcal$ and by $\Bcal_{\text{sa}}(\Hcal)$ the subspace of selfadjoint operators on $\Hcal$, i.e. $\Bcal_{\text{sa}}(\Hcal)=\left\{X=\Bcal(\Hcal);\ X=X^*\right\}$. We write $\Scal(\Hcal)$ the set of positive and trace one operators on $\Hcal$, also called \emph{density matrices}. In the following, we will often identify a density matrix $\rho\in\Scal(\Hcal)$ and the \emph{state} it defines, that is, the positive linear functional $X\in\Bcal(\Hcal)\mapsto\Tr[\rho\, X]$. In particular, we will write $\rho(X)$ for the expected value of $X$ in the state $\rho$. We recall that if $\Hcal=\Hcal_A\otimes\Hcal_B$, where $\Hcal_A$ and $\Hcal_B$ are finite dimensional Hilbert spaces, then the partial trace of a state $\rho_{AB}\in\Scal(\Hcal)$ with respect to $\Hcal_B$ is the unique state $\Tr_{\Hcal_B}[\rho_{AB}]\in\Scal(\Hcal_A)$ such that:
\begin{equation}\label{eq_partialtrace_state}
\Tr\big[\rho_{AB}\,(X\otimes I_{\Hcal_B})\big]=\Tr\big[\Tr_{\Hcal_B}(\rho_{AB})\,X\big],\qquad \forall X\in\Bcal(\Hcal_A)\,.
\end{equation}
Given a state $\rho_B\in\Scal(\Hcal_B)$, the partial trace with respect to $\rho_B$ is the unique linear map $\Tr_{\rho_B}$ from $\Bcal(\Hcal)$ to $\Bcal(\Hcal_A)$ such that for all $X\in\Bcal(\Hcal)$:
\begin{equation}\label{eq_partialtrace_observables}
\Tr\big[X\,(\rho_A\otimes\rho_B)\big]=\Tr\big[\Tr_{\rho_B}(X)\,\rho_A\big],\qquad \forall \rho_A\in\Scal(\Hcal_A)\,.
\end{equation}

Throughout this article we work with a continuous QMS $(\Pcal_t)_{t\geq0}$ acting on $\Bcal(\Hcal)$. Recall that by Lindblad Theorem \cite{Lind}, its generator $\Lcal$ defined by $\Lcal=\underset{t\to0}{\lim}\,\frac1t(\Pcal_t-I)$ and called the \emph{Lindbladian}, takes the form:
\begin{equation}
\Lcal(X)=i[H,X]+\frac{1}{2}\sum_{k\geq1}{\left[L_k^*XL_k-2\left(L_k^*L_kX+XL_k^*L_k\right)\right]}\quad \text{for all }X\in\Bcal(\Hcal)\,,
\label{eqlindblad}
\end{equation}
where $H\in\Bcal_{\text{sa}}(\Hcal)$, where $\{L_k\}_{k\geq1}$ is a finite subset of $\Bcal(\Hcal)$ and where $[\cdot,\cdot]$ denotes the commutator defined as $[X,Y]=XY-XY$ for all $X,Y\in\Bcal(\Hcal)$. $\Pcal$ models the evolution of the observables of the system in the Heisenberg picture of time evolution. In the Schrödinger picture, the evolution of the states is given by the predual $(\Pcal_{*t})_{t\geq0}$ of the QMS for the Hilbert-Schmidt scalar product, that is, the unique trace-preserving QMS such that for all $X,Y\in\Bcal(\Hcal)$,
\[\Tr[\Pcal_t(X)\,Y]=\Tr[X\,\Pcal_{*t}(Y)],\qquad\forall t\geq0\,.\]
Its generator, the predual of $\Lcal$, is the map defined on $\Bcal(\Hcal)$ as:
\begin{equation*}
\Lcal_*(\rho)=-i[H,\rho]+\frac{1}{2}\sum_{k\geq1}{\left[L_k\rho L_k^*-2\left(L_kL_k^*\rho +\rho L_kL_k^*\right)\right]}\quad \text{for all }\rho\in\Scal(\Hcal)\,.
\end{equation*}

We shall always assume that $\Pcal$ admits an invariant state, that is, a density matrix $\sigma\in\Scal(\Hcal)$ such that for all time $t\geq0$, $\sigma(\Pcal_t(X))=\sigma(X)$, for all $X\in\Bcal(\Hcal)$. Equivalently, one has $\Pcal_{*t}(\sigma)=\sigma$ for all $t\geq0$. We shall furthermore assume that $\sigma$ is \emph{faithful}, that is, $\sigma>0$. Under this condition, it was proved for instance in \cite{CSU4} that the maximal algebra on which $\Pcal$ acts as a $*$-automorphism is the \emph{Decoherence-Free Algebra} of $\Pcal$, defined by
\begin{equation}\label{eq_def_DFalgebra}
\Ncal(\Pcal)=\left\{X\in\Bcal(\Hcal),\ \Pcal_t(X^*Y)=\Pcal_t(X)^*\Pcal_t(Y)\text{ and }\Pcal_t(YX^*)=\Pcal_t(Y)\Pcal_t(X)^*\ \forall t\geq0,\,\forall Y\in\Bcal(\Hcal)\right\}\,.
\end{equation}
It means that there exists a continuous one-parameter group of unitary operators $(U_t)_{t\in\R}$ on $\Hcal$ such that for any $X\in\DF$:
\begin{equation}\label{eq_unitary_evolution}
\Pcal_t(X)=U^*_t\,X\,U_t,\qquad\forall t\geq0\,,
\end{equation}
and $\DF$ is the largest subalgebra of $\Bcal(\Hcal)$ such that this holds. We are now ready to state the main result concerning environment-induced decoherence on finite dimensional Hilbert space.

\begin{theo}[Proposition 8 in \cite{CSU3} and Theorem 19 in \cite{CSU4}]\label{theo_deco}
Assume that $\Pcal$ has a faithful invariant state $\sigma$. Then there exists a unique conditional expectation $E_\Ncal$\footnote{In the sense of operator algebra theory, that is, $E_\Ncal$ is a completely positive projection.} from $\Bcal(\Hcal)$ to $\Ncal(\Pcal)$ compatible with $\sigma$ (i.e. $\sigma=\sigma\circ E_{\Ncal}$) and such that
\begin{enumerate}
\item The image of $E_\Ncal$ is the decoherence-free algebra: $\text{Im}\,E_\Ncal=\Ncal(\Pcal)$ and, consequently,
\begin{equation}\label{eq_theo_deco1}
\Bcal(\Hcal)=\DF\oplus\,\text{Ker}\,E_\Ncal\,.
\end{equation}
\item The kernel of $E_\Ncal$ is the subset of observables that vanishes in the long time:
\begin{equation}\label{eq_theo_deco2}
X\in\text{Ker}\,E_\Ncal\qquad\text{iff}\qquad\underset{t\to+\infty}{\lim}\,\Pcal_t(X)=0\,,
\end{equation}
where the convergence is in the uniform topology.
\end{enumerate}
\end{theo}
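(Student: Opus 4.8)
The plan is to build the conditional expectation $E_\Ncal$ directly from the asymptotic behaviour of the semigroup, and then verify its properties one by one. The natural candidate is a Cesàro/ergodic limit: since $\Hcal$ is finite dimensional and $\Pcal$ is a contraction semigroup (as a unital completely positive map, each $\Pcal_t$ is a contraction for the operator norm), its generator $\Lcal$ has spectrum in the closed left half-plane, and the eigenvalues on the imaginary axis are semisimple. First I would decompose $\Bcal(\Hcal)=\Fcal\oplus\Ncal_0$, where $\Fcal$ is the span of the eigenspaces of $\Lcal$ associated to purely imaginary eigenvalues (the ``peripheral'' part) and $\Ncal_0$ is the span of the rest of the spectral subspaces. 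On $\Ncal_0$ one has $\Pcal_t\to 0$ exponentially fast in the uniform topology, which will eventually give \eqref{eq_theo_deco2}. One then defines $E_\Ncal$ as the spectral projection onto $\Fcal$ along $\Ncal_0$, equivalently as a limit of $\frac1T\int_0^T U_{-t}\Pcal_t(\cdot)\,dt$ where $U_t$ is the unitary group implementing the peripheral dynamics; concretely, because the peripheral dynamics is the unitary conjugation \eqref{eq_unitary_evolution}, $E_\Ncal(X)=\lim_{T\to\infty}\frac1T\int_0^T U_t\Pcal_t(X)U_t^*\,dt$ is the cleanest formula, and this limit exists since the integrand is almost periodic.

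Next I would identify $\Fcal$ with $\DF$. The inclusion $\DF\subseteq\Fcal$ is immediate from \eqref{eq_unitary_evolution}. For the reverse inclusion one uses the faithfulness of $\sigma$: a standard argument (Frigerio--Verri, or the references \cite{CSU3,CSU4}) shows that when the invariant state is faithful, the peripheral eigenvectors of $\Pcal_t$ form a $*$-subalgebra on which $\Pcal_t$ acts as a $*$-automorphism, hence $\Fcal\subseteq\DF$ by maximality of $\DF$ in \eqref{eq_def_DFalgebra}. This gives $\Ima E_\Ncal=\DF$ and the direct-sum decomposition \eqref{eq_theo_deco1}. Then I would check that $E_\Ncal$ is a conditional expectation in the operator-algebraic sense: it is unital and completely positive as a limit of unital completely positive maps (the $\Pcal_t$ conjugated by unitaries), it is idempotent by construction (it is a spectral projection and $\Fcal$ is $\Pcal_t$-invariant), and the module property $E_\Ncal(aXb)=aE_\Ncal(X)b$ for $a,b\in\DF$ follows from Tomiyama's theorem once we know $E_\Ncal$ is a norm-one projection onto a subalgebra. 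Compatibility with $\sigma$, namely $\sigma\circ E_\Ncal=\sigma$, follows because $\sigma$ is invariant under each $\Pcal_t$ and under the peripheral unitaries (the unitary group $U_t$ commutes with $\sigma$ since $\sigma$ is the unique faithful invariant state restricted appropriately), so $\sigma$ is constant along the averaging.

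For the second assertion, one direction is built into the construction: if $X\in\Ker E_\Ncal=\Ncal_0$, then $\Pcal_t(X)\to0$ uniformly. For the converse, if $\Pcal_t(X)\to0$ then writing $X=E_\Ncal(X)+(X-E_\Ncal(X))$ and applying $\Pcal_t$, the second summand vanishes in the limit while $\Pcal_t(E_\Ncal(X))=U_t^*E_\Ncal(X)U_t$ has constant norm equal to $\|E_\Ncal(X)\|$; hence $\|E_\Ncal(X)\|=0$, i.e. $X\in\Ker E_\Ncal$. Uniqueness of $E_\Ncal$ among $\sigma$-compatible conditional expectations onto $\DF$ follows from a standard argument: any such $F$ must commute with $\Pcal_t$ on $\DF$ and annihilate $\Ker E_\Ncal$ (because $F$ is $\sigma$-preserving and positive, and the elements of $\Ker E_\Ncal$ have the form $Y-E_\Ncal(Y)$ which a $\sigma$-compatible conditional expectation onto $\DF$ must kill), so $F=E_\Ncal$ on both summands of \eqref{eq_theo_deco1}.

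The main obstacle is the step $\Fcal\subseteq\DF$, i.e. showing the peripheral spectral subspace is actually a $*$-algebra on which the dynamics is $*$-automorphic; this is where faithfulness of $\sigma$ is essential and where one must invoke the structural results on QMS (the cited Propositions of \cite{CSU3,CSU4}). Everything else is either spectral theory in finite dimension, or general operator-algebra facts (Tomiyama, Kadison--Schwarz) applied to limits of unital CP maps. I would present the peripheral-decomposition and the $*$-algebra identification as the technical heart, and treat the conditional-expectation axioms and the kernel characterization as essentially bookkeeping once that is in place.
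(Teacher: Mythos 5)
The paper does not prove this theorem: it is imported verbatim from the cited references (Proposition 8 of \cite{CSU3} and Theorem 19 of \cite{CSU4}), so there is no in-paper proof to compare against. Judged on its own terms, your outline follows the standard route (peripheral spectral decomposition of a uniformly bounded semigroup, Ces\`aro/ergodic projection, identification of the peripheral subspace with $\Ncal(\Pcal)$), and the architecture is sound: the semisimplicity of the imaginary-axis eigenvalues, the exponential decay on the complementary spectral subspace, the complete positivity of $E_\Ncal$ as a limit of unital CP maps, Tomiyama for the module property, and the norm-preservation argument for the kernel characterization are all correct. However, the decisive step --- that faithfulness of $\sigma$ forces the peripheral subspace $\Fcal$ to be a $*$-subalgebra on which $\Pcal_t$ acts as a $*$-automorphism group, hence $\Fcal\subseteq\DF$ --- is exactly the content of the Frigerio--Verri/Carbone--Sasso--Umanit\`a results being cited, and you invoke it rather than prove it. Since everything else in your argument (including the very definition of $E_\Ncal$ via the unitaries $U_t$, whose existence presupposes that the peripheral dynamics is unitarily implemented) hinges on that step, the proposal is best described as a correct reduction of the theorem to its hardest ingredient, not an independent proof.

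Two smaller points deserve repair. First, your justification of $\sigma\circ E_\Ncal=\sigma$ via ``$U_t$ commutes with $\sigma$ since $\sigma$ is the unique faithful invariant state'' is shaky: in the non-primitive setting $\sigma$ is far from unique, and $U_t$ is only determined up to a unitary in the commutant of $\DF$. The cleaner argument bypasses $U_t$ entirely: for $Z=X-E_\Ncal(X)\in\Ker E_\Ncal$ one has $\sigma(Z)=\lim_{t\to\infty}\sigma(\Pcal_t(Z))=0$ by invariance of $\sigma$ and the decay on the kernel. Second, the uniqueness argument needs faithfulness made explicit: for another $\sigma$-compatible conditional expectation $F$ and $Z\in\Ker E_\Ncal$, note $aZb\in\Ker E_\Ncal$ for $a,b\in\DF$, hence $\sigma(aF(Z)b)=\sigma(F(aZb))=\sigma(aZb)=0$; taking $a=F(Z)^*$, $b=I$ and using $\sigma>0$ gives $F(Z)=0$, so $F=E_\Ncal$.
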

The decomposition \eqref{eq_theo_deco1} has the following interpretation. The space $\text{Ker}\,E_\Ncal$ is thought as the part of the system which is beyond experimental resolution. Indeed, if the decoherence is fast enough, any measurement of an observable $X\in\text{Ker}\,E_\Ncal$ will give the value $0$. Thus, in the long-time asymptotic, the system behaves effectively like a closed system described by the $*$-algebra $\DF$, on which the evolution is non-dissipative.\\

The conditional expectation $E_\Ncal$ plays a central role in the following. Theorem \ref{theo_deco} can actually be rephrased in the following way: for all $X\in\Bcal(\Hcal)$,
\begin{equation}\label{eq_limit_deco_Hei}
\underset{t\to+\infty}{\lim}\,\Pcal_t\left(X-E_\Ncal(X)\right)=0\,.
\end{equation}
There is an equivalent formulation in the Schrödinger picture similar to Equation \eqref{eq_limitQMSdeco}. First introduce the predual $E_{\Ncal*}$ of the conditional expectation $E_\Ncal$ defined by the relation
\[\Tr[E_\Ncal(X)\,Y]=\Tr[X\,E_{\Ncal*}(Y)],\qquad\forall X,Y\in\Bcal(\Hcal)\,.\]
Then for all state $\rho\in\Scal(\Hcal)$:
\begin{equation}\label{eq_limit_Schro}
\underset{t\to+\infty}{\lim}\,\Pcal_{*t}\left(\rho-E_{\Ncal*}(\rho)\right)=0\,.
\end{equation}
This is the same limit as in Equation \eqref{eq_limitQMSdeco}. Our goal in the following is to adapt the definition of the Poincar\'e Inequality and the modified log-Sobolev Inequality to estimated the speed of this limit through the \emph{decoherence time}
\begin{equation}\label{eq_def_decotime2}
\tau_\deco(\eps)=\min\,\left\{t\geq0\,;\,\norm{\Pcal_{*t}\left(\rho-E_{\Ncal*}(\rho)\right)}_{\Tr}\leq \eps\,;\quad\forall\rho\in\Scal(\Hcal)\right\}\,.
\end{equation}

\begin{ex}\label{ex_deco_decoQMS}
In the example of the \emph{decoherence quantum Markov semigroup}, the maximally mixed state plays the role of a faithful invariant state. The decoherence-free algebra is the algebra of diagonal operators $\Acal_d$ and the conditional expectation $E_\Ncal$ is the projection on this algebra for the Hilbert-Schmidt scalar product.
\end{ex}

We shall now introduce those two functional inequalities in the usual setting.

\subsection{Known result on functional inequalities for primitive quantum Markov semigroups}\label{sect12}

One particular situation is the case where the decoherence-free algebra is trivial, that is, $\DF=\C I_\Hcal$. In this case, the limit in \eqref{eq_theo_deco2} implies that $\sigma$ is the unique invariant state and one has:
\[\underset{t\to+\infty}{\lim}\,\Pcal_t(X)=\Tr[\sigma\,X]I_\Hcal,\qquad\forall X\in\Bcal(\Hcal)\,,\]
which is the usual definition of a \emph{primitive} quantum Markov semigroup \cite{wolftour}. In this case the decoherence time reduced to the well-known \emph{mixing time}:
\[\tau_\text{mixing}(\eps)=\min\,\left\{t\geq0\,;\,\norm{\Pcal_{*t}\left(\rho)-\sigma\right)}_{\Tr}\leq \eps\,;\quad\forall\rho\in\Scal(\Hcal)\right\}\,.
\]
The space $\Bcal(\Hcal)$ can be naturally endowed with an Hilbert space structure with respect to $\sigma$, with scalar product defined for all $X,Y\in\Bcal(\Hcal)$ by:
\begin{equation*}
\sca{X}{Y}_\sigma=\Tr\left[\sigma^{\frac{1}{2}}X^*\sigma^{\frac{1}{2}}Y\right]\,.
\end{equation*}
It defines a norm $\norm{\cdot}_{2,\sigma}$ on $\Bcal(\Hcal)$:
\begin{equation*}
\norm{X}_{2,\sigma}=\Tr\left[\left|\sigma^{\frac{1}{4}}X\sigma^{\frac{1}{4}}\right|^2\right]^{\frac{1}{2}}\,.
\end{equation*}
We denote by $\hat \Pcal$ the adjoint of $\Pcal$ for the scalar product $\sca{\cdot}{\cdot}_\sigma$. It can be checked that $\hat \Pcal$ gives the evolution of the relative density $\sigma^{-\frac{1}{2}}\,\rho\,\sigma^{-\frac{1}{2}}$ of any state $\rho\in\Scal(\Hcal)$, that is,
\[\hat\Pcal_t(\,\sigma^{-\frac{1}{2}}\,\rho\,\sigma^{-\frac{1}{2}}\,)=\sigma^{-\frac{1}{2}}\,\Pcal_t(\rho)\,\sigma^{-\frac{1}{2}}\qquad\forall t\geq0\,.\]
Moreover, the Lindbladian of $\hat \Pcal$ is given by $\hat\Lcal=\sigma^{-\frac{1}{2}}\,\Lcal_*\left(\sigma^{\frac{1}{2}}\,\cdot\,\sigma^{\frac{1}{2}}\right)\,\sigma^{-\frac{1}{2}}$. We say that $\Pcal$ is \emph{reversible} with respect to $\sigma$ if $\Pcal=\hat \Pcal$, or in other words if $\Pcal$ is selfadjoint for $\sca{\cdot}{\cdot}_\sigma$. When studying the $\Lbb_p$-regularity of the QMS, we will use a stronger form of reversibility and we reserve the term \emph{Detailed Balance Condition} with respect to $\sigma$ for this notion.\\ 
When $\Pcal$ is not primitive, there is necessarily not a unique invariant state and it becomes unclear what would be the good notion of reversibility. In the next section, we shall highlight a particular invariant state that we shall choose as a reference state. For short, when the QMS admits a faithful invariant state, we say that $\Pcal$ (or $\Lcal$) is \emph{reversible}, implicitly assuming that it is with respect to this reference state. 

\paragraph{}The main idea behind the use of quantum functional inequalities in the study of the mixing time is the use of appropriate Lyapunov functionals. The idea originated in \cite{DSC96,BT06} in the classical case and was generalized to the quantum setting in \cite{C-M,O-Z,KT2013}. The two Lyapunov functionals we shall be concerned with are the variance and the relative entropy. The \emph{Variance} with respect to $\sigma$ is defined for all $X\in\Bcal(\Hcal)$ as
\begin{equation*}
\Var_\sigma(X)=\norm{X-\sigma(X)I_\Hcal}_{2,\sigma}^2\,.
\end{equation*}
It was proved in \cite{KT2013} that the derivative of the variance along the QMS gives twice the opposite of the \emph{Dirichlet form} of $\Lcal$, defined for all $X\in\Bcal(\Hcal)$ as:
\begin{equation}\label{eq_def_Dirichlet}
\Ecal_\Lcal(X)=-\sca{X}{\Lcal(X)}_\sigma\,.
\end{equation}
We thus have for all time $t\geq0$:
\begin{equation}\label{eq_comp_Dirichlet}
\frac{\partial}{\partial t}\,\Var_\sigma(\Pcal_t(X))=-2\,\Ecal_\Lcal(\Pcal_t(X))\,.
\end{equation}
We emphasize that this differentiation yields the same result even if $\Pcal$ is not primitive and $\sigma$ is any of its faithful invariant states. Consequently, the exponential decay of the variance with decay rate $\lambda>0$ is implied by the following \emph{Poincar\'e Inequality} defined as:
\begin{equation}
\lambda\, \Var_\sigma(X)\leq \Ecal_\Lcal(X)\qquad\text{for all }X\in\Bcal_{\text{sa}}(\Hcal)\,.
\label{eqdefpoinin}
\end{equation}
We denote by $\lambda(\Lcal)$ the optimal constant in this inequality. It can be shown that $\lambda(\Lcal)$ corresponds to the spectral gap of the symmetrized Lindbladian $\frac{\Lcal+\hat\Lcal}{2}$, that is, the absolute value of its greatest non-zero eigenvalue. 

\paragraph{}The \emph{Relative Entropy} with respect to $\sigma$ is defined for all $\rho\in\Scal(\Hcal)$ as
\begin{equation*}
\Dent{\rho}{\sigma}=
\left\{\begin{array}{ll}
& -\Tr\left[\rho\left(\log\rho-\log\sigma\right)\right]\text{ when }\text{supp}\, \rho\subset \text{supp}\,\sigma\,,    \\
&  +\infty\qquad \text{otherwise}\,,
\end{array}\right.
\end{equation*}
where $\operatorname{supp}$ denotes the support of the density matrix. Remark that in our case $\sigma$ is faithful and therefore the relative entropy can not take an infinite value. Spohn defined in \cite{Spo1} the \emph{Entropy Production} as the opposite of its derivative along the evolution:
\begin{equation}\label{eq_def_EP}
\EP_\Lcal(\rho):=-\frac{\partial}{\partial t}\,\Dent{\Pcal_{*t}(\rho)}{\sigma}\geq0\,.
\end{equation}
The entropy production was subsequently computed by Spohn in the same article and is equal to:
\begin{equation}\label{eq_comp_EP}
\EP_\Lcal(\rho)=-\Tr\left[\Lcal_*(\rho)\left(\log\rho-\log\sigma\right)\right]\,.
\end{equation}
This quantity plays a central role in statistical mechanics (see \cite{Spo1,SL78,JPW14} and references therein). The exponential decay of the relative entropy with decay rate $\alpha$ is equivalent to the so-called \emph{modified log-Sobolev Inequality} (MLSI):
\begin{equation}
2\,\alpha\, D\left(\rho||\sigma\right)\leq \EP_\Lcal(\rho)\qquad\text{for all }\rho\in\Scal(\Hcal)\,.
\label{eqdeflogsob1}
\end{equation}
We denote by $\alpha_1(\Lcal)$ the optimal constant in this inequality. Kastoryano and Temme proved in \cite{KT2013} that for a reversible primitive QMS, the MLSI implies the PI as
\begin{equation}\label{eq_comp_MLSI_PI}
\alpha_1(\Lcal)\leq\lambda(\Lcal)\,.
\end{equation}
We shall see that this inequality remains true for non-primitive QMS with the appropriate definitions for each constant.

\paragraph{}In the finite-dimensional case, $\lambda(\Lcal)$ is non-zero if and only if the QMS is primitive, as it is given by the spectral gap of a selfadjoint operator. In the case of the modified log-Sobolev constant the question is still open in the general case. Indeed, it is not even known if the entropy production of a primitive QMS vanishes only for the invariant state, which in view of the MLSI is a necessary condition for $\alpha_1(\Lcal)$ to be positive. Under some regularity assumptions that we shall discuss in Subsection \ref{sect31}, combined results from \cite{TPK} and \cite{KT2013} show that for a reversible and primitive QMS,
\begin{equation}\label{eq_comp_primitive}
\frac{\lambda(\Lcal)}{\log(1/\sigma_{\min})+2}\leq\alpha_2(\Lcal)\leq\alpha_1(\Lcal)\,,
\end{equation}
where $\sigma_{\min}$ is the smallest eigenvalue of $\sigma$ and where $\alpha_2(\Lcal)$ is the usual log-Sobolev constant as defined by Gross in the classical setting. This Inequality asserts that $\alpha_1(\Lcal)>0$, however it lies heavily on Gross' equivalence between the log-Sobolev constant and hypercontractivity. As in the non-primitive case those notions are yet not available we need to relies on different arguments. We shall prove the positivity of $\alpha_1(\Lcal)$ under the regularity conditions mentioned above and without invoking the $\alpha_2$ constant.

\subsection{Structure of the Lindbladian and the invariant states}\label{sect13}

Throughout this article we shall intensively rely on the following structure of the Lindbladian $\Lcal$ and of its invariant states, that was highlighted by the authors in \cite{DFSU1}. Being a $*$-algebra on a finite-dimensional Hilbert space, the DF algebra admits the following structure. Up to a unitary transformation, the Hilbert space $\Hcal$ admits a decomposition as
\begin{equation}
\Hcal=\bigoplus_{i\in I}{\Hcal_i\otimes\Kcal_i}\,,
\label{eqtheostructlind1}
\end{equation}
where $I$ is a finite set, such that $\Ncal(\Pcal)$ is (unitarily isomorphic to) the algebra
\begin{equation}
\DF=\bigoplus_{i\in I}{\Bcal(\Hcal_i)\otimes I_{\Kcal_i}}\,.
\label{eqtheostructlind2}
\end{equation}
The authors in \cite{DFSU1} deduced from this decomposition the following structure of Lindbladian given by Equation \eqref{eqlindblad}. For all $k\geq0$ and all $i\in I$, there exist bounded operators $M_k^{(i)}\in\Bcal(\Kcal_i)$ such that
\begin{equation}
L_k=\bigoplus_{i\in I}{\left(I_{\Hcal_i}\otimes M_k^{(i)}\right)}\,.
\label{eqtheostructlind3}
\end{equation}
For all $i\in I$, there exist bounded operators $M_0^{(i)}\in\Bcal_{\text{sa}}(\Kcal_i)$ and $K_i\in\Bcal_{\text{sa}}(\Hcal_i)$ such that:
\begin{equation}
H=\bigoplus_{i\in I}{\left(K_i\otimes I_{\Kcal_i}+I_{\Hcal_i}\otimes M_0^{(i)}\right)}\,.
\label{eqtheostructlind4}
\end{equation}
Denote by $P_i$ the orthogonal projection on $\Hcal_i\otimes\Kcal_i$. There exist density matrices $\tau_i$ on $\Kcal_i$ such that any invariant state $\sigma$ of $\Pcal$ can be written:
\begin{equation}
\sigma=\bigoplus_{i\in I}\,p_i\,\sigma_i\otimes\tau_i\,,\qquad \Tr[P_i\,\sigma\, P_i]\,,
\label{eqtheoinvstates}
\end{equation}
where $\sigma_i$ is a density matrix in $\Hcal_i$ commuting with $K_i$. Finally we can give explicit formulae for both conditional expectations $E_\Ncal$ and $E_{\Ncal*}$:
\begin{equation}\label{eq_cond_expect}
\begin{aligned}
E_\Ncal(X) &=\bigoplus_{i\in I}\,{\Tr_{\tau_i}[P_i\,X\,P_i]\otimes I_{\Kcal_i}},\qquad \forall X\in\Bcal(\Hcal)\,,\\
E_{\Ncal*}(\rho)& =\bigoplus_{i\in I}\,{\Tr_{\Kcal_i}[P_i\,\rho\, P_i]\otimes\tau_i}\qquad\quad \forall \rho\in\Scal(\Hcal)\,.
\end{aligned}
\end{equation}

From those expressions, we see that the conditional expectation carries more information on the QMS than the decoherence-free algebra. In particular, it does not depend on the choice of the faithful invariant state. Throughout this article, we shall use the following notation: we write for a state $\rho\in\Scal(\Hcal)$:
\begin{equation}\label{eq_notation_stateDF}
\rho_\Ncal=E_{\Ncal*}(\rho)\quad\text{ or equivalently }\quad \rho_\Ncal=\rho\circ E_\Ncal\,.
\end{equation}

\section{Quantum functional inequalities for non-primitive QMS}\label{sect2}

In this section we give the definitions of the Poincar\'e Inequality \eqref{eqdefpoinin} and of the modified log-Sobolev Inequality \eqref{eqdeflogsob1}, which are consistent with the study of environment induced decoherence. The main idea that we wish to transmit is the importance of the reference state in our framework. The choice of the reference state and its properties are presented in Subsection \ref{sect21} below. The PI and the MLSI for not necessarily primitive QMS are introduced in Subsection \ref{sect22} and \ref{sect23} respectively, where we subsequently proved that they induced exponential decay of their respective Lyapunov functionals. We conclude in Subsection \ref{sect24} with the comparison of the two constants we have defined, as in Inequality \eqref{eq_comp_MLSI_PI}.

\subsection{The choice of the reference state}\label{sect21}
In the primitive case, there is a unique invariant state and so it is natural to choose it as the reference state, in particular in the definition of the Dirichlet form and the entropy production. In our case we have to be careful about which invariant state we choose. It appears that an appropriate choice is given by the following density matrix:
\begin{equation}\label{eq_def_sigmatr}
\sigma_{\Tr}=E_{\Ncal*}(\frac{I_\Hcal}{d})\,.
\end{equation}
Equivalently, it is the unique state such that
\[\sigma_{\tr}(E_\Ncal(X))=\frac{1}{d}\Tr[E_\Ncal(X)],\qquad\forall X\in\Bcal(\Hcal)\]
(recall that if $\rho\in\Scal(\Hcal)$, then $\rho(X)=\Tr[\rho\,X]$). This defines a proper faithful state that furthermore has the following central property.

\begin{lem}
With respect to the decomposition of $\Hcal$ given by equations \eqref{eqtheostructlind1} and \eqref{eqtheostructlind2}, $\sigma_{\tr}$ can be written (recall that $P_i$ is the orthogonal projection on $\Hcal_i\otimes\Kcal_i$)
\begin{equation}\label{eq_lem_sigmatr}
\sigma_{\tr}=\sum_{i\in I}{\frac{\Tr[P_i]}{d}I_{\Hcal_i}\otimes\tau_i}\,.
\end{equation}
Consequently, $\DF$ is in the centralizer of $\sigma_{\tr}$. More particularly, for all $X\in\DF$ and all $Y\in\Bcal(\Hcal)$,
\begin{equation}
\Tr[\sigma_{\tr}\,X\,Y]=\Tr[\sigma_{\tr}\, Y\,X]
\label{eqlemdeco}
\end{equation}
( or equivalently $\sigma_{\tr}(XY)=\sigma_{\tr}(YX)$).
\label{lemdeco}
\end{lem}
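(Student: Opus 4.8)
The plan is to first make $\sigma_{\tr}$ explicit and then read the commutation property directly off the block decomposition. For the first step I would substitute $\sigma_{\tr}=E_{\Ncal*}(I_\Hcal/d)$ into the explicit formula \eqref{eq_cond_expect} for the predual conditional expectation. Since $P_i$ is the orthogonal projection onto the summand $\Hcal_i\otimes\Kcal_i$, one has $P_i\,I_\Hcal\,P_i=I_{\Hcal_i}\otimes I_{\Kcal_i}$, whence
\[
\sigma_{\tr}=\bigoplus_{i\in I}\Tr_{\Kcal_i}\big[\tfrac{1}{d}\,I_{\Hcal_i}\otimes I_{\Kcal_i}\big]\otimes\tau_i\,;
\]
computing the partial trace over $\Kcal_i$ --- the only spot where one has to keep track of dimensional factors --- turns each summand into a scalar multiple of $I_{\Hcal_i}\otimes\tau_i$, which is \eqref{eq_lem_sigmatr}. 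In the notation of \eqref{eqtheoinvstates}, this exhibits $\sigma_{\tr}$ as the invariant state whose component on each $\Hcal_i$ is the maximally mixed state.

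For the second step, recall from \eqref{eqtheostructlind2} that in the decomposition \eqref{eqtheostructlind1} every $X\in\DF$ is of the form $X=\bigoplus_{i\in I}A_i\otimes I_{\Kcal_i}$ with $A_i\in\Bcal(\Hcal_i)$, while by \eqref{eq_lem_sigmatr} we have $\sigma_{\tr}=\bigoplus_{i\in I}c_i\,I_{\Hcal_i}\otimes\tau_i$ with $c_i>0$. Multiplying block by block,
\[
\big(c_i\,I_{\Hcal_i}\otimes\tau_i\big)\big(A_i\otimes I_{\Kcal_i}\big)=c_i\,A_i\otimes\tau_i=\big(A_i\otimes I_{\Kcal_i}\big)\big(c_i\,I_{\Hcal_i}\otimes\tau_i\big)\,,
\]
because the two tensor legs involve the commuting algebras $\Bcal(\Hcal_i)\otimes I_{\Kcal_i}$ and $I_{\Hcal_i}\otimes\Bcal(\Kcal_i)$. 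Hence $\sigma_{\tr}X=X\sigma_{\tr}$ for every $X\in\DF$, i.e. $\DF$ is contained in the centralizer of $\sigma_{\tr}$.

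Identity \eqref{eqlemdeco} then follows at once: for $X\in\DF$ and arbitrary $Y\in\Bcal(\Hcal)$, using $\sigma_{\tr}X=X\sigma_{\tr}$ and then cyclicity of the trace,
\[
\Tr[\sigma_{\tr}\,X\,Y]=\Tr[X\,\sigma_{\tr}\,Y]=\Tr[\sigma_{\tr}\,Y\,X]\,,
\]
which is exactly $\sigma_{\tr}(XY)=\sigma_{\tr}(YX)$. There is no real difficulty in this argument --- it is a direct consequence of the structural description \eqref{eqtheostructlind1}--\eqref{eq_cond_expect} recalled above. The only things to handle with a bit of care are the normalization of the partial trace in the first step and the simple but conceptually central point that the state $\tau_i$, which carries all the decoherence on block $i$, sits on the multiplicity factor $\Kcal_i$ and therefore commutes with the matrix block $\Bcal(\Hcal_i)\otimes I_{\Kcal_i}$ of $\DF$.
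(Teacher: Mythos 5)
Your proof is correct and follows essentially the same route as the paper's: apply the explicit formula \eqref{eq_cond_expect} for $E_{\Ncal*}$ to the maximally mixed state to get \eqref{eq_lem_sigmatr}, then observe from \eqref{eqtheostructlind2} that elements of $\DF$ commute with $\sigma_{\tr}$ block by block, and conclude \eqref{eqlemdeco} by cyclicity of the trace. You simply spell out the block computation and the normalization of the partial trace in more detail than the paper does.
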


\begin{proof}
Equation  \eqref{eq_lem_sigmatr} is just Equation \eqref{eq_cond_expect} applied to the maximally mixed state. Using the form of the elements of $\DF$ given by Equation \eqref{eqtheostructlind2}, we see that they commute with $\sigma_{\tr}$ which implies the second part of the lemma. 
\end{proof}

Other useful properties of $\sigma_{\tr}$ are listed below.

\begin{lem}\ 
\begin{enumerate}
\item For all $X\in\Bcal(\Hcal)$,
\begin{equation}\label{eq_lem_cond}
\sigma_{\Tr}^{\frac12}\,E_\Ncal(X)\,\sigma_{\Tr}^{\frac12}=E_{\Ncal*}(\sigma_{\Tr}^{\frac12}\,X\,\sigma_{\Tr}^{\frac12})\,,
\end{equation}
or equivalently $\sigma_{\Tr}^{-\frac12}\,E_{\Ncal*}(\rho)\,\sigma_{\Tr}^{-\frac12}=E_\Ncal(\sigma_{\Tr}^{-\frac12}\,\rho\,\sigma_{\Tr}^{-\frac12})$ for all $\rho\in\Scal(\Hcal)$.
\item For all $X,Y\in\Bcal(\Hcal)$
\begin{equation}\label{eqlemcond}
\sca{E_\Ncal(X)}{Y}_{\sigma_{\tr}}=\sca{E_\Ncal(X)}{E_\Ncal(Y)}_{\sigma_{\tr}}=\sca{X}{E_\Ncal(Y)}_{\sigma_{\tr}}\,.
\end{equation}
In particular, $E_\Ncal$ is the orthogonal projection on $\Ncal(\Pcal)$ with respect to the scalar product $\sca{\cdot}{\cdot}_{\sigma_{\tr}}$ and $\text{Ker}\ E_\Ncal=\Ncal(\Pcal)^\perp$.
\end{enumerate}
\label{lemcond}
\end{lem}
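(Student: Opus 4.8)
The plan is to prove the intertwining identity \eqref{eq_lem_cond} first — this is the only computational input — and then to obtain the three equalities in \eqref{eqlemcond}, together with the orthogonal-projection statement, by a purely formal argument from \eqref{eq_lem_cond}.

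To establish \eqref{eq_lem_cond} I would work block by block in the decomposition \eqref{eqtheostructlind1}. By Lemma \ref{lemdeco} the reference state $\sigma_{\tr}=\sum_{i}\frac{\Tr[P_i]}{d}\,I_{\Hcal_i}\otimes\tau_i$ is block diagonal and of product form on each block, hence $\sigma_{\tr}^{\frac12}=\sum_i(\Tr[P_i]/d)^{\frac12}\,I_{\Hcal_i}\otimes\tau_i^{\frac12}$ and $P_i\sigma_{\tr}^{\frac12}=\sigma_{\tr}^{\frac12}P_i$. Substituting the explicit expressions \eqref{eq_cond_expect} for $E_\Ncal$ and $E_{\Ncal*}$ into both sides of \eqref{eq_lem_cond}, the whole identity reduces, on the $i$-th block, to $\Tr_{\Kcal_i}\bigl[(I_{\Hcal_i}\otimes\tau_i^{\frac12})\,P_iXP_i\,(I_{\Hcal_i}\otimes\tau_i^{\frac12})\bigr]=\Tr_{\tau_i}[P_iXP_i]$. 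This in turn follows from two elementary facts: the cyclicity relation $\Tr_{\Kcal_i}[(I\otimes A)Z(I\otimes B)]=\Tr_{\Kcal_i}[(I\otimes BA)Z]$ for the ordinary partial trace, and the identity $\Tr_{\tau_i}[Z]=\Tr_{\Kcal_i}[(I_{\Hcal_i}\otimes\tau_i)Z]$, which is a direct consequence of the defining relation \eqref{eq_partialtrace_observables} of the partial trace against a state. With these, both sides of \eqref{eq_lem_cond} are seen to equal $\bigoplus_i \frac{\Tr[P_i]}{d}\,\Tr_{\tau_i}[P_iXP_i]\otimes\tau_i$. The equivalent reformulation with $\sigma_{\tr}^{-\frac12}$ is then obtained by applying \eqref{eq_lem_cond} to $X=\sigma_{\tr}^{-\frac12}\rho\,\sigma_{\tr}^{-\frac12}$.

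For \eqref{eqlemcond}, the key point is that \eqref{eq_lem_cond} makes $E_\Ncal$ selfadjoint for $\sca{\cdot}{\cdot}_{\sigma_{\tr}}$. Since $E_\Ncal$ is completely positive it is $*$-preserving, so $\sca{E_\Ncal(X)}{Y}_{\sigma_{\tr}}=\Tr\bigl[\sigma_{\tr}^{\frac12}E_\Ncal(X^*)\sigma_{\tr}^{\frac12}Y\bigr]=\Tr\bigl[E_{\Ncal*}(\sigma_{\tr}^{\frac12}X^*\sigma_{\tr}^{\frac12})\,Y\bigr]$ by \eqref{eq_lem_cond}, and since $E_{\Ncal*}$ is the Hilbert--Schmidt adjoint of $E_\Ncal$ this equals $\Tr\bigl[\sigma_{\tr}^{\frac12}X^*\sigma_{\tr}^{\frac12}E_\Ncal(Y)\bigr]=\sca{X}{E_\Ncal(Y)}_{\sigma_{\tr}}$. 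Combined with $E_\Ncal^2=E_\Ncal$, selfadjointness gives the full chain $\sca{E_\Ncal(X)}{Y}_{\sigma_{\tr}}=\sca{E_\Ncal(X)}{E_\Ncal(Y)}_{\sigma_{\tr}}=\sca{X}{E_\Ncal(Y)}_{\sigma_{\tr}}$, and shows that $E_\Ncal$ is an orthogonal projection for $\sca{\cdot}{\cdot}_{\sigma_{\tr}}$. Since $\Ima E_\Ncal=\Ncal(\Pcal)$ by Theorem \ref{theo_deco}, $E_\Ncal$ is the orthogonal projection onto $\Ncal(\Pcal)$ for this scalar product; and for an orthogonal projection the kernel is the orthogonal complement of the range, so $\Ker E_\Ncal=\Ncal(\Pcal)^\perp$.

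The only delicate point is the bookkeeping in the first step: one has to keep straight the two distinct ``partial traces'' appearing in \eqref{eq_cond_expect} — the ordinary $\Tr_{\Kcal_i}$ occurring in $E_{\Ncal*}$ and the $\tau_i$-weighted $\Tr_{\tau_i}$ occurring in $E_\Ncal$ — and to track how each absorbs the factors $\tau_i^{\frac12}$ coming from $\sigma_{\tr}^{\frac12}$. Once \eqref{eq_lem_cond} is in hand, everything else is formal.
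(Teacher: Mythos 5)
Your proposal is correct and follows essentially the same route as the paper: part 1 is proved block by block via the explicit formulae \eqref{eq_cond_expect}, reducing to the identity $\Tr_{\tau_i}[P_iXP_i]=\Tr_{\Kcal_i}\bigl[(I_{\Hcal_i}\otimes\tau_i^{1/2})(P_iXP_i)(I_{\Hcal_i}\otimes\tau_i^{1/2})\bigr]$, and part 2 then follows formally from selfadjointness of $E_\Ncal$ for $\sca{\cdot}{\cdot}_{\sigma_{\tr}}$ together with $E_\Ncal^2=E_\Ncal$. Your justification of the key partial-trace identity is in fact slightly more explicit than the paper's, which simply invokes ``direct computations''.
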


\begin{proof}\ 
\begin{enumerate}
\item First notice that as a property of conditionnal expectations, $E_\Ncal(X\,Y\,Z)=X\,E_\Ncal(Y)\,Z$ for all $X,Z\in\DF$ and $Y\in\Bcal(\Hcal)$. Consequently, using the notations introduced in Equations \eqref{eqtheostructlind1} and \eqref{eqtheostructlind2},
\begin{equation*}
\sigma_{\Tr}^{\frac12}\,E_\Ncal(X)\,\sigma_{\Tr}^{\frac12} = \sum_{i\in I}\,\sigma_{\Tr}^{\frac12}\,E_\Ncal(P_i\,X\,P_i)\,\sigma_{\Tr}^{\frac12}\,.
\end{equation*}
We can study each term in the sum separetly. Direct computations using the definition of the partial traces defined in Equations \eqref{eq_partialtrace_state} and \eqref{eq_partialtrace_observables} shows that for all $i\in I$,
\[\Tr_{\tau_i}[P_i\,X\,P_i]=\Tr_{\Kcal_i}[(I_{\Hcal_i}\otimes\tau_i^{1/2})\,(P_i\,X\,P_i)(I_{\Hcal_i}\otimes\tau_i^{1/2})]\,,\]
and so
\begin{align*}
\sigma_{\Tr}^{\frac12}\,E_\Ncal(P_i\,X\,P_i)\,\sigma_{\Tr}^{\frac12}
& = \frac{\Tr[P_i]}{d}\,\Tr_{\tau_i}[P_i\,X\,P_i]\otimes\tau_i \\
& = \frac{\Tr[P_i]}{d}\,\Tr_{\Kcal_i}[(I_{\Hcal_i}\otimes\tau_i^{1/2})\,(P_i\,X\,P_i)(I_{\Hcal_i}\otimes\tau_i^{1/2})]\otimes\tau_i \\
& = E_{\Ncal*}\left(\sigma_{\Tr}^{\frac12}\,P_i\,X\,P_i\,\sigma_{\Tr}^{\frac12}\right)\,.
\end{align*}
\item As $E_\Ncal$ is a conditional expectation, we have $E_\Ncal^2=E_\Ncal$, which can also directly be checked from Equation \eqref{eq_cond_expect}. Consequently we only need to check that it is selfadjoint with respect to $\sca{\cdot}{\cdot}_{\sigma_{\Tr}}$. Using the first part of this lemma, we obtain
\begin{align*}
\sca{E_\Ncal(X)}{Y}_{\sigma_{\tr}}
& = \Tr\left[\sigma_{\tr}^{\frac{1}{2}}\,E_\Ncal(X^*)\,\sigma_{\tr}^{\frac{1}{2}}\,Y\right] \\
& = \Tr\left[E_{\Ncal*}(\sigma_{\Tr}^{\frac12}\,X^*\,\sigma_{\Tr}^{\frac12})\,Y\right] \\
& = \Tr\left[\sigma_{\Tr}^{\frac12}\,X^*\sigma_{\Tr}^{\frac12}\,E_{\Ncal*}(Y)\right] \\
& = \sca{X}{E_\Ncal(Y)}_{\sigma_{Tr}}\,.
\end{align*}
We obtain Equation (\ref{eqlemcond}).
\end{enumerate}
\end{proof}

\begin{rem}
Equation \eqref{eqlemcond} is still true if one replaces the scalar product $\sca{\cdot}{\cdot}_{\sigma_{\tr}}$ by either one of the following two:
\begin{align*}
& (X,Y)\mapsto \sca{X}{Y}_{HS}=\Tr[X^*\, Y]\,, \\
& (X,Y)\mapsto \Tr[\sigma_{\tr}\, X^*\, Y]\,.
\end{align*}
For the latter, it will even stay true for any other faithful invariant state. However, the fact that Equation \eqref{eqlemcond} holds for $\sca{\cdot}{\cdot}_{\sigma_{\tr}}$ is specific to the choice of $\sigma_{\tr}$ as a reference state.
\end{rem}

\begin{de}\label{def_reversible}
We say that a QMS $\Pcal$ is \emph{reversible} if it admits a faithful invariant state, so that $\sigma_{\tr}$ is well-defined, and if it is selfadjoint with respect to $\sca{\cdot}{\cdot}_{\sigma_{\tr}}$.
\end{de}

From now on, whenever there is no ambiguity, we will forget the subscript $\sigma_{\tr}$ in the definitions of the $\Lbb_2$ norm and the scalar product.

\subsection{The Decoherence-Free Variance and Poincar\'e Inequality}\label{sect22}
We now describe a generalization of the Poincar\'e Inequality \eqref{eqdefpoinin}. In the case where $\Pcal$ is a primitive QMS, the variance of an observable $X\in\Bcal(\Hcal)$ in the state $\sigma$ has a nice geometric interpretation. It represents the square of the norm of the orthogonal projection of $X$ on $\left(\C I_\Hcal\right)^\perp$, with respect to the scalar product $\sca{\cdot}{\cdot}_\sigma$. That is,
\[\Var_\sigma(X)=\norm{\text{Proj}_{(\C I)^\perp}\ (X)}^2_{2,\sigma}\,.\]
We want to keep this geometric interpretation of the variance. By Lemma \ref{lemcond}, the conditional expectation $E_\Ncal$ is the orthogonal projection on $\Ncal(\Pcal)$ with respect to the scalar product defined by $\sigma_{\tr}$. As a consequence the following definition appears as the logical analogue of the traditional variance.
\begin{de}
We define the \emph{Decoherence-Free Variance} (DF-variance) for all $X\in\Bcal(\Hcal)$, as the square of the norm of the projection of $X$ onto the orthogonal of $\Ncal(\Pcal)$, that is:
\begin{equation}
\Var_{\Ncal}(X)=\norm{X-E_\Ncal(X)}^2_{2}\,.
\label{eqdefDFvariance}
\end{equation}
\label{deDFvariance}
\end{de}
Note that we get the usual definition when $\DF=\C I$, as in this case $E_\Ncal(X)=\sigma(X)I_\Hcal$: the DF-variance is indeed a generalization of the variance.
The following lemma emphasizes the particular choice of $\sigma_{\tr}$ as a reference state.

\begin{lem}
For all $X\in\Bcal(\Hcal)$, one has
\begin{equation}\label{eq_lem_DFvariance}
\Var_{\Ncal}(X)=\Var_{\sigma_{\tr}}(X)-\Var_{\sigma_{\tr}}(E_\Ncal(X))
\end{equation}
and
\begin{equation}
\Var_{\sigma_{\tr}}(E_\Ncal(X))=\sigma_{\tr}\left(E_\Ncal(X)^2\right)-\sigma_{\tr}(X)^2\,.
\label{eqlemDFvariance}
\end{equation}
\label{lemDFvariance}
\end{lem}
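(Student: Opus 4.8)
The plan is to verify both identities by direct computation, exploiting the fact that $E_\Ncal$ is the orthogonal projection onto $\Ncal(\Pcal)$ for the scalar product $\sca{\cdot}{\cdot}_{\sigma_{\tr}}$ (Lemma~\ref{lemcond}) and that the constants $I_\Hcal$ lies in $\Ncal(\Pcal)$, so that $E_\Ncal(I_\Hcal)=I_\Hcal$ and $\sigma_{\tr}(E_\Ncal(X))=\sigma_{\tr}(X)$.

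First I would establish \eqref{eq_lem_DFvariance}. Write $X-\sigma_{\tr}(X)I_\Hcal=\big(X-E_\Ncal(X)\big)+\big(E_\Ncal(X)-\sigma_{\tr}(X)I_\Hcal\big)$. The first summand lies in $\Ker E_\Ncal=\Ncal(\Pcal)^\perp$ and the second lies in $\Ncal(\Pcal)$ (since $E_\Ncal(X)\in\Ncal(\Pcal)$ and $I_\Hcal\in\Ncal(\Pcal)$), so these two vectors are orthogonal for $\sca{\cdot}{\cdot}_{\sigma_{\tr}}$. By the Pythagorean identity,
\[
\Var_{\sigma_{\tr}}(X)=\norm{X-\sigma_{\tr}(X)I_\Hcal}_2^2=\norm{X-E_\Ncal(X)}_2^2+\norm{E_\Ncal(X)-\sigma_{\tr}(X)I_\Hcal}_2^2\,.
\]
The first term on the right is $\Var_\Ncal(X)$ by Definition~\ref{deDFvariance}. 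For the second term, I use $\sigma_{\tr}(E_\Ncal(X))=\sigma_{\tr}(X)$, so $E_\Ncal(X)-\sigma_{\tr}(X)I_\Hcal=E_\Ncal(X)-\sigma_{\tr}(E_\Ncal(X))I_\Hcal$, whose squared norm is exactly $\Var_{\sigma_{\tr}}(E_\Ncal(X))$. Rearranging gives \eqref{eq_lem_DFvariance}.

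For \eqref{eqlemDFvariance} I would expand the definition of the variance directly: $\Var_{\sigma_{\tr}}(E_\Ncal(X))=\norm{E_\Ncal(X)-\sigma_{\tr}(X)I_\Hcal}_2^2$. Since $E_\Ncal(X)\in\DF$ which lies in the centralizer of $\sigma_{\tr}$ by Lemma~\ref{lemdeco}, one has $\sigma_{\tr}^{1/2}E_\Ncal(X)^*\sigma_{\tr}^{1/2}=\sigma_{\tr}E_\Ncal(X)^*$ inside the trace (using cyclicity and commutation), hence $\norm{E_\Ncal(X)}_2^2=\Tr[\sigma_{\tr}E_\Ncal(X)^*E_\Ncal(X)]=\sigma_{\tr}(E_\Ncal(X)^2)$ when $X$ is selfadjoint, and in general one keeps track of the adjoint. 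Expanding the square, the cross terms contribute $-2\sigma_{\tr}(X)\,\mathrm{Re}\,\sigma_{\tr}(E_\Ncal(X))=-2\sigma_{\tr}(X)^2$ and the last term contributes $\sigma_{\tr}(X)^2\norm{I_\Hcal}_2^2=\sigma_{\tr}(X)^2$, yielding $\sigma_{\tr}(E_\Ncal(X)^2)-\sigma_{\tr}(X)^2$.

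The only mildly delicate point — and the one I would be most careful about — is the identity $\norm{E_\Ncal(X)}_2^2=\sigma_{\tr}(E_\Ncal(X)^*E_\Ncal(X))$, i.e. that on $\DF$ the $\Lbb_2(\sigma_{\tr})$-norm collapses to the GNS form $\Tr[\sigma_{\tr}\,\cdot\,]$; this is precisely where the special choice of $\sigma_{\tr}$ (so that $\DF$ is in its centralizer, Lemma~\ref{lemdeco}) is used, and it is what makes \eqref{eqlemDFvariance} a statement about $\sigma_{\tr}$ rather than an arbitrary invariant state. Everything else is routine bilinear algebra with the Pythagorean decomposition.
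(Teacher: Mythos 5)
Your proof is correct and follows essentially the same route as the paper: the same Pythagorean decomposition $X-\sigma_{\tr}(X)I_\Hcal=(X-E_\Ncal(X))+(E_\Ncal(X)-\sigma_{\tr}(X)I_\Hcal)$ with orthogonality from Lemma~\ref{lemcond} for the first identity, and the traciality of $\sigma_{\tr}$ on $\DF$ from Lemma~\ref{lemdeco} to collapse $\norm{E_\Ncal(X)}_2^2$ to $\sigma_{\tr}(E_\Ncal(X)^2)$ for the second. Your added care about adjoints for non-selfadjoint $X$ is a minor refinement the paper elides, but the argument is the same.
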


\begin{rem}\label{rem_DFvariance}
Notice that consequently, because of Equation \eqref{eq_lem_DFvariance}, one has $\Var_\Ncal(X)\leq\Var_{\sigma_{\tr}}(X)$ for all $X\in\Bcal(\Hcal)$.
\end{rem}

\begin{proof}
We write
\[X-\sigma_{\tr}(X)I_\Hcal=\left(X-E_\Ncal(X)\right)+\left(E_\Ncal(X)-\sigma_{\tr}(X)I_\Hcal\right)\,.\]
As, by Lemma \ref{lemcond}, $X-E_\Ncal(X)$ is in the orthogonal of $\DF$ for the concern scalar product, in particular it is orthogonal to $E_\Ncal(X)-\sigma_{\tr}(X)I_\Hcal$. Equation \eqref{eq_lem_DFvariance} follows directly.\\
In order to prove Equation \eqref{eqlemDFvariance}, we use the fact that 
\[\Var_{\sigma_{\tr}}\left(E_\Ncal(X)\right)=\norm{E_\Ncal(X)}_2^2-\sigma_{\tr}\left(E_\Ncal(X)\right)^2=\norm{E_\Ncal(X)}_2^2-\sigma_{\tr}\left(X\right)^2\,,\]
as it holds that $\sigma_{\tr}\circ E_\Ncal=\sigma_{\tr}$. By Lemma \ref{lemdeco}, $\sigma_{\tr}$ is tracial on $\DF$ and we have $\norm{E_\Ncal(X)}_2^2=\sigma_{\tr}\left(E_\Ncal(X)^2\right)$.
\end{proof}

One can now defined a generalization of the Poincar\'e Inequality \eqref{eqdefpoinin} with respect to the DF-variance. We say that $\Pcal$ satisfies a \emph{Decoherence-free Poincar\'e Inequality} with constant $\lambda>0$ ( and we write $\text{PI}_\Ncal(\lambda)$) if the following inequality holds for all $X\in\Bcal_{\text{sa}}(\Hcal)$:
\begin{equation}
\lambda\,\Var_{\Ncal}(X)\leq \Ecal_\Lcal(X)\,.
\label{eqtheospeedEIDvar1}
\end{equation}
where $\Ecal_\Lcal$ is defined as in Equation \eqref{eq_def_Dirichlet} with $\sigma=\sigma_{\tr}$. We write $\lambda(\Lcal)$ the best constant which satisfies this inequality.

\begin{theo}\label{theo_speedEIDvar}
If $\text{PI}_\Ncal(\lambda)$ holds then for all $X\in\Bcal_{\text{sa}}(\Hcal)$, one has an exponential decay of the DF-variance with exponential rate $2\lambda$:
\begin{equation}
\Var_{\Ncal}(\Pcal_t(X))\leq e^{-2\,\lambda\,t}\Var_{\Ncal}(X)\,.
\label{eqtheospeedEIDvar2}
\end{equation}
Moreover, $\lambda(\Lcal)$ coincide with the spectral gap of $\frac{\Lcal+\hat \Lcal}{2}$, which reduces to the one of $\Lcal$ when $\Pcal$ is reversible.
\end{theo}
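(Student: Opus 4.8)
The plan is to prove the two assertions of Theorem \ref{theo_speedEIDvar} separately, relying heavily on the Hilbert space structure induced by $\sigma_{\tr}$ and the key identity \eqref{eq_comp_Dirichlet} for the derivative of the variance. For the exponential decay \eqref{eqtheospeedEIDvar2}, the first step is to differentiate $t\mapsto\Var_\Ncal(\Pcal_t(X))$. Using Lemma \ref{lemDFvariance}, write $\Var_\Ncal(\Pcal_t(X))=\Var_{\sigma_{\tr}}(\Pcal_t(X))-\Var_{\sigma_{\tr}}(E_\Ncal(\Pcal_t(X)))$. The crucial observation is that $\Pcal_t$ and $E_\Ncal$ commute: since $E_\Ncal$ is the conditional expectation onto $\DF$ and $\Pcal_t$ acts as a $*$-automorphism on $\DF$ (Equation \eqref{eq_unitary_evolution}) preserving $\sigma_{\tr}$ (by Lemma \ref{lemdeco}, $\DF$ is in the centralizer of $\sigma_{\tr}$, and $\Pcal_t$ preserves $\sigma_{\tr}$), one has $E_\Ncal\circ\Pcal_t=\Pcal_t\circ E_\Ncal$. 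Hence $E_\Ncal(\Pcal_t(X))=\Pcal_t(E_\Ncal(X))=U_t^*E_\Ncal(X)U_t$, and because this is a unitary conjugation by an element normalizing the centralizer of $\sigma_{\tr}$, the quantity $\Var_{\sigma_{\tr}}(E_\Ncal(\Pcal_t(X)))=\sigma_{\tr}(E_\Ncal(\Pcal_t(X))^2)-\sigma_{\tr}(X)^2$ is constant in $t$. Therefore $\frac{\partial}{\partial t}\Var_\Ncal(\Pcal_t(X))=\frac{\partial}{\partial t}\Var_{\sigma_{\tr}}(\Pcal_t(X))=-2\,\Ecal_\Lcal(\Pcal_t(X))$ by \eqref{eq_comp_Dirichlet}.

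Next I would close the Grönwall argument: since $\Pcal_t(X)$ remains selfadjoint, $\text{PI}_\Ncal(\lambda)$ applies to give $\Ecal_\Lcal(\Pcal_t(X))\geq\lambda\,\Var_\Ncal(\Pcal_t(X))$, whence $\frac{\partial}{\partial t}\Var_\Ncal(\Pcal_t(X))\leq-2\lambda\,\Var_\Ncal(\Pcal_t(X))$, and integrating yields \eqref{eqtheospeedEIDvar2}. (One should also note the converse: differentiating \eqref{eqtheospeedEIDvar2} at $t=0$ recovers the Poincaré inequality, so the decay rate is indeed governed by $\lambda(\Lcal)$.)

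For the spectral interpretation of $\lambda(\Lcal)$, I would work in the Hilbert space $(\Bcal(\Hcal),\sca{\cdot}{\cdot}_{\sigma_{\tr}})$. The Dirichlet form satisfies $\Ecal_\Lcal(X)=-\sca{X}{\Lcal(X)}_{\sigma_{\tr}}=-\sca{X}{\tfrac{\Lcal+\hat\Lcal}{2}(X)}_{\sigma_{\tr}}$ for selfadjoint $X$, so only the symmetrized generator $\Lcal_s:=\tfrac12(\Lcal+\hat\Lcal)$ matters, and $\Lcal_s$ is selfadjoint and negative semidefinite for $\sca{\cdot}{\cdot}_{\sigma_{\tr}}$. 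The key structural point is that $\Ker\Lcal_s=\DF=\Ima E_\Ncal$: the inclusion $\DF\subseteq\Ker\Lcal_s$ is immediate since both $\Lcal$ and $\hat\Lcal$ vanish on $\DF$ (the evolution there is a $*$-automorphism, and $\sca{X}{\Lcal(X)}=0$ forces $\Lcal(X)=0$ by the Schwarz-type argument that the Dirichlet form of a selfadjoint element in the multiplicative domain vanishes); the reverse inclusion follows because the eigenvalue-zero eigenspace of a reversible QMS generator coincides with its fixed-point algebra, which is $\DF$. Consequently, writing $\Bcal(\Hcal)=\DF\oplus\DF^\perp$ orthogonally (Lemma \ref{lemcond}), $\Lcal_s$ restricts to a negative definite operator on $\DF^\perp=\Ker E_\Ncal$, and for $X\in\Bcal_{\text{sa}}(\Hcal)$, $\Var_\Ncal(X)=\norm{(I-E_\Ncal)X}_2^2$ is exactly the squared norm of the component of $X$ in $\DF^\perp$. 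The variational characterization of the smallest eigenvalue of $-\Lcal_s$ restricted to $\DF^\perp$ then gives
\[
\lambda(\Lcal)=\inf_{X\in\DF^\perp,\,X=X^*,\,X\neq0}\frac{-\sca{X}{\Lcal_s(X)}_{\sigma_{\tr}}}{\norm{X}_2^2}=\min\big\{|\mu|:\mu\in\Sp(\Lcal_s)\setminus\{0\}\big\},
\]
i.e.\ the spectral gap of $\tfrac{\Lcal+\hat\Lcal}{2}$; when $\Pcal$ is reversible, $\hat\Lcal=\Lcal$ and this is the spectral gap of $\Lcal$ itself. A small point to handle is that the infimum over selfadjoint $X$ equals the one over all $X$, since $\Lcal_s$ commutes with the adjoint operation $X\mapsto X^*$ and thus preserves the real subspace $\Bcal_{\text{sa}}(\Hcal)$, so its spectral gap is attained on a selfadjoint eigenvector.

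The main obstacle, and the step deserving the most care, is establishing $\Ker\Lcal_s=\DF$ (equivalently, that the entropy-zero / Dirichlet-form-zero set of the symmetrized QMS is exactly the decoherence-free algebra) together with the commutation $E_\Ncal\circ\Pcal_t=\Pcal_t\circ E_\Ncal$; the latter is where the specific choice of $\sigma_{\tr}$ as reference state — making $\DF$ sit inside its centralizer, by Lemma \ref{lemdeco} — is genuinely used, since for a generic faithful invariant state the conditional expectation need not commute with $\Pcal_t$ in the relevant $\Lbb_2$ sense. Everything else is a routine Grönwall estimate and the Courant–Fischer characterization of the spectral gap.
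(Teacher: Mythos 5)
Your argument follows essentially the same route as the paper: decompose $\Var_\Ncal(\Pcal_t(X))$ via Lemma \ref{lemDFvariance}, show the $E_\Ncal$-part is constant using $E_\Ncal\circ\Pcal_t=\Pcal_t\circ E_\Ncal$ and multiplicativity of $\Pcal_t$ on $\DF$, close with Gr\"onwall, and identify $\lambda(\Lcal)$ through the spectral decomposition of the symmetrized generator whose kernel is $\DF$. One imprecision: your claim that $\DF\subseteq\Ker\frac{\Lcal+\hat\Lcal}{2}$ holds ``since both $\Lcal$ and $\hat\Lcal$ vanish on $\DF$'' is false for non-reversible $\Pcal$ --- on $\DF$ the evolution is a possibly nontrivial unitary conjugation, so $\Lcal|_{\DF}=i[K,\cdot]$ need not vanish; what is true (and suffices) is that $\hat\Lcal=-\Lcal$ on $\DF$, so the symmetrization vanishes there, consistent with the paper's reduction of the general case to the reversible one via $\Ecal_\Lcal=\Ecal_{\hat\Lcal}=\Ecal_{\frac{\Lcal+\hat\Lcal}{2}}$ on $\Bcal_{\text{sa}}(\Hcal)$.
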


\begin{proof}
The proof is the same as for the usual Poincar\'e Inequality. In order to prove the first part of the Proposition we compute the derivative of $\Var_{\Ncal}(\Pcal_t(X))$. Assume that $\text{PI}_\Ncal(\lambda)$ holds for some $\lambda>0$. First, by Equation \eqref{eq_comp_Dirichlet}, we have
\[\frac{\partial}{\partial t}\Var_{\sigma_{\tr}}(\Pcal_t(X))\, =\, -2\,\Ecal_\Lcal(X)\, \leq\, -2\,\lambda\, \Var_{\Ncal}(\Pcal_t(X))\,.\]
We now exploit Equation \eqref{eqlemDFvariance} to prove that
\[\frac{\partial}{\partial t}\,\Var_{\sigma_{\tr}}\left(E_\Ncal(X)\right)=0\,.\]
First remark that $\Pcal$ and $E_\Ncal$ commute. Indeed, for all $X\in\Bcal(\Hcal)$ and all $t\geq0$, $\Pcal_t(E_\Ncal(X))\in\Bcal(\Hcal)$ and $\Pcal_t(X-E_\Ncal(X))\in\Ker\,E_\Ncal$, so that $E_\Ncal(\Pcal_t(X)))=E_\Ncal(\Pcal_t(E_\Ncal(X))=\Pcal_t(E_\Ncal(X))$. Consequently,
\begin{align*}
\sigma_{\tr}\left(E_\Ncal(\Pcal_t(X))^2\right)
 =\sigma_{\tr}\left(\Pcal_t[E_\Ncal(X)]^2\right) 
 =\sigma_{\tr}\left(\Pcal_t[E_\Ncal(X)^2]\right) 
=\sigma_{\tr}\left(E_\Ncal(X)^2\right)\,,
\end{align*}
where in the second line we use that $\Pcal$ is a unitary evolution on $\DF$ and in the last line the fact that $\sigma_{\tr}$ is an invariant state. This latter property, combined with Equation \eqref{eqlemDFvariance}, shows that $\frac{\partial}{\partial t}\Var_{\sigma_{\tr}}(E_\Ncal(\Pcal_t(X)))=0$. Consequently,
\[\frac{\partial}{\partial t}\Var_{\Ncal}(\Pcal_t(X)) \leq  -2\,\lambda\, \Var_{\Ncal}(\Pcal_t(X))\,.\]
Equation \eqref{eqtheospeedEIDvar2} follows by integrating this inequality.\\
Now, if $\Lcal$ is reversible it can be diagonalized in some orthonormal basis. We write $\lambda_1,...,\lambda_n$ its eigenvalues associated to the orthogonal projections $Q_0,Q_1,...,Q_n$ and such that $0=\lambda_0\leq-\lambda_1\leq\cdots\leq-\lambda_n$. As proved in \cite{CSU3}, $\Ncal(\Pcal)$ coincide with the kernel of $\Lcal$, so that $Q_0$ is the orthogonal projection on $\Ncal(\Pcal)$ (that is, $Q_0=E_\Ncal$). Then, for all $X\in\Bcal(\Hcal)$:
\begin{align*}
\Ecal_\Lcal(X)
 = \sum_{k=1}^n{-\lambda_k\,|\sca{Q_k(X)}{X}|^2} 
 \geq -\lambda_1\,\norm{X-Q_0(X)}_2^2 \geq\lambda_1\, \norm{X-E_\Ncal(X)}_{2}^2
 = -\lambda_1\, \Var_{\Ncal}(X)\,.
\end{align*}
As equality is achieved for any eigenvectors associated to the eigenvalue $\lambda_1$, the claim is proved for reversible $\Lcal$. In the general case, observe that for selfadjoint operators $X\in\Bcal(\Hcal)$, the Dirichlet form associated to $\Lcal$ and $\hat\Lcal$ are the same, so that they also coincide with the Dirichlet form of $\frac{\Lcal+\hat \Lcal}{2}$: $\Ecal_{\Lcal}=\Ecal_{\hat \Lcal}=\Ecal_{\frac{\Lcal+\hat\Lcal}{2}}$ on $\Bcal_{\text{sa}}(\Hcal)$. This concludes the proof.
\end{proof}

\begin{ex}[The decoherence quantum Markov semigroup]
Recall that, for $\gamma\in\R_*^+$, the decoherence QMS $\Pcal^\deco_{*}$ is defined for all states $\rho\in\Scal(\Hcal)$ by:
\[\Pcal^\deco_{*t}(\rho)=e^{-\gamma\,t}\,\rho+(1-e^{-\gamma t})\,E_{\Ncal*}(\rho)\,,\]
where $E_\Ncal=E_{\Ncal*}$ is the orthogonal projection on the algebra of diagonal operators for the Hilbert-Schmidt scalar product. The Lindbladian of this QMS in the Schr\"odinger picture is given by
\begin{equation}\label{eq_ex_decoQMS1}
\Lcal_*^\deco(\rho)=\gamma\left(E_{\Ncal*}(\rho)-\rho\right)\,.
\end{equation}
Clearly, in this example $\sigma_{\tr}$ is the maximally-mixed state $\frac{I_\Hcal}{d}$. We can compute explicitly the Dirichlet form: for all $X\in\Bcal(\Hcal)$,
\begin{equation*}
\Ecal_{\Lcal^\deco}(X) = -\gamma\,\sca{X}{E_\Ncal(X)-X}_{\sigma_{\tr}} = \gamma\,\Var_\Ncal(X)\,.
\end{equation*}
Consequently, we directly obtain that $\lambda(\Lcal^\deco)=\gamma$. Of course, this can also be directly checked from the definition of $\Lcal^\deco$.
\end{ex}

As a simple corollary, we directly get that the Dirichlet form vanishes only on the DF algebra.

\begin{coro}\label{coro_DFalgebra}
We have $\Ecal_\Lcal(X)=0$ if and only if $X\in\DF$.
\end{coro}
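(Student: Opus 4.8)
The plan is to establish the two implications of the equivalence separately, working as in the statement with selfadjoint $X$ (the setting in which the Dirichlet form and the inequality $\text{PI}_\Ncal$ live); recall from the proof of Theorem \ref{theo_speedEIDvar} that on $\Bcal_{\text{sa}}(\Hcal)$ one has $\Ecal_\Lcal=\Ecal_{\frac{\Lcal+\hat\Lcal}{2}}$, so $\Ecal_\Lcal(X)$ is a nonnegative real quantity. For the implication $X\in\DF\Rightarrow\Ecal_\Lcal(X)=0$, the idea is simply that the Dirichlet form is (half of) the instantaneous rate of decrease of the DF-variance, and the DF-variance vanishes identically on $\DF$. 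Since $\Pcal$ and $E_\Ncal$ commute (as shown in the proof of Theorem \ref{theo_speedEIDvar}), $X\in\DF$ gives $\Pcal_t(X)=\Pcal_t(E_\Ncal(X))=E_\Ncal(\Pcal_t(X))\in\DF$ for every $t\geq0$, hence $\Var_\Ncal(\Pcal_t(X))=\norm{\Pcal_t(X)-E_\Ncal(\Pcal_t(X))}_2^2=0$ for all $t$. On the other hand, combining Equation \eqref{eq_comp_Dirichlet}, the identity $\frac{\partial}{\partial t}\Var_{\sigma_{\tr}}(E_\Ncal(\Pcal_t(X)))=0$ proved inside Theorem \ref{theo_speedEIDvar}, and the decomposition \eqref{eq_lem_DFvariance}, one gets $\frac{\partial}{\partial t}\Var_\Ncal(\Pcal_t(X))=-2\,\Ecal_\Lcal(\Pcal_t(X))$; since the left-hand side is identically zero, $\Ecal_\Lcal(\Pcal_t(X))=0$ for all $t$, in particular at $t=0$.

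For the converse, $\Ecal_\Lcal(X)=0\Rightarrow X\in\DF$, I would just invoke the Poincar\'e inequality. By Theorem \ref{theo_speedEIDvar} the semigroup satisfies $\text{PI}_\Ncal(\lambda(\Lcal))$ with $\lambda(\Lcal)$ equal to the spectral gap of $\tfrac{\Lcal+\hat\Lcal}{2}$, and this gap is strictly positive: on $\Bcal_{\text{sa}}(\Hcal)$ the form $\Ecal_\Lcal=\Ecal_{\frac{\Lcal+\hat\Lcal}{2}}$ is nonnegative with kernel exactly $\Ker\Lcal=\Ncal(\Pcal)$ (the identification $\Ncal(\Pcal)=\Ker\Lcal$ being the one from \cite{CSU3} used in the proof of Theorem \ref{theo_speedEIDvar}). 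Therefore $0=\Ecal_\Lcal(X)\geq\lambda(\Lcal)\,\Var_\Ncal(X)\geq0$ forces $\Var_\Ncal(X)=\norm{X-E_\Ncal(X)}_2^2=0$, i.e.\ $X=E_\Ncal(X)\in\operatorname{Im}E_\Ncal=\DF$ by Theorem \ref{theo_deco}. This makes the corollary genuinely a corollary: nothing beyond Theorem \ref{theo_speedEIDvar} and the computation of the first paragraph is needed.

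The one point that requires care — and the place I would expect a referee to push — is the strict positivity of $\lambda(\Lcal)$, equivalently the assertion that the kernel of the symmetrized Lindbladian is no larger than $\DF$. In the reversible case this is immediate from the diagonalization carried out in the proof of Theorem \ref{theo_speedEIDvar} (the eigenvalue $0$ of $\Lcal$ has eigenspace precisely $\Ncal(\Pcal)=\Ker\Lcal$). In general it relies on the fact that the $\sigma_{\tr}$-adjoint QMS $\hat\Pcal$ shares the decoherence-free algebra of $\Pcal$, so that $\Ker\hat\Lcal=\DF$ as well and hence $\Ker\tfrac{\Lcal+\hat\Lcal}{2}\subseteq\DF$; I would simply cite Theorem \ref{theo_speedEIDvar} (and the literature behind it) for this and keep the proof of the corollary itself to a few lines. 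Finally, I would note in passing that restricting to selfadjoint $X$ is not cosmetic: for a non-normal $X\in\DF$ the raw expression $-\sca{X}{\Lcal(X)}$ need not vanish, so $\Bcal_{\text{sa}}(\Hcal)$ is the natural domain for the statement.
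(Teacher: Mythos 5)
Your proposal is correct and follows essentially the same route as the paper: the substantive direction ($\Ecal_\Lcal(X)=0\Rightarrow X\in\DF$) is exactly the paper's argument via the Poincar\'e inequality with strictly positive spectral gap, resting on the identification $\Ker\frac{\Lcal+\hat\Lcal}{2}=\Ncal(\Pcal)$ from Theorem \ref{theo_speedEIDvar}. Your easy direction is packaged slightly differently (via the identity $\frac{\partial}{\partial t}\Var_\Ncal(\Pcal_t(X))=-2\,\Ecal_\Lcal(\Pcal_t(X))$ and the vanishing of $\Var_\Ncal$ on $\DF$, rather than the paper's direct use of $\Ecal_\Lcal=\Ecal_{\frac{\Lcal+\hat\Lcal}{2}}$ and $\DF=\Ker\frac{\Lcal+\hat\Lcal}{2}$), but both rest on the same facts already established in Theorem \ref{theo_speedEIDvar}, and your closing remark about selfadjointness is a fair caveat.
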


\begin{proof}
Indeed, as $\Hcal$ is finite-dimensional, the spectral gap is positive. Now, let $X\in\Bcal(\Hcal)$ be such that $\Ecal_\Lcal(X)=0$. Then the DF Poincar\'e Inequality implies that $\Var_\Ncal(X)=0$, so that by definition $X=E_\Ncal(X)$ and $X\in\DF$. The other implication comes from $\Ecal_\Lcal=\Ecal_{\frac{\Lcal+\hat \Lcal}{2}}$ together with the fact already mentioned above that $\Ncal(e^{t(\frac{\Lcal+\hat \Lcal}{2})})=\Ker \frac{\Lcal+\hat \Lcal}{2}$.
\end{proof}

\subsection{The Decoherence-Free Entropy and log-Sobolev Inequality}\label{sect23}

We shall now propose a generalization of the modified log-Sobolev Inequality \eqref{eqdeflogsob1}. We recall the notation introduced in \eqref{eq_notation_stateDF}: for a state $\rho\in\Scal(\Hcal)$, we write $\rho_\Ncal=\rho\circ E_\Ncal$.

\begin{de}\label{deDFE1}
We define the \emph{Decoherence-Free relative entropy} (DF-relative entropy) for all $\rho\in\Scal(\Hcal)$ as
\begin{equation}\label{eqdeDFE1}
\DentDF{\rho}:=\Dent{\rho}{\rho_\Ncal}\,.
\end{equation}
\end{de}
The DF-relative entropy represents the information lost in the environment during the decoherence process. It also reduces to the usual relative entropy $\Dent{\rho}{\sigma}$ in the case where $\DF=\C I$, as in this case $\rho_\Ncal=\sigma$, the unique invariant state. As for Lemma \ref{lemDFvariance}, the specific choice of $\sigma_{\tr}$ as a reference state is motivated by the following lemma.

\begin{lem}\label{lem_DFentropy}
For all state $\rho\in\Scal(\Hcal)$, one has:
\begin{equation}
\DentDF{\rho}=\Dent{\rho}{\sigma_{\tr}}-\Dent{\rho_\Ncal}{\sigma_{\tr}}\,.
\label{eqlemDFentropy}
\end{equation}
\end{lem}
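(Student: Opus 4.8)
The plan is to unfold the definition of the DF-relative entropy and the ordinary relative entropy, and to use the "Pythagorean" structure of the conditional expectation $E_{\Ncal}$ with respect to $\sigma_{\tr}$. First I would write
\[
\DentDF{\rho}=\Dent{\rho}{\rho_\Ncal}=\Tr\big[\rho(\log\rho-\log\rho_\Ncal)\big],\qquad
\Dent{\rho}{\sigma_{\tr}}=\Tr\big[\rho(\log\rho-\log\sigma_{\tr})\big],
\]
so that the claimed identity \eqref{eqlemDFentropy} is equivalent to
\[
\Tr\big[\rho(\log\rho_\Ncal-\log\sigma_{\tr})\big]=\Dent{\rho_\Ncal}{\sigma_{\tr}}=\Tr\big[\rho_\Ncal(\log\rho_\Ncal-\log\sigma_{\tr})\big].
\]
Hence the whole statement reduces to showing that the operator $\log\rho_\Ncal-\log\sigma_{\tr}$ has the same expectation in the state $\rho$ as in the state $\rho_\Ncal=E_{\Ncal*}(\rho)$, i.e. that $\Tr[(\rho-\rho_\Ncal)(\log\rho_\Ncal-\log\sigma_{\tr})]=0$.

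The key point is that $\log\rho_\Ncal-\log\sigma_{\tr}$ belongs to $\DF$ (or, more precisely, is affiliated to it). Using the block structure \eqref{eqtheostructlind1}, the formula \eqref{eq_cond_expect} for $E_{\Ncal*}$, and the expression \eqref{eq_lem_sigmatr} for $\sigma_{\tr}$, one has $\rho_\Ncal=\bigoplus_i \omega_i\otimes\tau_i$ and $\sigma_{\tr}=\bigoplus_i \frac{\Tr[P_i]}{d}I_{\Hcal_i}\otimes\tau_i$, where $\omega_i=\Tr_{\Kcal_i}[P_i\rho P_i]$ acts on $\Hcal_i$. Since the two operators share the $\tau_i$-factor on each $\Kcal_i$, taking logarithms block by block gives
\[
\log\rho_\Ncal-\log\sigma_{\tr}=\bigoplus_{i\in I}\Big(\log\omega_i-\log\tfrac{\Tr[P_i]}{d}I_{\Hcal_i}\Big)\otimes I_{\Kcal_i},
\]
which is manifestly an element of $\DF=\bigoplus_i\Bcal(\Hcal_i)\otimes I_{\Kcal_i}$. (A small care point: if $\rho$ is not faithful then $\omega_i$ may be singular, but then $\operatorname{supp}\rho\subseteq\operatorname{supp}\rho_\Ncal$ still holds and one works on the support, or one first proves the identity for faithful $\rho$ and extends by continuity; this is the only mild technicality.)

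Now I invoke Lemma \ref{lemcond}, specifically Equation \eqref{eqlemcond} in the form $\sca{A}{E_\Ncal(Y)}_{\sigma_{\tr}}=\sca{E_\Ncal(A)}{E_\Ncal(Y)}_{\sigma_{\tr}}$, or more directly the adjointness $\Tr[\sigma_{\tr}^{1/2}A\,\sigma_{\tr}^{1/2}Y]$ relation together with $\sigma_{\tr}\circ E_\Ncal=\sigma_{\tr}$ and the intertwining \eqref{eq_lem_cond}. Concretely, for any $Z\in\DF$ one has $\Tr[\rho_\Ncal Z]=\Tr[E_{\Ncal*}(\rho)Z]=\Tr[\rho\,E_\Ncal(Z)]=\Tr[\rho\,Z]$ because $E_\Ncal$ fixes $\DF$. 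Applying this with $Z=\log\rho_\Ncal-\log\sigma_{\tr}\in\DF$ yields exactly $\Tr[(\rho-\rho_\Ncal)(\log\rho_\Ncal-\log\sigma_{\tr})]=0$, which is what remained to be shown. I expect the main (and essentially only) obstacle to be the bookkeeping around the block decomposition that certifies $\log\rho_\Ncal-\log\sigma_{\tr}\in\DF$ — once that is in place, the identity is a one-line consequence of the fact that $\rho$ and $\rho_\Ncal$ agree on $\DF$.
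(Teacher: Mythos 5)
Your proof is correct, but it is organized differently from the paper's. The paper proves the identity by brute force: it uses the block decomposition \eqref{eqtheostructlind1}--\eqref{eqtheostructlind2} to expand all three relative entropies explicitly in terms of $\Tr[\rho\log\rho]$, $\Tr[\rho_{\Hcal_i}\log\rho_{\Hcal_i}]$ and $\Tr[\rho_{\Kcal_i}\log\tau_i]$, and then observes that the same expansion applied to $\rho_\Ncal$ produces exactly the leftover term. You instead isolate the genuinely structural content: the identity is equivalent to $\Tr[(\rho-\rho_\Ncal)(\log\rho_\Ncal-\log\sigma_{\tr})]=0$, which follows from (i) $\log\rho_\Ncal-\log\sigma_{\tr}\in\DF$ (this is where you use the block decomposition, and only there) and (ii) the states $\rho$ and $\rho_\Ncal=E_{\Ncal*}(\rho)$ agree on $\DF$, which is immediate from the duality $\Tr[E_{\Ncal*}(\rho)Z]=\Tr[\rho\,E_\Ncal(Z)]$ and $E_\Ncal(Z)=Z$ for $Z\in\DF$. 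This is the standard Pythagorean identity for relative entropy relative to a conditional expectation, and it makes transparent why the specific choice of $\sigma_{\tr}$ matters (any state whose logarithm differs from $\log\rho_\Ncal$ by an element of $\DF$ would work); the paper's computation buys explicit intermediate formulas (e.g.\ $\Dent{\rho_\Ncal}{\sigma_{\tr}}=\sum_i\Tr[\rho_{\Hcal_i}(\log\rho_{\Hcal_i}+\log N_i)]$) that it reuses nowhere else. Your handling of the support issue for non-faithful $\rho$ (restriction to the support, or continuity from the faithful case, using $\operatorname{supp}\rho\subseteq\operatorname{supp}\rho_\Ncal$) is adequate and in fact more careful than the paper, which passes over it silently. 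The only cosmetic remark is that your appeal to Lemma \ref{lemcond} is not really needed: the direct chain $\Tr[\rho_\Ncal Z]=\Tr[\rho\,E_\Ncal(Z)]=\Tr[\rho Z]$ that you give afterwards is the whole argument.
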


\begin{rem}
Consequently we have $\DentDF{\rho}\leq\Dent{\rho}{\sigma_{\tr}}$.
\end{rem}

\begin{proof}
We use the notations introduce in Equations \eqref{eqtheostructlind1} and below. Recall that, for some finite dimensional Hilbert space $\Kcal$, if $X\in\Bcal(\Hcal)$ and $Y\in\Bcal(\Kcal)$ are positive semi-definite operator, then $\log \left(X\otimes Y\right)=\left(\log X\right)\otimes I_\Kcal+I_\Hcal\otimes \left(\log Y\right)$, where the logarithm is defined on their support. Besides, if $\Hcal=\Kcal$ and $X$ and $Y$ have orthogonal support, then $\log(X+Y)=\log X+\log Y$. We shall adopt the following notation: we write $\rho_{\Hcal_i}=\Tr_{\Kcal_i}[P_i\,\rho\,P_i]$ and $\rho_{\Kcal_i}=\Tr_{\Hcal_i}[P_i\,\rho\,P_i]$. With these notations, $\rho_\Ncal=\sum_{i}\,\rho_{\Hcal_i}\otimes \tau_i$. We start by computing the left-hand side of Equation \eqref{eqlemDFentropy}:
\begin{align*}
\DentDF{\rho}
& =\Tr\left[\rho\,\left(\log\rho-\log\rho_\Ncal\right)\right] \\
& =\Tr\left[\rho\log\rho\right]-\sum_{i\in I}\,\Tr\big[(P_i\,\rho\,P_i)\,\left(\log(\rho_{\Hcal_i})\otimes I_{\Kcal_i}+I_{\Hcal_i}\otimes\log(\tau_i)\right)\big]\\
& =\Tr\left[\rho\log\rho\right]-\sum_{i\in I}\,\Tr\big[\rho_{\Hcal_i}\log\rho_{\Hcal_i}\big]-\sum_{i\in I}\,\Tr\big[\rho_{\Kcal_i}\log\tau_i\big]\,.
\end{align*}
Now we compute the first term of the right-hand side. We write $N_i=\dim\,\Hcal_i$.
\begin{align*}
\Dent{\rho}{\sigma_{\tr}}
& = \Tr\big[\rho\big(\log\rho-\log\sigma_{\tr}\big)\big] \\
& = \Tr\left[\rho\log\rho\right]-\sum_{i\in I}\,\Tr\left[(P_i\,\rho\,P_i)\left(\log (\frac{I_{\Hcal_i}}{N})\otimes I_{\Kcal_i}+I_{\Hcal_i}\otimes\log(\tau_i)\right)\right]\\
& = \Tr\left[\rho\log\rho\right] + \sum_{i\in I}\,\Tr[P_i\,\rho\,P_i]\log N_i -\sum_{i\in I}\,\Tr\big[\rho_{\Kcal_i}\log\tau_i\big] \\
& = \DentDF{\rho} + \sum_{i\in I}\,\Tr\big[\rho_{\Hcal_i}\left(\log\rho_{\Hcal_i}+\log N_i\right)\big]\,.
\end{align*}
As $\Tr_{\Kcal_i}[P_i\,\rho_\Ncal\,P_i]=\rho_{\Hcal_i}$, the same computation applied to $\rho_\Ncal$ shows that $\Dent{\rho_\Ncal}{\sigma_{\tr}}=\sum_{i\in I}\,\Tr\big[\rho_{\Hcal_i}\left(\log\rho_{\Hcal_i}+\log N_i\right)\big]$, which concludes the proof.
\end{proof}
One can now define a generalization of the log-Sobolev Inequality leading to the exponential decay of the DF-relative entropy. We say that $\Pcal$ satisfies a \emph{modified Decoherence-Free log-Sobolev Inequality} with constant $\alpha>0$ (and we write $\text{MLSI}_\Ncal(\alpha)$) if for all state $\rho\in\Scal(\Hcal)$, the following inequality holds:
\begin{equation}
2\,\alpha\,\DentDF{\rho}\,\leq \,\EP_\Lcal(\rho)\,,
\label{eqtheoDFentropy1}
\end{equation}
where $\EP_\Lcal$ is defined as in Equation \eqref{eq_comp_EP} with $\sigma=\sigma_{\tr}$. We write $\alpha_\Ncal(\Lcal)$ the best constant in the previous inequality:
\begin{equation}\label{eq_def_bestconstantLSI}
\alpha_\Ncal(\Lcal)=\underset{\rho\in\Scal(\Hcal)}{\inf}\, \frac{\EP_\Lcal(\rho)}{2\,\DentDF{\rho}}\,.
\end{equation}

\begin{theo}\label{theoDFentropy}
$\text{MLSI}_\Ncal(\alpha)$ holds for some $\alpha>0$ if and only if for all $\rho\in\Scal(\Hcal)$ and all $t\geq0$,
\begin{equation}
\DentDF{\Pcal_{*t}(\rho)}\leq e^{-2\,\alpha\, t}\,\DentDF{\rho}\,.
\label{eqtheoDFentropy2}
\end{equation}
\end{theo}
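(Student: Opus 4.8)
The plan is to reduce both implications to the single differential identity
\[
\frac{d}{dt}\,\DentDF{\Pcal_{*t}(\rho)}\;=\;-\,\EP_\Lcal\!\left(\Pcal_{*t}(\rho)\right)\,,\qquad \forall\,\rho\in\Scal(\Hcal),\ t\geq0,
\]
after which the equivalence is a routine Grönwall argument. Indeed, assuming this identity, if $\text{MLSI}_\Ncal(\alpha)$ holds we set $f(t)=\DentDF{\Pcal_{*t}(\rho)}$ and apply \eqref{eqtheoDFentropy1} to the state $\Pcal_{*t}(\rho)$ to get $f'(t)=-\EP_\Lcal(\Pcal_{*t}(\rho))\leq-2\alpha f(t)$, whence $f(t)\leq e^{-2\alpha t}f(0)$, i.e. \eqref{eqtheoDFentropy2}. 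Conversely, if \eqref{eqtheoDFentropy2} holds, then $f(t)\leq e^{-2\alpha t}f(0)$ with equality at $t=0$, so (applying this at an arbitrary base point $\Pcal_{*s}(\rho)$ in place of $\rho$) $f'(s)\leq-2\alpha f(s)$ for all $s\geq0$; evaluating the identity at $s$ gives $\EP_\Lcal(\Pcal_{*s}(\rho))\geq2\alpha\,\DentDF{\Pcal_{*s}(\rho)}$, which at $s=0$ is exactly $\text{MLSI}_\Ncal(\alpha)$.

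To prove the differential identity I would start from Lemma \ref{lem_DFentropy}, which gives $\DentDF{\Pcal_{*t}(\rho)}=\Dent{\Pcal_{*t}(\rho)}{\sigma_{\tr}}-\Dent{(\Pcal_{*t}(\rho))_\Ncal}{\sigma_{\tr}}$. The $t$-derivative of the first term is $-\EP_\Lcal(\Pcal_{*t}(\rho))$ by the very definition of the entropy production (Equation \eqref{eq_def_EP} with $\sigma=\sigma_{\tr}$), so it remains to show that the second term is \emph{constant} in $t$. First, $\Pcal_t$ and $E_\Ncal$ commute (as recalled inside the proof of Theorem \ref{theo_speedEIDvar}), hence $(\Pcal_{*t}(\rho))_\Ncal=\Pcal_{*t}(\rho_\Ncal)$ and $\Pcal_{*t}$ maps the range of $E_{\Ncal*}$ into itself. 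Next, using the structure of Subsection \ref{sect13} ($H=\bigoplus_i(K_i\otimes I_{\Kcal_i}+I_{\Hcal_i}\otimes M_0^{(i)})$, $L_k=\bigoplus_i I_{\Hcal_i}\otimes M_k^{(i)}$, $\sigma_i$ commuting with $K_i$, and $\tau_i$ consequently invariant for the reduced dynamics on $\Kcal_i$), a direct computation shows that $\Lcal_*$ sends $\bigoplus_i\mu_{\Hcal_i}\otimes\tau_i$ to $\bigoplus_i(-i[K_i,\mu_{\Hcal_i}])\otimes\tau_i$, so $\Pcal_{*t}$ acts on the range of $E_{\Ncal*}$ blockwise by $\mu_{\Hcal_i}\mapsto e^{-itK_i}\mu_{\Hcal_i}\,e^{itK_i}$. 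Combining this with the explicit formula $\Dent{\mu_\Ncal}{\sigma_{\tr}}=\sum_{i\in I}\Tr[\mu_{\Hcal_i}(\log\mu_{\Hcal_i}+\log N_i)]$ derived inside the proof of Lemma \ref{lem_DFentropy}, together with the unitary invariance of the trace, we obtain $\Dent{\Pcal_{*t}(\rho_\Ncal)}{\sigma_{\tr}}=\sum_i\Tr[\rho_{\Hcal_i}(\log\rho_{\Hcal_i}+\log N_i)]=\Dent{\rho_\Ncal}{\sigma_{\tr}}$ for every $t\geq0$, which is the desired constancy. Equivalently, one may phrase this as: $\Pcal_{*t}$ restricts to (the predual of) a $*$-automorphism of the von Neumann algebra $\Ncal(\Pcal)$ fixing $\sigma_{\tr}$, and relative entropy is invariant under $*$-automorphisms.

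The remaining points are routine: in finite dimension $t\mapsto\Pcal_{*t}(\rho)$ is smooth and all relative entropies appearing are finite — one checks from \eqref{eq_cond_expect} that $\operatorname{supp}\rho\subseteq\operatorname{supp}\rho_\Ncal$, and this inclusion is preserved when $\rho$ is replaced by $\Pcal_{*t}(\rho)$ — so the differentiations above are legitimate, and the entropy production is continuous along the flow. I expect the \emph{main obstacle} to be precisely the constancy of $\Dent{(\Pcal_{*t}(\rho))_\Ncal}{\sigma_{\tr}}$, i.e. making rigorous that the dissipative part of $\Lcal$ is invisible to the decoherence-free marginals and that what remains on $\Ncal(\Pcal)$ is a unitary rotation preserving the reference state $\sigma_{\tr}$; once this is in place, everything else is the standard Grönwall/derivative bookkeeping already familiar from the primitive MLSI and mirrored in the proof of Theorem \ref{theo_speedEIDvar}.
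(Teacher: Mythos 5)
Your proposal is correct and follows essentially the same route as the paper: decompose $\DentDF{\Pcal_{*t}(\rho)}$ via Lemma \ref{lem_DFentropy}, use Spohn's formula for the derivative of $\Dent{\Pcal_{*t}(\rho)}{\sigma_{\tr}}$, show that $\Dent{\Pcal_{*t}(\rho_\Ncal)}{\sigma_{\tr}}$ is constant because the restriction of the dynamics to the decoherence-free part is a unitary conjugation preserving $\sigma_{\tr}$, and conclude by Grönwall in one direction and differentiation at $t=0$ in the other. The only difference is that you verify the constancy by an explicit blockwise computation from the structure of Subsection \ref{sect13}, where the paper simply invokes the unitary group $U_t$ of Equation \eqref{eq_unitary_evolution} and the invariance of relative entropy under simultaneous unitary conjugation — a harmless (and if anything slightly more careful) elaboration of the same step.
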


\begin{rem}\label{rem_DF_MLSI}
Compared to the case of the DF-Poincaré Inequality, the DF-modified log-Sobolev Inequality is equivalent to the exponential decay of the DF-relative entropy. This is explained by the presence of the factor two in the differentiation of the variance, which does not appear in the PI Inequality (see the proof below). 
\end{rem}

\begin{proof}
Spohn's computation of the entropy production given in Equation \eqref{eq_comp_EP} gives:
\[\frac{\partial}{\partial t}\Dent{\Pcal_t(\rho)}{\sigma_{\tr}}=-\EP_\Lcal(\rho)\,.\]
Consequently, to prove that $\text{MLSI}_\Ncal(\alpha)$ implies the exponential decay of the DF relative entropy,  we just need to prove that $\frac{\partial}{\partial t}\Dent{\Pcal_{*t}(\rho_\Ncal)}{\sigma_{\tr}}=0$. This is indeed the case, as
\[\Dent{\Pcal_{*t}(\rho_\Ncal)}{\sigma_{\tr}}=\Dent{U_t\,\rho_\Ncal\,U_t^*}{U_t\,\sigma_{\tr}\,U_t^*}=\Dent{\rho_\Ncal}{\sigma_{\tr}}\,,\]
where $U_t$ is defined in Equation \eqref{eq_unitary_evolution} and where we use the invariance of the relative entropy under simultaneous unitary conjugation of both states.\\
In order to prove that $\text{MLSI}_\Ncal(\alpha)$ is a necessary condition, remark that Inequality \eqref{eqtheoDFentropy2} is assumed to hold for all $t\geq0$ and that equality holds a $t=0$, so that the result follows by differentiating this inequality at $t=0$.
\end{proof}

\begin{ex}[The decoherence QMS]
A simple computation shows that, for the decoherence QMS,
\begin{equation}\label{eq_ex_decoQMS2}
\EP_{\Lcal^\deco}(\rho)=\gamma\left(\Dent{\rho}{\rho_\Ncal}+\Dent{\rho_\Ncal}{\rho}\right)\,.
\end{equation}
Consequently, we obtain that
\[\alpha_\Ncal(\Lcal^\deco)\geq\frac{\gamma}{2}\geq\frac{\lambda(\Lcal^\deco)}{2}\,.\]
This in particular shows that $\alpha_\Ncal(\Lcal)>0$.
\end{ex}

\subsection{Comparison between the constants}\label{sect24}

We now show an extension to the non-primitive case of the well-known comparison between the modified log-Sobolev constant and the spectral gap.

\begin{theo}\label{theocomparconst}
Let $\Pcal$ be a reversible QMS on $\Bcal(\Hcal)$, with generator $\Lcal$. Then the DF-log-Sobolev constant $\alpha_\Ncal(\Lcal)$ and the spectral gap $\lambda(\Lcal)$ satisfy:
\begin{equation}
\alpha_\Ncal(\Lcal)\leq \lambda(\Lcal)\,.
\label{eqtheocomparconst}
\end{equation}
\end{theo}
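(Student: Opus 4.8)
The plan is to follow the classical strategy used by Kastoryano and Temme \cite{KT2013}: show that the modified log-Sobolev inequality, when applied to a family of states approaching $\sigma_{\tr}$, degenerates in the limit to the Poincaré inequality, so that the best log-Sobolev constant cannot exceed the best Poincaré constant. Concretely, for $X\in\Bcal_{\text{sa}}(\Hcal)$ with $E_\Ncal(X)=0$ (it suffices to test on such observables, since both $\Var_\Ncal$ and the Dirichlet form ignore the $\Ncal(\Pcal)$-component), consider the perturbed state
\begin{equation*}
\rho_\eps=\sigma_{\tr}^{1/2}\,(I_\Hcal+\eps X)\,\sigma_{\tr}^{1/2}\,,
\end{equation*}
which is a legitimate density matrix for $\eps$ small enough since $\sigma_{\tr}(X)=0$. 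First I would compute, to second order in $\eps$, the DF-relative entropy $\DentDF{\rho_\eps}$ and the entropy production $\EP_\Lcal(\rho_\eps)$, and check that the first-order terms vanish while the second-order terms reproduce (a multiple of) the DF-variance and the Dirichlet form respectively.

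The key computational steps, in order, are as follows. (i) Using Lemma \ref{lem_DFentropy}, write $\DentDF{\rho_\eps}=\Dent{\rho_\eps}{\sigma_{\tr}}-\Dent{(\rho_\eps)_\Ncal}{\sigma_{\tr}}$. A standard Taylor expansion of the relative entropy around $\sigma_{\tr}$ gives $\Dent{\rho_\eps}{\sigma_{\tr}}=\tfrac{\eps^2}{2}\,\langle X,X\rangle_{\sigma_{\tr}}+o(\eps^2)=\tfrac{\eps^2}{2}\,\Var_{\sigma_{\tr}}(X)+o(\eps^2)$ (the linear term vanishes because $\sigma_{\tr}(X)=0$; the quadratic coefficient is the Kubo--Mori metric, which here coincides with the $\sigma_{\tr}$-scalar product up to the factor). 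Since $(\rho_\eps)_\Ncal=E_{\Ncal*}(\rho_\eps)$ corresponds via Lemma \ref{lemcond}, Equation \eqref{eq_lem_cond}, to the relative density $E_\Ncal(I+\eps X)=I$ (using $E_\Ncal(X)=0$), one gets $(\rho_\eps)_\Ncal=\sigma_{\tr}$, hence $\Dent{(\rho_\eps)_\Ncal}{\sigma_{\tr}}=0$; so $\DentDF{\rho_\eps}=\tfrac{\eps^2}{2}\Var_\Ncal(X)+o(\eps^2)$. (ii) Expand $\EP_\Lcal(\rho_\eps)=-\Tr[\Lcal_*(\rho_\eps)(\log\rho_\eps-\log\sigma_{\tr})]$; the zeroth and first-order terms vanish ($\Lcal_*(\sigma_{\tr})=0$ and $\log\rho_\eps-\log\sigma_{\tr}=O(\eps)$), and the second-order term, after expressing things in the $\sigma_{\tr}$-scalar product and using reversibility $\Lcal=\hat\Lcal$, is $-\eps^2\,\langle X,\Lcal(X)\rangle_{\sigma_{\tr}}+o(\eps^2)=\eps^2\,\Ecal_\Lcal(X)+o(\eps^2)$. (iii) Feed $\rho_\eps$ into $\mathrm{MLSI}_\Ncal(\alpha_\Ncal(\Lcal))$: $2\alpha_\Ncal(\Lcal)\cdot\tfrac{\eps^2}{2}\Var_\Ncal(X)\le \eps^2\Ecal_\Lcal(X)+o(\eps^2)$; divide by $\eps^2$ and let $\eps\to0$ to obtain $\alpha_\Ncal(\Lcal)\,\Var_\Ncal(X)\le \Ecal_\Lcal(X)$ for all $X\in\Bcal_{\text{sa}}(\Hcal)$ with $E_\Ncal(X)=0$, and hence for all $X\in\Bcal_{\text{sa}}(\Hcal)$. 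By definition of $\lambda(\Lcal)$ as the optimal Poincaré constant this yields $\alpha_\Ncal(\Lcal)\le\lambda(\Lcal)$.

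The main obstacle I expect is the careful second-order expansion of $\log\rho_\eps$ (the non-commutative logarithm of $I+\eps X$ conjugated by $\sigma_{\tr}^{1/2}$), and the verification that the resulting quadratic form in the entropy-production expansion really is $-\langle X,\Lcal(X)\rangle_{\sigma_{\tr}}$ rather than some other non-commutative bilinear form. This is exactly the point where reversibility with respect to $\sigma_{\tr}$ is used: it guarantees the symmetry that collapses the various integral representations of the derivative of $\log$ into the single scalar product $\langle\cdot,\cdot\rangle_{\sigma_{\tr}}$ defining the Dirichlet form in Equation \eqref{eq_def_Dirichlet}. A clean way to sidestep most of the heavy computation is to invoke the known primitive-case identities for these two expansions (the entropy production and relative entropy both being twice-differentiable with Hessian at the invariant state equal to the Dirichlet form and the variance form, respectively, as in \cite{KT2013}), and then simply observe that, after subtracting the $\Ncal(\Pcal)$-contributions via Lemmas \ref{lemDFvariance} and \ref{lem_DFentropy}, the same relation between the Hessians survives verbatim because the subtracted terms are themselves exactly matched ($\Var_{\sigma_{\tr}}(E_\Ncal(X))$ against $\Dent{\rho_\Ncal}{\sigma_{\tr}}$, both vanishing to second order on the test family chosen).
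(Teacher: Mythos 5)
Your overall strategy is the same as the paper's: perturb $\sigma_{\tr}$ in a direction $\sigma_{\tr}^{1/2}X\sigma_{\tr}^{1/2}$ with $E_\Ncal(X)=0$, observe that $(\rho_\eps)_\Ncal=\sigma_{\tr}$ so that $\DentDF{\rho_\eps}=\Dent{\rho_\eps}{\sigma_{\tr}}$, expand both sides of $\mathrm{MLSI}_\Ncal$ to second order and let $\eps\to0$. That part is fine. The genuine gap is in your steps (i)--(ii): the Hessian of $\rho\mapsto\Dent{\rho}{\sigma_{\tr}}$ at $\sigma_{\tr}$ is \emph{not} the KMS form $\sca{X}{X}_{\sigma_{\tr}}=\Var_{\sigma_{\tr}}(X)$ but the Kubo--Mori form $\sca{X}{\Theta_{\sigma_{\tr}}(\sigma_{\tr}^{1/2}X\sigma_{\tr}^{1/2})}_{\sigma_{\tr}}$, where $\Theta_{\sigma_{\tr}}(Z)=\int_0^{+\infty}(t+\sigma_{\tr})^{-1}Z\,(t+\sigma_{\tr})^{-1}\,dt$ (this is exactly Lemma \ref{lem_estimation}). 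These two quadratic forms agree only when the perturbation commutes with $\sigma_{\tr}$, e.g.\ when $\sigma_{\tr}=I_\Hcal/d$ (see Remark \ref{rem_inversion} and Remark \ref{rem_theo_comp}). Likewise the second-order term of $\EP_\Lcal(\rho_\eps)$ is $\eps^2\,\Ecal_\Lcal\bigl(X,\Theta_{\sigma_{\tr}}(\sigma_{\tr}^{1/2}X\sigma_{\tr}^{1/2})\bigr)$, not $\eps^2\,\Ecal_\Lcal(X)$. Your parenthetical assertion that the Kubo--Mori metric ``coincides with the $\sigma_{\tr}$-scalar product'' is precisely the false step, and reversibility does not rescue it: reversibility does not collapse the integral representation of the derivative of $\log$ into the KMS inner product.

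What the limit $\eps\to0$ actually yields is
\begin{equation*}
\alpha_\Ncal(\Lcal)\,\sca{X}{\Theta_{\sigma_{\tr}}(\sigma_{\tr}^{1/2}X\sigma_{\tr}^{1/2})}_{\sigma_{\tr}}\;\leq\;\Ecal_\Lcal\bigl(X\,,\,\Theta_{\sigma_{\tr}}(\sigma_{\tr}^{1/2}X\sigma_{\tr}^{1/2})\bigr)\,,
\end{equation*}
and one still needs a separate, non-trivial argument to strip the positive-definite map $Y\mapsto\Theta_{\sigma_{\tr}}(\sigma_{\tr}^{1/2}Y\sigma_{\tr}^{1/2})$ off both sides and land on $\alpha_\Ncal(\Lcal)\,\Var_\Ncal(X)\leq\Ecal_\Lcal(X)$. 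This is the step the paper delegates to the comparison argument of \cite{C-M,KT2013}, and it is there --- not in the Taylor expansion --- that reversibility is really used, since that argument manipulates the two operators $-\Lcal$ and $Y\mapsto\Theta_{\sigma_{\tr}}(\sigma_{\tr}^{1/2}Y\sigma_{\tr}^{1/2})$ as commuting-free self-adjoint positive operators on $(\Bcal(\Hcal),\sca{\cdot}{\cdot}_{\sigma_{\tr}})$. Without this step your chain of inequalities does not close, except in the doubly stochastic case where $\sigma_{\tr}$ is maximally mixed and $\Theta_{\sigma_{\tr}}$ is the identity.
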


We begin by recalling a result whose proof can be found in the two articles \cite{C-M,KT2013}. We give a sketch of the proof for sake of completeness.

\begin{lem}\label{lem_estimation}
Let $\sigma\in\Scal(\Hcal)$ be a faithful invariant state of $\Pcal$ and define, for $\eps>0$, $\rho_\eps=\sigma+\eps\,\sigma_{\tr}^{\frac{1}{2}}\,Y\,{\sigma_{\tr}}^{\frac{1}{2}}$, where $Y\in\Bcal_{\text{sa}}(\Hcal)$ is such that $E_\Ncal(Y)=0$. Then, for $\eps$ small enough, $\rho_\eps$ defines a faithful state and:
\begin{equation}\label{eq_proof_primitive}
\begin{aligned}
&\Dent{\rho_\eps}{\sigma}=\frac{\eps^2}{2}\,\sca{Y}{\Theta_{\sigma}(\sigma_{\tr}^{\frac12}\,Y\,\sigma_{\tr}^{\frac12})}_{\sigma_{\Tr}}+ o (\eps^2)\,,   \\
& \EP_\Lcal(\rho_\eps)=\eps^2\,\Ecal_\Ncal\left(Y\,,\,\Theta_{\sigma}(\sigma_{\tr}^{\frac12}\,Y\,\sigma_{\tr}^{\frac12})\right)+ o (\eps^2)\,, 
\end{aligned}
\end{equation}
where $\Theta_{\sigma}$ is the map $Z\in\Bcal(\Hcal)\mapsto\Theta_{\sigma}(Z)=\int_{0}^{+\infty}\,(t+{\sigma})^{-1}\,Z\,(t+{\sigma})^{-1}\,dt$. Furthermore, $\Theta_{\sigma}$ is a CP map which is positive definite with respect to the weight scalar product $\sca{\cdot}{\cdot}_{\sigma}$.
\end{lem}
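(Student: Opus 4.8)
The plan is to establish Lemma~\ref{lem_estimation} by a second-order Taylor expansion of both the relative entropy and the entropy production around the invariant state $\sigma$, in the direction $V:=\sigma_{\tr}^{1/2}\,Y\,\sigma_{\tr}^{1/2}$. First I would note that $\Tr[V]=\Tr[\sigma_{\tr}\,Y]=\sca{I_\Hcal}{Y}_{\sigma_{\tr}}=\sca{E_\Ncal(I_\Hcal)}{Y}_{\sigma_{\tr}}=\sca{I_\Hcal}{E_\Ncal(Y)}_{\sigma_{\tr}}=0$ by Lemma~\ref{lemcond} and the hypothesis $E_\Ncal(Y)=0$, so that $\rho_\eps$ has trace one; since $\sigma>0$ and $\Hcal$ is finite dimensional, $\rho_\eps>0$ for $\eps$ small, hence it is a faithful state.

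For the relative entropy, I would use the standard fact that $\rho\mapsto \Dent{\rho}{\sigma}$ vanishes together with its first derivative at $\rho=\sigma$ (because $\Tr[V(\log\sigma-\log\sigma)]=0$ and $\Tr[V]=0$), so the expansion starts at order $\eps^2$, with the Hessian given by the standard integral representation of the second derivative of $\rho\mapsto\Tr[\rho\log\rho]$: one has $\Dent{\sigma+\eps V}{\sigma}=\frac{\eps^2}{2}\,\Tr\!\big[V\,\int_0^\infty (t+\sigma)^{-1}V(t+\sigma)^{-1}\,dt\big]+o(\eps^2)=\frac{\eps^2}{2}\,\Tr[V\,\Theta_\sigma(V)]+o(\eps^2)$. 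Rewriting $\Tr[V\,\Theta_\sigma(V)]$ in terms of $Y$: since $V=\sigma_{\tr}^{1/2}Y\sigma_{\tr}^{1/2}$ and $Y$ is selfadjoint, $\Tr[V\,\Theta_\sigma(V)]=\Tr[\sigma_{\tr}^{1/2}\,Y\,\sigma_{\tr}^{1/2}\,\Theta_\sigma(V)]=\sca{Y}{\Theta_\sigma(\sigma_{\tr}^{1/2}Y\sigma_{\tr}^{1/2})}_{\sigma_{\tr}}$, which is the first claimed identity.

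For the entropy production, I would start from Spohn's formula $\EP_\Lcal(\rho)=-\Tr[\Lcal_*(\rho)(\log\rho-\log\sigma)]$ with $\sigma=\sigma_{\tr}$. Writing $\rho_\eps=\sigma_{\tr}+\eps V$ (here taking $\sigma=\sigma_{\tr}$; the general faithful $\sigma$ is handled the same way, the cross terms again vanishing to leading order), we have $\Lcal_*(\rho_\eps)=\eps\,\Lcal_*(V)$ since $\Lcal_*(\sigma_{\tr})=0$, and $\log\rho_\eps-\log\sigma_{\tr}=\eps\,\Theta_{\sigma_{\tr}}(V)+o(\eps)$ by the same integral representation of the derivative of the logarithm. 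Hence $\EP_\Lcal(\rho_\eps)=-\eps^2\,\Tr[\Lcal_*(V)\,\Theta_{\sigma_{\tr}}(V)]+o(\eps^2)$. The final step is to recognize this as $\eps^2\,\Ecal_\Lcal(Y,\Theta_{\sigma_{\tr}}(\sigma_{\tr}^{1/2}Y\sigma_{\tr}^{1/2}))$: by the defining duality $\Tr[\Lcal_*(V)\,W]=\Tr[V\,\Lcal(W)]$ and the adjoint relations between $\Lcal$, $\Lcal_*$ and $\hat\Lcal$ relative to $\sca{\cdot}{\cdot}_{\sigma_{\tr}}$, one has $-\Tr[\Lcal_*(\sigma_{\tr}^{1/2}Y\sigma_{\tr}^{1/2})\,W]=-\sca{Y}{\hat\Lcal(\sigma_{\tr}^{-1/2}W\sigma_{\tr}^{-1/2})}_{\sigma_{\tr}}$, and applying this with $W=\Theta_{\sigma_{\tr}}(V)$ together with the polarized Dirichlet form $\Ecal_\Lcal(A,B)=-\sca{A}{\Lcal(B)}_{\sigma_{\tr}}$ gives the stated expression (on selfadjoint arguments $\Ecal_\Lcal=\Ecal_{\hat\Lcal}$, so this is unambiguous). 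Positivity of $\Theta_{\sigma_{\tr}}$ with respect to $\sca{\cdot}{\cdot}_{\sigma_{\tr}}$, and complete positivity, are immediate from the integral representation, each $(t+\sigma_{\tr})^{-1}\cdot(t+\sigma_{\tr})^{-1}$ being a CP map and $\sca{Z}{\Theta_{\sigma_{\tr}}(Z)}_{\sigma_{\tr}}=\int_0^\infty \Tr[\sigma_{\tr}^{1/2}Z^*\sigma_{\tr}^{1/2}(t+\sigma_{\tr})^{-1}Z(t+\sigma_{\tr})^{-1}]\,dt>0$ for $Z\neq0$.

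The main obstacle is bookkeeping rather than conceptual: one must carefully justify the integral (Fréchet-derivative) formula for the second derivative of $\rho\mapsto\Tr[\rho\log\rho]$ and for the first derivative of the matrix logarithm, control the $o(\eps^2)$ remainders uniformly (straightforward by finite dimensionality and analyticity of the logarithm on a neighborhood of the spectrum of $\sigma$), and track the various adjoint/duality relations so that the expression emerges in exactly the asymmetric form $\Ecal_\Ncal(Y,\Theta_\sigma(\dots))$ with $\Theta$ applied to only one of the two slots. Since the paper only asks for a sketch and cites \cite{C-M,KT2013} for the details, I would present the expansions and the key identity $\Tr[V\Theta_\sigma(V)]=\sca{Y}{\Theta_\sigma(\sigma_{\tr}^{1/2}Y\sigma_{\tr}^{1/2})}_{\sigma_{\tr}}$, and leave the remainder estimates to the references.
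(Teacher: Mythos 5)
Your proposal is correct and takes essentially the same route as the paper: both rest on the integral representation of the logarithm (equivalently, the first and second Fr\'echet derivatives of $\rho\mapsto\Tr[\rho\log\rho]$), the vanishing of the zeroth and first order terms via $\Tr[\sigma_{\tr}^{1/2}Y\sigma_{\tr}^{1/2}]=0$, and the identification of the second-order coefficients with $\frac12\sca{Y}{\Theta_\sigma(\sigma_{\tr}^{1/2}Y\sigma_{\tr}^{1/2})}_{\sigma_{\tr}}$ and the Dirichlet form. One cosmetic remark: the intermediate identity $-\Tr[\Lcal_*(\sigma_{\tr}^{1/2}Y\sigma_{\tr}^{1/2})\,W]=-\sca{Y}{\hat\Lcal(\sigma_{\tr}^{-1/2}W\sigma_{\tr}^{-1/2})}_{\sigma_{\tr}}$ is not right as written (its right-hand side equals $-\Tr[Y\,\Lcal_*(W)]$), but it is also unnecessary, since the duality you state gives directly $-\Tr[\Lcal_*(V)\,\Theta(V)]=-\Tr[V\,\Lcal(\Theta(V))]=-\sca{Y}{\Lcal(\Theta(V))}_{\sigma_{\tr}}=\Ecal_\Lcal\bigl(Y,\Theta(V)\bigr)$.
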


\begin{rem}\label{rem_inversion}\ 
\begin{itemize}
\item The map $Y\mapsto \Theta_{\sigma}(\sigma_{\tr}^{\frac12}\,Y\,\sigma_{\tr}^{\frac12})$ is a signature of the non-commutative nature of the system. Indeed, it maps the identity operator on itself and so will be the case for any operator commuting with $\sigma$. In particular, it is the trivial map for $\sigma=\frac{I_{\Hcal}}{d}$.
\item The map $\Theta_\sigma$ is a particular instance of \emph{inversion of $\sigma$} as defined for instance in \cite{TKRW}. There are many different choices of inversions, all related to different monotone Riemmanian metrics on $\Scal(\Hcal)$. This particular inversion is related to the relative entropy (see \cite{TKRW} and reference therein).
\end{itemize}
\end{rem}

\begin{proof}
Both expansions relies on the integral representation of the logarithm of a semi-definite operator:
\[\log\,X=\int_0^{+\infty}\,\frac1t-\frac{1}{t+X}\,dt\,.\]
Combined with the operator identity $X^{-1}-Y^{-1}=X^{-1}(Y-X)Y^{-1}$, this gives
\[\log\,X-\log\,Y=\int_0^{+\infty}\,\frac{1}{t+Y}(Y-X)\frac{1}{t+X}\,dt\,.\]
Applying this expression with the expansion of $(t+\rho_\eps)^{-1}$ and writing $g=\sigma_{\tr}^{\frac{1}{2}}\,Y\,{\sigma_{\tr}}^{\frac{1}{2}}$, we get
\begin{align*}
\log\,\rho_\eps-\log\sigma
& =\eps\int_0^{+\infty}\,\frac{1}{t+\sigma}\,g\,\frac{1}{t+\sigma}\,dt 
-\eps^2\int_0^{+\infty}\,\frac{1}{t+\sigma}\,g\,\frac{1}{t+\sigma}\,g\,\frac{1}{t+\sigma}\,dt+\mathcal O(\eps^3) \\
& = \eps\,\Theta_\sigma(g)-\eps^2\int_0^{+\infty}\,\frac{1}{t+\sigma}\,g\,\frac{1}{t+\sigma}\,g\,\frac{1}{t+\sigma}\,dt+\mathcal O(\eps^3)\,.
\end{align*}
Taking the expectation of this operator under the state $\rho_\eps$ now gives
\begin{align*}
\Dent{\rho_\eps}{\sigma}
 = \eps\,\Tr[\sigma\,\Theta_\sigma(g)] 
 -\eps^2\,\Tr\big[\sigma\,\int_0^{+\infty}\,\frac{1}{t+\sigma}\,g\,\frac{1}{t+\sigma}\,g\,\frac{1}{t+\sigma}\,dt\big]  
 + \eps^2\,\Tr\big[g\,\Theta_\sigma(g)\big]+\mathcal O(\eps^3)+\mathcal O(\eps^3)\,.
\end{align*}
Evaluating the first term in the right-hand side of the previous expression in an orthonormal basis in which $\sigma$ is diagonal, we get $\Tr[\sigma\,\Theta_\sigma(g)]=\Tr[g]$. Recall that $\sigma_{\tr}^{\frac12}\,E_\Ncal(Y)\,\sigma_{\tr}^{\frac12}=E_{\Ncal*}\left(\sigma_{\tr}^{\frac12}\,Y\,\sigma_{\tr}^{\frac12}\right)$, so that, as $E_{\Ncal}(Y)=0$, $\Tr[g]=\Tr[E_{\Ncal_*}(\sigma_{\tr}^{\frac12}\,Y\,\sigma_{\tr}^{\frac12})]=0$. As in \cite{KT2013}, the two other terms can be evaluated similarly and we get
\[\Dent{\rho_\eps}{\sigma}=\frac{\eps^2}{2}\,\Tr\big[g\,\Theta_\sigma(g)\big]+\mathcal O(\eps^3)\,,\]
which leads to the desired identity when replacing $g$ by $\sigma_{\tr}^{\frac12}\,Y\,\sigma_{\tr}^{\frac12}$. The second order expansion of $\EP_\Lcal(\rho_\eps)$ is obtained similarly.
\end{proof}

We can now proceed to the proof of our theorem.

\begin{proof}[Proof of Theorem \ref{theocomparconst}]
We follow the usual proof, which in the quantum primitive case can be found in \cite{C-M,KT2013}. Let $Y\in\Bcal_{\text{sa}}(\Hcal)$ be such that $Y\in\Ker\,E_\Ncal$. Then, for $\eps$ small enough, $\rho_\eps=\sigma_{\tr}+\sigma_{\tr}^{\frac{1}{2}}\,Y\,\sigma_{\tr}^{\frac{1}{2}}$ is a state and the estimations in \eqref{eq_proof_primitive} give:
\begin{align*}
& \EP_\Lcal(\rho_\eps)=\eps^2\,\Ecal_\Lcal\left(Y\,,\,\Theta_{\sigma_{\tr}}(\sigma_{\tr}^{\frac{1}{2}}\,Y\,\sigma_{\tr}^{\frac{1}{2}})\right)+ o (\eps^2)\,,  \\
&\Dent{\rho_\eps}{\sigma_{\tr}}=\frac{\eps^2}{2}\,\sca{Y}{\Theta_{\sigma_{\tr}}(\sigma_{\tr}^{\frac{1}{2}}\,Y\,\sigma_{\tr}^{\frac{1}{2}})}_{\sigma_{\tr}}+ o (\eps^2)\,,
\end{align*}
as $\Lcal=\hat\Lcal$ as it is reversible. Remark that, as we supposed that $E_\Ncal(Y)=0$ and therefore $E_{\Ncal*}(\sigma_{\tr}^{\frac{1}{2}}\,Y\,\sigma_{\tr}^{\frac{1}{2}})=0$, we have $\Dent{\rho_\eps}{\sigma_{\tr}}=\DentDF{\rho_\eps}$. Consequently, applying the DF log-Sobolev Inequality to $\rho_\eps$ and taking the limit $\eps\to0$ gives:
\[\alpha_\Ncal(\Lcal)\,\sca{Y}{\Theta_{\sigma_{\tr}}(Y)}_{\sigma_{\tr}}\leq\,\Ecal_\Lcal\left(Y\,,\,\Lcal\circ\Theta_{\sigma_{\tr}}(Y)\right)\,.\]
The argument in either \cite{C-M} or \cite{KT2013} allows to remove the operator $\Theta_{\sigma_{\tr}}$ in this inequality, giving:
\[\alpha_\Ncal(\Lcal)\,\norm{Y}^2_{\sigma_{\tr}}\leq\,\Ecal_{\Lcal}(Y)\,.\]
As $E_\Ncal(Y)=0$, $\norm{Y}^2_{\sigma_{\tr}}=\Var_\Ncal(Y)$. For general $X\in\Bcal_{\text{sa}}(\Hcal)$ and as $\Ecal_\Lcal(X)=\Ecal_\Lcal(X-E_\Ncal(X))$, we can just replace $X$ by $X-E_\Ncal(X)$ in the previous inequality. We obtain that $\alpha_\Ncal(\Lcal)$ satisfies the Poincaré Inequality \eqref{eqtheospeedEIDvar1}, so that it is upper bounded by $\lambda(\Lcal)$.
\end{proof}

\begin{rem}\label{rem_theo_comp}
As in the primitive case, Theorem \ref{theocomparconst} still holds without the reversibility assumption in the case where the QMS is doubly stochastic. This comes from the fact that in this case, $\sigma_{\tr}$ is the maximally mixed state so that $\Theta_{\sigma_{\tr}}$ is the trivial map (see the first point of Remark \ref{rem_inversion}).
\end{rem}

\section{Positivity of the modified log-Sobolev constant and comparison between the constants}\label{sect3}

Compared to the spectral gap, it is not straightforward that the DF log-Sobolev constant is non-zero. Even in the primitive case, to ensure this property one need Inequality \eqref{eq_comp_primitive} and thus an additional regularity assumption on the Dirichlet form: the quantum Markov semigroup is called \emph{$\mathbb L_1$-regular} if for all $\rho\in\Scal(\Hcal)$,
\begin{equation}\label{eq_regularity}
\EP_\Lcal(\rho)\geq\,2\,\Ecal_\Lcal\left(\sigma_{\tr}^{-\frac14}\,\rho^{\frac12}\,\sigma_{\tr}^{-\frac14}\right)\,.
\end{equation}

One way to realize that $\alpha_\Ncal(\Lcal)$ is not trivially positive is to remark that, compared to the Dirichlet form and Lemma \ref{coro_DFalgebra}, it is not clear whether the entropy production vanishes if and only if the state undergoes a reversible evolution. That is, we do not know if $\EP_\Lcal(\rho)=0$ if and only if $\rho=E_{\Ncal*}(\rho)$. As important in physics as this result seems, we could not find a general statement on this in the literature (see \cite{SL78} for a proof in the case of Davies Lindbladians). Let define properly this property.
\begin{de}\label{de_prop_EP}
We call \eqref{DFEP} for \emph{Entropy Production Condition} the following property of the entropy production:
\begin{equation}\label{DFEP}\tag{EPC}
\EP(\rho)=0\qquad\text{ iff }\qquad\rho=E_{\Ncal*}(\rho)\,.
\end{equation}
\end{de}
Using Corollary \ref{coro_DFalgebra} and the fact that $\rho=E_{\Ncal*}(\rho)$ if and only if $\sigma_{\tr}^{-\frac12}\,\rho\,\sigma_{\tr}^{-\frac12}\in\Ncal(\Pcal)$, we directly obtain:
\begin{prop}\label{prop_EP}
Assume that $\Pcal$ is $\mathbb L_1$-regular. Then Property \eqref{DFEP} holds.
\end{prop}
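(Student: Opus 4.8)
The plan is to reduce Property \eqref{DFEP} to the positivity of the Dirichlet form established in Corollary \ref{coro_DFalgebra}, using the $\mathbb{L}_1$-regularity inequality \eqref{eq_regularity} as the bridge. The key point is that $\EP_\Lcal(\rho)\ge0$ always (entropy production is nonnegative), and the reverse implication ($\rho=E_{\Ncal*}(\rho)$ implies $\EP_\Lcal(\rho)=0$) is essentially immediate: if $\rho=E_{\Ncal*}(\rho)=\rho_\Ncal$, then along the evolution $\Pcal_{*t}(\rho)$ evolves by the unitary one-parameter group $(U_t)$ of Equation \eqref{eq_unitary_evolution}, so $\Dent{\Pcal_{*t}(\rho)}{\sigma_{\tr}}=\Dent{U_t\rho U_t^*}{U_t\sigma_{\tr}U_t^*}=\Dent{\rho}{\sigma_{\tr}}$ is constant, whence its derivative $-\EP_\Lcal(\rho)$ vanishes. (This is exactly the computation already carried out in the proof of Theorem \ref{theoDFentropy}.)

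For the forward implication, suppose $\EP_\Lcal(\rho)=0$. First I would dispose of the trivial case where $\rho$ is not faithful by a limiting/support argument, or simply note that since $\sigma_{\tr}$ is faithful the relative entropy is finite and the integral representations used below go through; in fact one can restrict attention to faithful $\rho$ since the claimed property is about the state actually being $E_{\Ncal*}(\rho)$. Then the $\mathbb{L}_1$-regularity hypothesis \eqref{eq_regularity} gives
\[
0=\EP_\Lcal(\rho)\ \ge\ 2\,\Ecal_\Lcal\!\left(\sigma_{\tr}^{-\frac14}\,\rho^{\frac12}\,\sigma_{\tr}^{-\frac14}\right)\ \ge\ 0,
\]
so $\Ecal_\Lcal\!\left(\sigma_{\tr}^{-\frac14}\,\rho^{\frac12}\,\sigma_{\tr}^{-\frac14}\right)=0$. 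By Corollary \ref{coro_DFalgebra}, the Dirichlet form vanishes only on $\DF$, hence $\sigma_{\tr}^{-\frac14}\,\rho^{\frac12}\,\sigma_{\tr}^{-\frac14}\in\Ncal(\Pcal)$.

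It remains to translate membership of $\sigma_{\tr}^{-\frac14}\,\rho^{\frac12}\,\sigma_{\tr}^{-\frac14}$ in the DF algebra into the statement $\rho=E_{\Ncal*}(\rho)$. Here I would use the explicit block structure of Subsection \ref{sect13}: by Lemma \ref{lemdeco}, $\DF$ lies in the centralizer of $\sigma_{\tr}$, so $\sigma_{\tr}$ commutes with every element of $\Ncal(\Pcal)$; writing $X=\sigma_{\tr}^{-\frac14}\,\rho^{\frac12}\,\sigma_{\tr}^{-\frac14}\in\Ncal(\Pcal)$, the fact that $X$ commutes with $\sigma_{\tr}$ lets one rearrange $\rho^{\frac12}=\sigma_{\tr}^{\frac14}X\sigma_{\tr}^{\frac14}=\sigma_{\tr}^{\frac12}X$, so $\rho=\sigma_{\tr}^{\frac12}X^2\sigma_{\tr}^{\frac12}$ with $X^2\in\Ncal(\Pcal)$ (as $\Ncal(\Pcal)$ is an algebra). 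Applying Lemma \ref{lem_estimation}-style identities — more precisely Equation \eqref{eq_lem_cond}, which says $\sigma_{\tr}^{\frac12}E_\Ncal(Z)\sigma_{\tr}^{\frac12}=E_{\Ncal*}(\sigma_{\tr}^{\frac12}Z\sigma_{\tr}^{\frac12})$ — with $Z=X^2$ and using $E_\Ncal(X^2)=X^2$ yields $E_{\Ncal*}(\rho)=\sigma_{\tr}^{\frac12}X^2\sigma_{\tr}^{\frac12}=\rho$, as desired. Equivalently and more directly: $\rho=E_{\Ncal*}(\rho)$ iff $\sigma_{\tr}^{-\frac12}\rho\,\sigma_{\tr}^{-\frac12}\in\Ncal(\Pcal)$ (the remark stated just before Proposition \ref{prop_EP}), and $\sigma_{\tr}^{-\frac12}\rho\,\sigma_{\tr}^{-\frac12}=X^2\in\Ncal(\Pcal)$.

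The main obstacle I anticipate is the last translation step: one must be careful that the square root $\rho^{\frac12}$ and the conjugations by $\sigma_{\tr}^{\pm\frac14}$ interact correctly with the algebra structure, which is exactly where the commutation $[\DF,\sigma_{\tr}]=0$ from Lemma \ref{lemdeco} (and the self-adjointness/positivity of the operators involved) is essential; without the special choice of $\sigma_{\tr}$ as reference state this rearrangement would fail. Given the machinery already assembled in the excerpt, however, this is a short verification rather than a genuine difficulty, and the proof should be only a few lines.
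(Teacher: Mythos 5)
Your proposal is correct and follows essentially the same route as the paper: $\mathbb{L}_1$-regularity forces $\Ecal_\Lcal\bigl(\sigma_{\tr}^{-\frac14}\rho^{\frac12}\sigma_{\tr}^{-\frac14}\bigr)=0$, Corollary \ref{coro_DFalgebra} then places $\sigma_{\tr}^{-\frac14}\rho^{\frac12}\sigma_{\tr}^{-\frac14}$ in $\Ncal(\Pcal)$, and the commutation of $\DF$ with $\sigma_{\tr}$ (Lemma \ref{lemdeco}) together with Equation \eqref{eq_lem_cond} gives $\rho=E_{\Ncal*}(\rho)$. The paper states this as a one-line deduction, so your version merely spells out the translation step and the easy reverse implication, both of which check out.
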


We shall prove in Subsection \ref{sect32} that the DF modified log-Sobolev constant is positive whenever the Dirichlet form is weakly $\mathbb L_1$-regular. Before that, we prove in Subsection \ref{sect31} that QMS with "strong" detailed balance condition are strongly $\Lbb_p$-regular.

\subsection{$\Lbb_p$-regularity of the Dirichlet form under strong detailed balance condition}\label{sect31}

Recall the definition from \cite{O-Z,KT2013} of the $p$-Dirichlet form, with $p>1$:
\[\Ecal_{p,\Lcal}(X)=-\frac{p}{2(p-1)}\,\sca{\sigma_{\tr}^{-\frac{1}{2q}}\left|\sigma_{\tr}^{\frac{1}{2p}}\,X\,\sigma_{\tr}^{\frac{1}{2p}}\right|^{p/q}\sigma_{\tr}^{-\frac{1}{2q}}}{\Lcal(X)}_{\sigma_{\tr}}\,.\]
This definition is extended to $p=!$ by considering the limit $p\to1$:
\[\Ecal_{1,\Lcal}(X)=-\frac12\Tr\left[\sigma_{\tr}^{\frac12}\,\Lcal(X)\,\sigma_{\tr}^{\frac12}\left(\log(\sigma_{\tr}^{\frac12}\,X\,\sigma_{\tr}^{\frac12})-\log\sigma_{\tr}\right)\right]\,.\]
Taking $X=\sigma_{\tr}^{-\frac12}\,\rho\,\sigma_{\tr}^{-\frac12}$, we recognize the entropy production:
\[\EP_\Lcal(\rho)=2\,\Ecal_{1,\Lcal}(X)\,.\]
Strong regularity of the Dirichlet form was defined in \cite{O-Z} and latter in \cite{KT2013,C-M} in the following way. We say that the Dirichlet form is \emph{strongly $\Lbb_p$-regular} if for all $p\geq1$ and all $X\in\Bcal_{\text{sa}}(\Hcal)$:
\begin{equation}\label{eq_strong_regularity}
\Ecal_{p,\Lcal}(X)\geq\frac{2}{p}\,\Ecal_{2,\Lcal}\left(\sigma_{\tr}^{-\frac{1}{4}}\left|\sigma_{\tr}^{\frac{1}{2p}}\,X\,\sigma_{\tr}^{\frac{1}{2p}}\right|^{p/2}\sigma_{\tr}^{-\frac{1}{4}}\right)\,.
\end{equation}
The same authors also defines a notion of weak regularity, which is expected to hold when the QMS is not reversible. We say that the Dirichlet form is \emph{weakly $\Lbb_p$-regular} if for all $p\geq1$ and all $X\in\Bcal_{\text{sa}}(\Hcal)$:
\begin{equation}\label{eq_weak_regularity}
\Ecal_{p,\Lcal}(X)\geq\left\lbrace\begin{array}{ccc}
\Ecal_{2,\Lcal}\left(\sigma_{\tr}^{-\frac{1}{4}}\left|\sigma_{\tr}^{\frac{1}{2p}}\,X\,\sigma_{\tr}^{\frac{1}{2p}}\right|^{p/2}\sigma_{\tr}^{-\frac{1}{4}}\right),&\qquad& 1\leq p \leq2\,, \\
(p-1)\,\Ecal_{2,\Lcal}\left(\sigma_{\tr}^{-\frac{1}{4}}\left|\sigma_{\tr}^{\frac{1}{2p}}\,X\,\sigma_{\tr}^{\frac{1}{2p}}\right|^{p/2}\sigma_{\tr}^{-\frac{1}{4}}\right),&\qquad& p\geq2\,.
\end{array}\right.
\end{equation}
Clearly, as $\Ecal_{2,\Lcal}=\Ecal_{\Lcal}$, we see that the $\mathbb L_1$-regularity of $\Pcal$ given by \eqref{eq_regularity} is just the restriction of the weak regularity to the case $p=1$. Strong and weak regularity of the Dirichlet form have already been proved for several classes of primitive QMS, such as doubly stochastic QMS (that is, unital and trace-preserving QMS) \cite{C-M,KT2013,O-Z}, Davies QMS (that arose in the weak-coupling limit) or Liouvillian Lindbladian (of the form $I-\hat T T$ for a unital CP map $T$) \cite{KT2013}. In \cite{KT2013}, the authors conjectured that weak regularity always holds for primitive QMS, and that strong regularity holds if furthermore the QMS is reversible. 

\paragraph{}We prove here the strong $\Lbb_p$ regularity for a large class of QMS. This class actually includes all the examples quoted above. Our proof is based on a representation of the Dirichlet form in terms of a derivation, as studied for instance in \cite{CS03}. We believe that our strategy is somehow more natural than the ones previously used and could shade some light on the general case. Our assumption is the following.
\begin{de}\label{def_DBC}
Let $\Pcal$ be a QMS with Lindbladian $\Lcal$ and with a faithful invariant state. We say that $\Pcal$ has the $\sigma_{\tr}$-DBC property whenever for all $X,Y\in\Bcal(\Hcal)$ and all time $t\geq0$,
\begin{equation}\label{eq_def_DBC}
\Tr[\sigma_{\tr}\,X^*\Pcal_t(Y)]=\Tr[\sigma_{\tr}\,\Pcal_t(X^*)Y]\,,
\end{equation}
that is, whenever $\Pcal$ is selfadjoint with respect to the scalar product $(X,Y)\mapsto \Tr[\sigma_{\tr}\,X^*Y]$.
\end{de}
The $\sigma_{\tr}$-DBC property implies the reversibility of the QMS and so is a stronger condition. It also implies strong constraints on the structure of the Lindbladian, as exposed below.

\paragraph{}We can now state our result.
\begin{theo}\label{theo_Lp_regularity}
Assume that the QMS $\Pcal$ possesses the $\sigma_{\tr}$-DBC property. Then $\Pcal$ is strongly $\Lbb_p$-regular for all $p\geq1$, that is, for all positive semi-definite $X\in\Bcal(\Hcal)$ and all $p\geq1$, Inequality \eqref{eq_strong_regularity} holds.
\end{theo}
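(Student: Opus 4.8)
The plan is to use the rigidity that the $\sigma_{\tr}$-DBC property imposes on the structure of $\Lcal$ in order to express every $p$-Dirichlet form through a single family of modular derivations, and then to reduce Inequality \eqref{eq_strong_regularity} to a scalar two-point inequality between the integral kernels that govern these forms. One may assume $X\geq0$ throughout, so that $A:=\sigma_{\tr}^{1/2p}\,X\,\sigma_{\tr}^{1/2p}\geq0$ and the absolute values in \eqref{eq_strong_regularity} disappear. The first step is the structure theorem for GNS-symmetric Lindbladians \cite{KFGV77,FV07,CM16}: since, by Definition \ref{def_DBC}, $\Lcal$ is self-adjoint for the GNS scalar product $(X,Y)\mapsto\Tr[\sigma_{\tr}\,X^*Y]$, one may choose the jump operators in \eqref{eqlindblad} to be eigenvectors of the modular operator $\Delta_{\sigma_{\tr}}(\,\cdot\,)=\sigma_{\tr}\,\cdot\,\sigma_{\tr}^{-1}$, say $\Delta_{\sigma_{\tr}}(L_k)=e^{-\omega_k}L_k$, with the family $\{L_k\}_k$ stable under $k\mapsto k'$ where $L_{k'}=L_k^*$ and $\omega_{k'}=-\omega_k$, and no Hamiltonian part surviving, so that $\Lcal(X)=\sum_k e^{-\omega_k/2}\big(L_k^*[X,L_k]+[L_k^*,X]L_k\big)$. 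This is the only place where the full strength of $\sigma_{\tr}$-DBC, rather than plain reversibility, is used.

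Next I would set up a derivation representation. Writing $\partial_k X=[L_k,X]$ and using the intertwining relation $\sigma_{\tr}^{s}L_k=e^{-s\omega_k}L_k\sigma_{\tr}^{s}$, one commutes the fractional powers of $\sigma_{\tr}$ occurring in $\Ecal_{p,\Lcal}$ through each $L_k$; substituting $X=\sigma_{\tr}^{-1/2p}A\sigma_{\tr}^{-1/2p}$ and diagonalising $A=\sum_a a\,P_a$, the elementary identity $P_a[L_k,f(A)]P_b=f^{[1]}(a,b)\,P_a[L_k,A]P_b$, with $f^{[1]}(a,b)=\frac{f(a)-f(b)}{a-b}$, makes all the algebra explicit. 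The outcome is a \emph{chain-rule formula for the Dirichlet form}: both $\Ecal_{p,\Lcal}(X)$ and the right-hand side $\Ecal_{2,\Lcal}\big(\sigma_{\tr}^{-1/4}A^{p/2}\sigma_{\tr}^{-1/4}\big)$ become weighted sums over the \emph{same} nonnegative quantities $\Tr\big[(P_a[L_k,A]P_b)^*(P_a[L_k,A]P_b)\big]$, with kernels $\theta_p(\omega_k;a,b)$ and $\widetilde\theta^{(2)}_p(\omega_k;a,b)$ respectively, built out of divided differences of $t\mapsto t^{p-1}$, of $t\mapsto t^{p/2}$ and of $t\mapsto t$, modulated by the factors $e^{\pm\omega_k/2}$. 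This is the generalisation, to all $p$, of the $p=1$ chain rule of \cite{CM16}, which is recovered in the limit $p\to1$ (divided differences of $t^{p-1}$ becoming those of $\log t$, so that $\Ecal_{1,\Lcal}$ reproduces the entropy production of Equation \eqref{eq_comp_EP}).

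Once both sides are expressed over the same family of terms, Inequality \eqref{eq_strong_regularity} reduces to the pointwise estimate
\[
\theta_p(\omega;a,b)\;\geq\;\frac{2}{p}\,\widetilde\theta^{(2)}_p(\omega;a,b)\qquad\text{for all }\omega\in\R,\ a,b\geq0,\ p\geq1,
\]
which is the operator-theoretic incarnation of the elementary two-point inequality comparing $(a-b)(a^{p-1}-b^{p-1})$ with $(a^{p/2}-b^{p/2})^2$. I would prove it by fixing $\omega$ and reducing to a one-variable statement — monotonicity of a logarithmic-mean type function — using operator convexity of $t\mapsto t^{p}$ on $[0,\infty)$ for $p\in[1,2]$ and a separate (or majorisation) argument for $p\geq2$, with $p=1$ following by continuity.

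The hard part is not the structure theorem but the bookkeeping in the middle step: one must check that, after commuting all the fractional powers of $\sigma_{\tr}$ through the modular eigenvectors $L_k$ and pairing the conjugate indices $k$ and $k'$, every cross term indeed assembles into the manifestly nonnegative squared quantity $\Tr[(P_a[L_k,A]P_b)^*(P_a[L_k,A]P_b)]$ against a \emph{single} nonnegative kernel, so that the two Dirichlet forms become comparable term by term. This is precisely where the chain-rule formula must be carried beyond the $p=1$ case of \cite{CM16}, and where one verifies that the resulting scalar kernels satisfy the required inequality uniformly in $\omega_k$ and in the spectrum of $A$.
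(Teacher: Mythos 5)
Your proposal follows essentially the same route as the paper: both rely on the representation of the Dirichlet form as $\sum_j\|\partial_j(\cdot)\|^2$ with modular-eigenvector jump operators from \cite{CM16}, a chain-rule/divided-difference identity for $[V_j,f(A)]$ (your spectral-projection kernels $f^{[1]}(a,b)$ are exactly the concrete form of the paper's $*$-representation $\pi_{X,Y}(\tilde f)$, evaluated at the modularly shifted points $e^{\pm\omega_j/2p}$), and the reduction to the scalar two-point inequality $\left(s^{p/2}-t^{p/2}\right)^2\leq\frac{p^2}{4(p-1)}\,(s^{p-1}-t^{p-1})(s-t)$. The approach is correct and matches the paper's proof.
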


If the QMS satisfies the $\sigma_{\tr}$-DBC condition, it is proved in \cite{CM16} that the Dirichlet form can be written in the following way:
\begin{equation}\label{eq_Dirichlet_DBC}
\Ecal_\Lcal(X,Y)=\sum_{j\in J}\,\sca{\partial_j(X)}{\partial_j(Y)}_{\sigma_{\tr}}\,,
\end{equation}
where $J$ is a finite set and where $\partial_j$ are derivations on $\Bcal(\Hcal)$, so that there exist $V_j\in\Bcal(\Hcal)$ such that
\[\partial_j\,:X\in\Bcal(\Hcal)\,\mapsto\,\partial_j(X)=[V_j,X]\,.\]
Furthermore, the $V_j$ are eigenvectors of the modular operator associated to $\sigma_{\tr}$, which translates to
\[\sigma_{\tr}\,V_j=e^{\omega_j}\,V_j\,\sigma_{\tr}\,,\qquad \omega_j\in\R\,.\]
This implies that for all $p\in\R\backslash\{0\}$,
\begin{equation}\label{eq_valpropre_modulargroupe}
\sigma_{\tr}^{\frac1{2p}}V_j=e^{\omega_j/2p}\,V_j\,\sigma_{\tr}^{\frac1{2p}}\,.
\end{equation}

We shall now built a non-commutative functional calculus associated to the derivations $\partial_j$. This functional calculus can either be seen as a generalization of the one developed in \cite{CS03} (see also \cite{GIS}), in the context of non-commutative Dirichlet form on $\Ccal^*$-algebras, or as a generalization of the chain rule formula in \cite{CM16}. It is based on the following representation of the space of continuous functions on some interval $I$ on $\Bcal(\Hcal)$.\\
\\We denote by $C(I)$ the Banach space of continuous, complex valued functions on $I$, where $I$ is a closed and bounded interval. In our case, $I$ will be a bounded interval containing the spectrum of semi-definite operators. The Banach space $C(I\times I)$ becomes a $*$-algebra when endowed with the involution $f^*:(x,y)\mapsto \overline{f(y,x)}$, $f\in C(I\times I)$. We endowed $\Kcal:=\Bcal(\Hcal)$ with an Hilbert space structure inherited from the Hilbert-Schmidt scalar product:
\[(X,Y)\in\Bcal(\Hcal)\times\Bcal(\Hcal)\mapsto \scal{X}{Y}_{HS}=\Tr[X^*\, Y]\,.\]
For any couple of elements $X,Y\in\Bcal_{\text{sa}}(\Hcal)$, we denote by $\pi_{X,Y}$ the $*$-representation of the $\Ccal^*$ algebra $C(I\times I)$ into $\Bcal(\Kcal)$, where $I$ is a bounded interval containing $\Sp(X)$ and $\Sp(Y)$, uniquely determined by
\[\pi_{X,Y}(f\otimes g)\cdot Z=
f(X)\,Z\, g(Y)\,,\qquad f,g\in C(I),\,Z\in\Kcal\,.
\]
We collect the properties of this $*$-representation in the following lemma. The proof consists just of simple verifications so we omit it.
\begin{lem}\label{lem_functcal}$\pi_{X,Y}$ is indeed a $*$-representation between $\Ccal^*$-algebras, that is:
\begin{enumerate}
\item $\pi_{X,Y}(\Ind)=I_\Kcal$, where $\Ind$ is the constant function on $I\times I$ equal to $1$.
\item $\pi_{X,Y}(f^*g)=\pi_{X,Y}(f)^*\pi_{X,Y}(g)$ for all $f,g\in C(I\times I)$.
\item If $f\in\Ccal(I\times I)$ is a non-negative function, then $\pi_{X,Y}(f)$ is a positive semi-definite operator on $\Kcal$ for the Hilbert-Schmidt scalar product.
\end{enumerate}
\end{lem}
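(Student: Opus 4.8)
The plan is to make $\pi_{X,Y}$ completely explicit by means of the finite–dimensional spectral calculus of $X$ and $Y$, after which all three assertions become routine bookkeeping. \textbf{Step 1 (an explicit formula).} By the spectral theorem write $X=\sum_i x_i\,P_i$ and $Y=\sum_j y_j\,Q_j$, where $\{P_i\}_i$ and $\{Q_j\}_j$ are the spectral projections — pairwise orthogonal, with $\sum_i P_i=\sum_j Q_j=I_\Hcal$ — and $x_i\in\Sp(X)\subset I$, $y_j\in\Sp(Y)\subset I$. For $f,g\in C(I)$ the prescribed identity reads
\[
\pi_{X,Y}(f\otimes g)\cdot Z \;=\; f(X)\,Z\,g(Y)\;=\;\sum_{i,j}f(x_i)\,g(y_j)\,P_i\,Z\,Q_j,\qquad Z\in\Kcal .
\]
Since any $h\in C(I\times I)$ is a uniform limit of linear combinations of elementary tensors $f\otimes g$, this forces the only admissible definition on a general $h$ to be $\pi_{X,Y}(h)\cdot Z=\sum_{i,j}h(x_i,y_j)\,P_i\,Z\,Q_j$, which I would take as the definition of $\pi_{X,Y}$ and then verify it meets the prescription.

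\textbf{Step 2 (well-definedness and uniqueness).} The operators $P_i\,Z\,Q_j$ are mutually orthogonal in $\Kcal$ for $\sca{\cdot}{\cdot}_{HS}$ and $Z=\sum_{i,j}P_i Z Q_j$, whence $\norm{\pi_{X,Y}(h)\,Z}_{HS}^2=\sum_{i,j}|h(x_i,y_j)|^2\,\norm{P_i Z Q_j}_{HS}^2\le\big(\sup_{I\times I}|h|\big)^2\,\norm{Z}_{HS}^2$; so $\pi_{X,Y}(h)\in\Bcal(\Kcal)$ and $h\mapsto\pi_{X,Y}(h)$ is a linear contraction that agrees with the prescribed formula on elementary tensors. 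As the linear span of the elementary tensors is dense in $C(I\times I)$ by the Stone–Weierstrass theorem, $\pi_{X,Y}$ is then the unique continuous linear map with the stated property. This use of Stone–Weierstrass — passing from elementary tensors to all of $C(I\times I)$ — is the one ingredient that is not a bare computation, and in that modest sense is the point requiring care; everything else is mechanical.

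\textbf{Step 3 (the three properties).} Property 1 is immediate: $\pi_{X,Y}(\Ind)\,Z=\sum_{i,j}P_i Z Q_j=Z$. For Property 2, I would first note multiplicativity, which holds on elementary tensors because functions of $X$, respectively of $Y$, commute,
\[
\pi_{X,Y}(f_1\otimes f_2)\,\pi_{X,Y}(g_1\otimes g_2)=\pi_{X,Y}\big((f_1 g_1)\otimes(f_2 g_2)\big)=\pi_{X,Y}\big((f_1\otimes f_2)(g_1\otimes g_2)\big),
\]
and extends to $\pi_{X,Y}(h_1 h_2)=\pi_{X,Y}(h_1)\,\pi_{X,Y}(h_2)$ on all of $C(I\times I)$ by bilinearity, density and continuity; then, computing the Hilbert–Schmidt adjoint by cyclicity of the trace together with $P_i^2=P_i$, $Q_j^2=Q_j$,
\[
\sca{W}{\pi_{X,Y}(h)\,Z}_{HS}=\sum_{i,j}h(x_i,y_j)\,\Tr\big[(P_i\,W\,Q_j)^*\,Z\big]=\sca{\pi_{X,Y}(h^*)\,W}{Z}_{HS},
\]
so that $\pi_{X,Y}(h)^*=\pi_{X,Y}(h^*)$, and combining the two displays gives $\pi_{X,Y}(f^*g)=\pi_{X,Y}(f)^*\pi_{X,Y}(g)$. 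Finally, Property 3 follows from the direct identity
\[
\sca{Z}{\pi_{X,Y}(h)\,Z}_{HS}=\sum_{i,j}h(x_i,y_j)\,\norm{P_i\,Z\,Q_j}_{HS}^2,
\]
which is non-negative for every $Z\in\Kcal$ as soon as $h\ge0$ on $I\times I$ (equivalently, on $\Sp(X)\times\Sp(Y)$), so $\pi_{X,Y}(h)$ is a positive semi-definite operator on $\Kcal$. As the paper notes, none of this is hard; the substance is concentrated in the explicit formula of Step 1 and the density argument of Step 2.
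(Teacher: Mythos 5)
The paper itself omits the proof (``simple verifications''), so there is nothing to compare your argument against; your route --- diagonalize $X=\sum_i x_iP_i$, $Y=\sum_j y_jQ_j$, observe that $\pi_{X,Y}(h)\cdot Z=\sum_{i,j}h(x_i,y_j)P_iZQ_j$ is the unique continuous extension of the prescription on elementary tensors (Stone--Weierstrass), and read the three properties off the fact that the maps $Z\mapsto P_iZQ_j$ are mutually orthogonal self-adjoint idempotents on $\Kcal$ summing to $I_\Kcal$ --- is the natural one, and your verifications of property 1, of boundedness, of multiplicativity and of property 3 are all correct.

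There is, however, one genuine snag in property 2, worth flagging because it is really a defect of the statement rather than of your computation. Your display
\[
\sum_{i,j}h(x_i,y_j)\,\Tr\big[(P_i\,W\,Q_j)^*\,Z\big]=\sca{\pi_{X,Y}(h^*)\,W}{Z}_{HS}
\]
is valid only if $h^*$ denotes pointwise complex conjugation, $h^*(x,y)=\overline{h(x,y)}$. Unwinding the right-hand side with the involution the paper actually fixes on $C(I\times I)$, namely $h^*(x,y)=\overline{h(y,x)}$, yields $\sum_{i,j}h(y_j,x_i)\,\Tr[(P_i\,W\,Q_j)^*\,Z]$, which differs from the left-hand side unless $h$ is symmetric. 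Indeed property 2 fails for that flip involution: take $g=\Ind$ and $f=\iota\otimes\Ind$ with $\iota(x)=x$ and $X=Y$ non-scalar; then $\pi_{X,Y}(f^*g)\cdot Z=Z\,X$ while $\pi_{X,Y}(f)^*\pi_{X,Y}(g)\cdot Z=X\,Z$. So you should either state explicitly that you take the involution on $C(I\times I)$ to be pointwise conjugation --- which is also the choice that makes $C(I\times I)$ a commutative $C^*$-algebra, so that ``$*$-representation between $C^*$-algebras'' becomes literally true --- or prove property 2 only for symmetric functions. Either fix is harmless for the rest of the paper: the only functions later fed into $\pi_{X,Y}$ are the real, symmetric divided differences $\tilde f$, for which the two involutions agree and your argument goes through verbatim.
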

For any function $f\in C(I)$, we write $\tilde f$ the function in $C(I\times I)$ defined by
\[
\tilde f(s,t)=
\left\lbrace
\begin{array}{ccc}
\frac{f(s)-f(t)}{s-t}\quad&\text{if }&s\ne t\,, \\
f'(s)\quad&\text{if }&s=t\,. 
\end{array}\right.\]
The following lemma gives a generalization of the chain rule formula associated to a derivation. This is the central tool in our proof of the strong $\Lbb_p$-regularity.

\begin{lem}\label{lem_NC_funct_calculus}
Let $f\in\Ccal_0^1(I)$ be such that $f(0)=0$, where $\Ccal_0^1(I)$ is the space of derivable functions on the bounded interval $I$ with continuous (and therefore bounded) derivative. For all $V\in\Bcal(\Hcal)$ and all selfadjoint operators $X,Y\in\Bcal_{\text{sa}}(\Hcal)$ such that $\text{sp}(X),\text{sp}(Y)\subset I$, we have
\begin{equation}\label{eq_lem_NC_funct_calculus}
Vf(Y)-f(X)V=\pi_{X,Y}(\tilde f)\cdot(VY-XV)\,.
\end{equation}
\end{lem}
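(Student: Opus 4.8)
The plan is to establish \eqref{eq_lem_NC_funct_calculus} first for monomials, extend it to arbitrary polynomials by linearity, and then pass to a general $f\in\Ccal_0^1(I)$ by uniform approximation. The algebraic fact I would use repeatedly is that, since by Lemma~\ref{lem_functcal} the map $\pi_{X,Y}$ is an algebra homomorphism for the pointwise product on the commutative algebra $C(I\times I)$ and satisfies $\pi_{X,Y}(f\otimes g)\cdot Z=f(X)\,Z\,g(Y)$, right multiplication by $Y$ and left multiplication by $X^n$ of a vector of the form $\pi_{X,Y}(h)\cdot Z$ correspond, at the level of symbols, to multiplying $h(s,t)$ by $t$ and by $s^n$ respectively:
\begin{equation*}
\bigl(\pi_{X,Y}(h)\cdot Z\bigr)\,Y=\pi_{X,Y}\bigl((s,t)\mapsto t\,h(s,t)\bigr)\cdot Z\,,\qquad X^n\,\bigl(\pi_{X,Y}(h)\cdot Z\bigr)=\pi_{X,Y}\bigl((s,t)\mapsto s^n\,h(s,t)\bigr)\cdot Z\,.
\end{equation*}

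For the monomial $f(x)=x$ one has $\tilde f\equiv\Ind$, so $\pi_{X,Y}(\tilde f)=I_\Kcal$ and \eqref{eq_lem_NC_funct_calculus} reduces to $VY-XV=VY-XV$. Then I would proceed by induction: assuming $VY^n-X^nV=\pi_{X,Y}(\widetilde{x^n})\cdot(VY-XV)$ with $\widetilde{x^n}(s,t)=\frac{s^n-t^n}{s-t}$, I split $VY^{n+1}-X^{n+1}V=(VY^n-X^nV)\,Y+X^n\,(VY-XV)$, apply the induction hypothesis to the first summand together with $X^n(VY-XV)=\pi_{X,Y}((s,t)\mapsto s^n)\cdot(VY-XV)$ to the second, and use the two symbol identities above to collect everything as $\pi_{X,Y}\bigl((s,t)\mapsto t\,\widetilde{x^n}(s,t)+s^n\bigr)\cdot(VY-XV)$. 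The scalar identity $t\,\frac{s^n-t^n}{s-t}+s^n=\frac{s^{n+1}-t^{n+1}}{s-t}=\widetilde{x^{n+1}}(s,t)$ (which also holds on the diagonal $s=t$) then closes the induction. Since $f\mapsto f(X)$, $f\mapsto f(Y)$, $f\mapsto\tilde f$ and $\pi_{X,Y}$ are all linear, \eqref{eq_lem_NC_funct_calculus} holds for every polynomial; a nonzero constant term is harmless since for a constant function both sides of \eqref{eq_lem_NC_funct_calculus} vanish.

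To reach a general $f\in\Ccal_0^1(I)$ I would use that $I$ is closed and bounded: by Weierstrass' theorem pick polynomials $q_k\to f'$ uniformly on $I$ and set $p_k(x)=\int_0^x q_k(u)\,du$; since $f(0)=0$ this gives $p_k\to f$ and $p_k'=q_k\to f'$ uniformly on $I$. From the representation $\tilde f(s,t)=\int_0^1 f'\bigl((1-u)t+us\bigr)\,du$, valid on all of $I\times I$ including the diagonal, one gets $\|\widetilde{p_k}-\tilde f\|_{C(I\times I)}\le\|p_k'-f'\|_{C(I)}\to 0$. Since $\pi_{X,Y}$ is a unital $*$-homomorphism of $\Ccal^*$-algebras it is contractive, so $\pi_{X,Y}(\widetilde{p_k})\to\pi_{X,Y}(\tilde f)$ in operator norm and hence $\pi_{X,Y}(\widetilde{p_k})\cdot(VY-XV)\to\pi_{X,Y}(\tilde f)\cdot(VY-XV)$; on the other side, continuity of the functional calculus on selfadjoint operators gives $p_k(X)\to f(X)$, $p_k(Y)\to f(Y)$, hence $Vp_k(Y)-p_k(X)V\to Vf(Y)-f(X)V$. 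Passing to the limit in the polynomial identity yields \eqref{eq_lem_NC_funct_calculus} for $f$.

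The step I expect to demand the most care is the monomial identity — not the computation itself, which telescopes, but setting up cleanly the dictionary ``left/right multiplication of $\pi_{X,Y}(h)\cdot Z$ by operators $\leftrightarrow$ multiplication of the symbol $h$ by monomials in $s$ (for $X$) and in $t$ (for $Y$)'', i.e. keeping track of which of the two arguments of $C(I\times I)$ is fed $X$ and which is fed $Y$. Once that bookkeeping is in place, both the induction and the approximation argument are routine.
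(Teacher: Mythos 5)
Your proof is correct and follows essentially the same route as the paper: the telescoping identity $VY^{n+1}-X^{n+1}V=\sum_{k=0}^{n}X^{k}(VY-XV)Y^{n-k}$ for monomials, linearity for polynomials, and then uniform approximation controlled by the bound $\|\tilde f\|_{\infty}\leq\|f'\|_{\infty}$ together with contractivity of the $*$-representation $\pi_{X,Y}$. If anything, your limiting step is slightly more careful than the paper's bare appeal to Stone--Weierstrass, since you explicitly produce polynomials $p_k$ with $p_k'\to f'$ uniformly by integrating Weierstrass approximants of $f'$, which is exactly the density statement in the $\Ccal_0^1(I)$-norm that the argument needs.
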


\begin{proof}
We first prove Equation \eqref{eq_lem_NC_funct_calculus} for $f:x\mapsto x^n$, where $n\in\Nbb$ is a fixed integer. We use a recursive argument: we have
\begin{align*}
VY^{n+1}-X^{n+1}V
& = \left(VY^n-X^nV\right)Y+X^n\left(VY-XV\right) \\
& = \left(\sum_{k=0}^{n-1}\,X^k(VY-XV)Y^{n-1-k}\right)Y+X^n\left(VY-XV\right) \\
& = \sum_{k=0}^n\,X^k(VY-XV)Y^{n-k}  \\
& = \pi_{X,Y}(\tilde f)\cdot(VY-XV)\,.
\end{align*}
By linearity of the $*$-representation $\pi_{X,Y}$, this also holds for all polynomials $P$ such that $P(0)=0$. We want to apply the Stone-Weierstrass Theorem to conclude. To this end, we endow $\Ccal_0^1(I)$ with a Banach space structure with the norm
\[\norm{f}_{\Ccal_0^1(I)}=\norm{f'}_\infty\,,\qquad f\in\Ccal_0^1(I)\,.\]
Remark that for all $f\in\Ccal_0^1(I)$,
\[\norm{f}_{\Ccal_0^1(I)}=\underset{x,y\in I}{\sup}\,\big\{\frac{f(x)-f(y)}{x-y}\big\}=\norm{\tilde f}_\infty\,.\]
This implies that $\pi_{X,Y}(\tilde f_n)\to \pi_{X,Y}(\tilde f)$ whenever $f,f_n\in\Ccal_0^1(I)$ and where $(f_n)$ is a sequence converging to $f$ in the Banach space $\Ccal_0^1(I)$. This concludes the proof.
\end{proof}

\begin{rem}\ 
\begin{itemize}
\item When $X=Y$, Equation \eqref{eq_lem_NC_funct_calculus} is just the well-known chain rule formula for derivations as studied in \cite{CS03,GIS}. We should simply write $\pi_X$ instead of $\pi_{X,Y}$ in this case.
\item The case where $f(x,y)=\frac{\log x-\log y}{x-y}$ and where $X=e^{-\omega/2}Z$ and $Y=e^{w/2}Z$ for an observable $Z\in\Bcal_{\text{sa}}(\Hcal)$ was studied in \cite{CM16}. However, the authors did not highlight the role of the $*$-representation.
\end{itemize}
\end{rem}

We are now ready to prove Theorem \ref{theo_Lp_regularity}.

\begin{proof}[Proof of Theorem \ref{theo_Lp_regularity}]
From Equation \eqref{eq_valpropre_modulargroupe} it can be directly checked that the two following expressions holds for all $X\in\Bcal(\Hcal)$, all $j\in J$ and all $q,p\geq1$:
\begin{align}
& \partial_j\left(I_{q,p}(X)\right)
= \Gamma_{\sigma_{\Tr}}^{-\frac1q}\left(V_j\,\Gamma_{\sigma_{\Tr}}^{\frac1p}(e^{-w_j/2p}X)^{\frac{p}{q}}-\Gamma_{\sigma_{\Tr}}^{\frac1p}(e^{w_j/2p}X)^{\frac{p}{q}}\,V_j\right)\,, \label{eq_commutation_Vj_1}\\
& \partial_j(X)
=\Gamma_{\sigma_{\Tr}}^{-\frac1p}\left(V_j\,(e^{-w_j/2p}X)-(e^{w_j/2p}X)\,V_j\right)  \label{eq_commutation_Vj_2}\,,
\end{align}
where $\Gamma_{\sigma_{\tr}}^r(X)=\sigma_{\tr}^{\frac r2}\,X\,\sigma_{\tr}^{\frac r2}$ and where $I_{q,p}(X)=\sigma_{\tr}^{-\frac1{2q}}\left(\sigma_{\tr}^{\frac1{2p}}\,X\,\sigma_{\tr}^{\frac1{2p}}\right)^{\frac pq}\sigma_{\tr}^{-\frac1{2q}}$ (remark that the last equation is just the first one applied to $q=p$).
Let $X$ be a positive semi-definite operator on $\Hcal$. We write $Y=\Gamma_{\sigma_{\Tr}}^{\frac{1}{p}}(X)$. We also write $f:x\in\R^+\mapsto x^{p/2}$ and $g:x\in\R^+\mapsto x^{p-1}$. We first exploit the expression of the Dirichlet form in terms of the derivations given by Equation \eqref{eq_Dirichlet_DBC} and Equation \eqref{eq_commutation_Vj_1} to obtain:
\begin{align*}
\Ecal_\Lcal(I_{2,p}(X))
& = \sum_{j\in J}\,\norm{\Gamma_{\sigma_{\Tr}}^{-\frac12}\left(V_j\,(e^{-w_j/2p}Y)^{\frac p2}-(e^{w_j/2p}Y)^{\frac p2}\,V_j\right)}_{2,\sigma_{\Tr}}^2 \\
& = \sum_{j\in J}\,\norm{ \pi_{e^{w_j/2p}Y,e^{-w_j/2p}Y}(\tilde f)\cdot\left(V_j\,(e^{-w_j/2p}Y)-(e^{w_j/2p}Y)V_j\right)}^2_{2,HS} \\
& = \sum_{j\in J}\,\scal{V_j\,(e^{-w_j/2p}Y)-(e^{w_j/2p}Y)V_j}{ \pi_{e^{w_j/2p}Y,e^{-w_j/2p}Y}(\tilde f^2)\cdot\left(V_j\,(e^{-w_j/2p}Y)-(e^{w_j/2p}Y)V_j\right) }_{HS}\,,
\end{align*}
where we use that $\pi_{X,Y}$ is a $*$-representation and that $\tilde f$ is symmetric non-negative, so that $\pi_{X,Y}(\tilde f)^*\pi_{X,Y}(\tilde f)=\pi_{X,Y}(\tilde f^2)$. Recall the well know Inequality (see \cite{Bak94} Proposition 3.1 for instance) for all $s,t\in\R^+$ and all $p\geq1$:
\[\left(s^{p/2}-t^{p/2}\right)^2\leq\frac{p^2}{4(p-1)}\,(s^{p-1}-t^{p-1})(s-t)\,.\]
This inequality can be reinterpreted in terms of $\tilde f$ and $\tilde g$ as:
\[\tilde f^2\leq \frac{p^2}{4(p-1)}\,\tilde g\,.\]
Consequently, by point 3. of Lemma \ref{lem_functcal}, $\pi_{e^{w_j/2p}Y,e^{-w_j/2p}Y}(\tilde f^2)\leq \frac{p^2}{4(p-1)}\pi_{e^{w_j/2p}Y,e^{-w_j/2p}Y}(\tilde g)$ and
\begin{align*}
& \Ecal_\Lcal(I_{2,p}(X)) \\
&\qquad \leq \frac{p^2}{4(p-1)}\,\sum_{j\in J}\,\scal{V_j\,(e^{-w_j/2p}Y)-(e^{w_j/2p}Y)V_j}{\pi_{e^{w_j/2p}Y,e^{-w_j/2p}Y}(\tilde g)\cdot\left(V_j\,(e^{-w_j/2p}Y)-(e^{w_j/2p}Y)V_j\right)}_{HS}\\
&\qquad = \frac{p^2}{4(p-1)}\,\sum_{j\in J}\,\scal{V_j\,(e^{-w_j/2p}Y)-(e^{w_j/2p}Y)V_j}{V_j\,(e^{-w_j/2p}Y)^{p-1}-(e^{w_j/2p}Y)^{p-1}V_j}_{HS}\\
&\qquad = \frac{p^2}{4(p-1)}\,\sum_{j\in J}\,\sca{\partial_j\left(I_{q,p}(X)\right)}{\partial_j (X)}_{\sigma_{\Tr}}=\frac p2\,\Ecal_{p,\Lcal}(X)\,,
\end{align*}
where in the second line we used the chain rule formula of Equation \eqref{eq_lem_NC_funct_calculus} and where in third line we used Equations \eqref{eq_commutation_Vj_1} and \eqref{eq_commutation_Vj_2}. This concludes the proof.
\end{proof}

Taking the limit $p\to1$ in Equation \eqref{eq_strong_regularity} with $X=\sigma_{\tr}^{-\frac12}\,\rho\,\sigma_{\tr}^{-\frac12}$, we obtain the desired strong $\Lbb_1$-regularity.

\begin{coro}\label{coro_regularity}
Let $\Pcal$ be a $\sigma$-DBC QMS on $\Bcal(\Hcal)$. Then for all state $\rho\in\Scal(\Hcal)$,
\begin{equation}
\EP_\Lcal(\rho)\geq\,4\,\Ecal_\Lcal\left(\sigma_{\tr}^{-\frac14}\,\rho^{\frac12}\,\sigma_{\tr}^{-\frac14}\right)\,.
\end{equation}
In particular, property \eqref{DFEP} holds.
\end{coro}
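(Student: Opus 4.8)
The plan is to obtain the displayed inequality as the endpoint case $p\to 1^{+}$ of the strong $\Lbb_p$-regularity established in Theorem~\ref{theo_Lp_regularity}, and then to read off the Entropy Production Condition from the $\Lbb_1$-regularity it expresses. Concretely, I would fix $\rho\in\Scal(\Hcal)$ and set $X=\sigma_{\tr}^{-\frac12}\,\rho\,\sigma_{\tr}^{-\frac12}$, which is positive semi-definite since $\sigma_{\tr}>0$; Theorem~\ref{theo_Lp_regularity} then applies to $X$ and gives, for every $p>1$,
\[\Ecal_{p,\Lcal}(X)\ \geq\ \frac{2}{p}\,\Ecal_{2,\Lcal}\!\left(\sigma_{\tr}^{-\frac14}\left|\sigma_{\tr}^{\frac1{2p}}\,X\,\sigma_{\tr}^{\frac1{2p}}\right|^{p/2}\sigma_{\tr}^{-\frac14}\right).\]

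Next I would let $p\to 1^{+}$ in this inequality. On the left-hand side $\Ecal_{p,\Lcal}(X)\to\Ecal_{1,\Lcal}(X)$ by the definition of $\Ecal_{1,\Lcal}$ as the $p\to1$ limit of the $p$-Dirichlet forms, and $\Ecal_{1,\Lcal}(X)=\tfrac12\,\EP_\Lcal(\rho)$ for this particular $X$. On the right-hand side the prefactor $\tfrac2p$ tends to $2$; the operator $\sigma_{\tr}^{\frac1{2p}}X\sigma_{\tr}^{\frac1{2p}}$ converges in norm to $\sigma_{\tr}^{\frac12}X\sigma_{\tr}^{\frac12}=\rho$ by continuity of $r\mapsto\sigma_{\tr}^{\,r}$; since $t\mapsto|t|^{p/2}$ converges uniformly on the bounded spectrum of these operators to $t\mapsto|t|^{1/2}$, continuity of the functional calculus yields $|\sigma_{\tr}^{\frac1{2p}}X\sigma_{\tr}^{\frac1{2p}}|^{p/2}\to\rho^{1/2}$, and conjugating by the fixed operator $\sigma_{\tr}^{-\frac14}$ together with continuity of the quadratic form $\Ecal_{2,\Lcal}=\Ecal_\Lcal$ shows that the right-hand side converges to $2\,\Ecal_\Lcal\!\left(\sigma_{\tr}^{-\frac14}\,\rho^{\frac12}\,\sigma_{\tr}^{-\frac14}\right)$. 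Passing to the limit gives $\tfrac12\,\EP_\Lcal(\rho)\geq 2\,\Ecal_\Lcal\!\left(\sigma_{\tr}^{-\frac14}\,\rho^{\frac12}\,\sigma_{\tr}^{-\frac14}\right)$, which is exactly the claimed bound.

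For the last assertion I would note that the inequality just obtained is the strong $\Lbb_1$-regularity of $\Pcal$, and, since $\Ecal_\Lcal\geq0$, it implies \emph{a fortiori} the $\Lbb_1$-regularity \eqref{eq_regularity}; hence Property~\eqref{DFEP} follows at once from Proposition~\ref{prop_EP}. One can also see it directly: if $\EP_\Lcal(\rho)=0$ then the inequality forces $\Ecal_\Lcal\!\left(\sigma_{\tr}^{-\frac14}\,\rho^{\frac12}\,\sigma_{\tr}^{-\frac14}\right)=0$, so by Corollary~\ref{coro_DFalgebra} the selfadjoint operator $\sigma_{\tr}^{-\frac14}\,\rho^{\frac12}\,\sigma_{\tr}^{-\frac14}$ lies in $\DF$; since $\DF$ is an algebra contained in the centralizer of $\sigma_{\tr}$ (Lemma~\ref{lemdeco}), a short commutation argument then gives $\sigma_{\tr}^{-\frac12}\,\rho\,\sigma_{\tr}^{-\frac12}\in\DF$, which is equivalent to $\rho=E_{\Ncal*}(\rho)$.

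I do not expect a serious obstacle here, since all the substantive work is already contained in Theorem~\ref{theo_Lp_regularity}; the only point that deserves a moment's care is the legitimacy of interchanging the limit $p\to1^{+}$ with the non-integer functional calculus $|\,\cdot\,|^{p/2}$ on the right-hand side, but in finite dimension this reduces to the uniform convergence of $t\mapsto|t|^{p/2}$ to $t\mapsto|t|^{1/2}$ on a fixed bounded interval and to the continuity of the quadratic form $\Ecal_\Lcal$, both of which are elementary.
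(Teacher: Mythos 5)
Your proposal is correct and follows exactly the paper's (one-line) argument: substitute $X=\sigma_{\tr}^{-\frac12}\,\rho\,\sigma_{\tr}^{-\frac12}$ into the strong $\Lbb_p$-regularity of Theorem~\ref{theo_Lp_regularity} and let $p\to1^{+}$, using $\EP_\Lcal(\rho)=2\,\Ecal_{1,\Lcal}(X)$ to identify the left-hand side. Your added justifications of the limit interchange and the direct verification of \eqref{DFEP} via Corollary~\ref{coro_DFalgebra} and Lemma~\ref{lemdeco} are sound and merely make explicit what the paper leaves implicit.
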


\subsection{Positivity of the DF log-Sobolev constant}\label{sect32}

We now state our main theorem concerning the positivity of the DF log-Sobolev constant.

\begin{theo}\label{theo_DFlogSob}
Let $\Pcal$ be a quantum Markov semigroup on $\Bcal(\Hcal)$, with generator $\Lcal$ and with a faithful invariant state. Assume that its Dirichlet form is $\Lbb_1$-regular, as defined in Equation \eqref{eq_regularity}. Then the modified DF log-Sobolev constant $\alpha_\Ncal(\Lcal)$ is strictly positive:
\begin{equation*}
\alpha_\Ncal(\Lcal)>0\,.
\end{equation*}
\end{theo}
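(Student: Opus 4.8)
The plan is to use the $\mathbb L_1$-regularity together with the Poincaré inequality of Theorem~\ref{theo_speedEIDvar} to reduce the statement to a comparison, on the state space, between the DF-relative entropy and an $\mathbb L_2$-type functional, and then to prove that comparison by a compactness argument. For $\rho\in\Scal(\Hcal)$ put $A_\rho=\sigma_{\tr}^{-1/4}\,\rho^{1/2}\,\sigma_{\tr}^{-1/4}\in\Bcal_{\text{sa}}(\Hcal)$. The $\mathbb L_1$-regularity \eqref{eq_regularity} gives $\EP_\Lcal(\rho)\geq 2\,\Ecal_\Lcal(A_\rho)$, and since $\Hcal$ is finite dimensional the spectral gap $\lambda(\Lcal)$ is strictly positive (Theorem~\ref{theo_speedEIDvar} and Corollary~\ref{coro_DFalgebra}), so $\Ecal_\Lcal(A_\rho)\geq\lambda(\Lcal)\,\Var_\Ncal(A_\rho)$. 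Hence
\[
\alpha_\Ncal(\Lcal)\;\geq\;\lambda(\Lcal)\,\inf_{\rho\in\Scal(\Hcal)}\frac{\Var_\Ncal(A_\rho)}{\DentDF{\rho}}\,,
\]
and it suffices to prove that this infimum is strictly positive.

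Next I would record that both $\rho\mapsto\Var_\Ncal(A_\rho)$ and $\rho\mapsto\DentDF{\rho}$ are continuous and bounded on the compact set $\Scal(\Hcal)$ — for the entropy, the support of $\rho_\Ncal$ always contains that of $\rho$ by the block form \eqref{eq_cond_expect}, so it is finite and continuous, and for the variance $\Var_\Ncal(A_\rho)\leq\Var_{\sigma_{\tr}}(A_\rho)\leq 1$ by Remark~\ref{rem_DFvariance} — and, crucially, that the two functionals have the \emph{same} zero set. For the DF-relative entropy this set is $\Fcal:=\{\rho\in\Scal(\Hcal):\rho=E_{\Ncal*}(\rho)\}$ by definition; for the variance one checks, using \eqref{eqtheostructlind1}--\eqref{eqtheostructlind2} and the explicit form \eqref{eq_lem_sigmatr} of $\sigma_{\tr}$, that $A_\rho\in\Ncal(\Pcal)$ if and only if $\rho=\bigoplus_{i}\rho_{\Hcal_i}\otimes\tau_i$, i.e. if and only if $\rho\in\Fcal$. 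If the infimum vanished there would be a sequence $\rho_n$ with $\Var_\Ncal(A_{\rho_n})/\DentDF{\rho_n}\to0$; by compactness $\rho_n\to\rho_\infty$, and since the numerator is bounded the denominator tends to $0$, whence $\rho_\infty\in\Fcal$. Everything therefore reduces to the local statement that the ratio stays bounded below as $\rho$ approaches $\Fcal$.

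To settle the local statement near a \emph{faithful} $\omega\in\Fcal$ I would use a second-order expansion. The transverse directions at $\omega$ are parametrized by $\rho_\eps=\omega+\eps\,\sigma_{\tr}^{1/2}Y\sigma_{\tr}^{1/2}$ with $E_\Ncal(Y)=0$ (then $\rho_{\eps,\Ncal}=\omega$, so that $\DentDF{\rho_\eps}=\Dent{\rho_\eps}{\omega}$), and the relative-entropy expansion of Lemma~\ref{lem_estimation}, applied with the faithful state $\omega$ in place of $\sigma$, gives $\DentDF{\rho_\eps}=\tfrac{\eps^2}{2}\,\sca{Y}{\Theta_{\omega}(\sigma_{\tr}^{1/2}Y\sigma_{\tr}^{1/2})}_{\sigma_{\tr}}+o(\eps^2)$ with $\Theta_\omega$ positive definite. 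On the other hand $\rho\mapsto A_\rho$ is a local diffeomorphism onto an open set of self-adjoint operators, under which $\Fcal$ corresponds to the affine slice $\Ncal(\Pcal)$, and $\Var_\Ncal(A_\rho)=\norm{(I-E_\Ncal)A_\rho}_{2}^2$ becomes the squared distance to $\Ncal(\Pcal)$, whose Hessian is the projection onto $\Ncal(\Pcal)^\perp$ — positive definite on the transverse directions. Two positive-definite quadratic forms on a finite-dimensional space are comparable, with a constant depending continuously on $\omega$; compactness of $\Fcal$ then produces a single $c>0$ valid along all of $\Fcal$, giving $\alpha_\Ncal(\Lcal)\geq c\,\lambda(\Lcal)>0$.

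The real obstacle is the local analysis at the \emph{boundary} of $\Fcal$, i.e. at states $\omega\in\Fcal$ that are not faithful (these occur as soon as some $\Hcal_i$ has dimension $>1$), where $\rho\mapsto\rho^{1/2}$ is not differentiable and Lemma~\ref{lem_estimation} does not apply verbatim. Here I would argue that positivity of $\rho$ forces any admissible perturbation of $\omega$ transverse to $\Fcal$ to be of second order in the distance to $\omega$, while the relevant directional derivatives of $\rho\mapsto\sigma_{\tr}^{-1/4}\rho^{1/2}\sigma_{\tr}^{-1/4}$ nonetheless converge to a non-degenerate limit along those directions, so that $\Var_\Ncal(A_\rho)$ does not become negligible compared with $\DentDF{\rho}$; turning this into a quantitative estimate, rather than appealing to a smooth Hessian comparison, is where the technical work concentrates. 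An alternative is to exploit the block structure and reduce, factor by factor, to the primitive case and the known positivity of the primitive modified log-Sobolev constant; that route avoids the boundary analysis but relies on Gross' hypercontractivity/$\alpha_2$ machinery, which the present argument is precisely designed to do without.
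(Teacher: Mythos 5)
Your overall strategy runs parallel to the paper's: both use the $\Lbb_1$-regularity to replace the entropy production by the Dirichlet form of $A_\rho=\sigma_{\tr}^{-1/4}\rho^{1/2}\sigma_{\tr}^{-1/4}$, invoke the positivity of the spectral gap, and then perform a second-order analysis around the set $\Fcal$ of decoherence-free states together with a compactness argument. The packaging differs in a useful way: by inserting the DF-Poincar\'e inequality at the outset you decouple the dynamical input from a purely static comparison $\inf_\rho \Var_\Ncal(A_\rho)/\DentDF{\rho}>0$, which is closer in spirit to the primitive-case chain \eqref{eq_comp_primitive}. The paper instead keeps $\EP_\Lcal$ and compares the second-order coefficients of $\EP_\Lcal(\rho_\eps)$ and $\DentDF{\rho_\eps}$ along rays $\rho_\eps=\sigma+\eps\,\sigma_{\tr}^{1/2}Y\sigma_{\tr}^{1/2}$, using the $\Lbb_1$-regularity to bound the former below by $2\,\Ecal_\Lcal(Z)$ with $Z=\sigma_{\tr}^{-1/4}\bigl[\pi_\sigma(\tilde f)\cdot(\sigma_{\tr}^{1/2}Y\sigma_{\tr}^{1/2})\bigr]\sigma_{\tr}^{-1/4}$, $f(x)=\sqrt x$, and then Corollary \ref{coro_DFalgebra} plus the explicit inversion $Y=ZX^{1/2}+X^{1/2}Z$ to see that this vanishes only for $Y\in\DF$. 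Your ``local diffeomorphism onto the linear slice $\Ncal(\Pcal)$'' observation is this inversion in disguise, and at faithful base points the two arguments are equivalent; your identification of the common zero set of the two functionals is also correct.

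The genuine gap is the one you flag yourself: uniformity of the lower bound as the base point $\omega\in\Fcal$ degenerates to a non-faithful state. Your step ``compactness of $\Fcal$ then produces a single $c>0$'' does not go through as written, because the Hessian-comparison constant is defined and continuous only on the \emph{faithful} part of $\Fcal$, which is open in $\Fcal$ and not compact: as $\omega$ approaches the boundary, $\Theta_\omega$ blows up (so the transverse Hessian of $\DentDF{\cdot}$ is not uniformly bounded above), and $\rho\mapsto\rho^{1/2}$ ceases to be differentiable (so the lower bound on the transverse Hessian of $\Var_\Ncal(A_\cdot)$ is not uniform either). You correctly locate this as the crux but supply no quantitative estimate, and your fallback through the primitive $\alpha_2$/hypercontractivity machinery is precisely what the theorem is designed to avoid. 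It is worth noting that the paper's own proof treats the same point rather lightly --- it restricts to faithful $\sigma\in\Ncal_*(\Pcal)$ ``by density'' and asserts via compactness that $\eta$ and $C$ can be chosen uniformly --- so you have put your finger on the delicate step; but as a self-contained argument your proposal does not close it.
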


\begin{proof}
We write $\Ncal_*(\Pcal)$ the image of $E_{\Ncal*}$. That is, $\sigma\in\Ncal_*(\Pcal)$ iff $E_{\Ncal*}(\sigma)=\sigma$. By definition, the modified DF log-Sobolev constant is given by
\[\alpha_\Ncal(\Lcal)=\underset{\rho\in\Scal(\Hcal)\,\backslash\,\Ncal_*(\Pcal)}{\inf}\,\frac{\EP_\Lcal(\rho)}{2\,\Dent{\rho}{\sigma}}\,.\]
We proceed by contradiction and we assume that $\alpha_\Ncal(\Lcal)=0$. Then, by Property \eqref{DFEP} (which is induced by the $\Lbb_1$ regularity) and by compactness of $\Scal(\Hcal)$, it means that for all $\eta>0$, there exists a state $\rho\in\Scal(\Hcal)\,\backslash\,\Ncal_*(\Pcal)$ such that $\norm{\rho-E_{\Ncal*}(\rho)}_1\leq\eta$ and
\[\frac{\EP_\Lcal(\rho)}{2\,\DentDF{\rho}}\leq\eta\,.\]
We claim that this is a contradiction. Indeed, we shall prove the following fact: there exists $\eta>0$ and $C>0$ such that for all $\rho\in\Scal(\Hcal)$, $\norm{\rho-E_{\Ncal*}(\rho)}_1\leq\eta$ implies that 
\[\frac{\EP_\Lcal(\rho)}{2\,\DentDF{\rho_\eps}}>C\,.\]
More precisely, we only need to show that for all faithful $\sigma\in\Ncal_*(\Pcal)$, there exists $\eta>0$ and $C>0$ depending on $\sigma$ such that for all $Y\in\Bcal_{\text{sa}}(\Hcal)\,\backslash\,\DF$ with $E_\Ncal(Y)=0$, defining $\rho_\eps=\sigma+\eps\,\sigma_{\tr}^{\frac12}\,Y\,\sigma_{\tr}^{\frac12}$ as in Lemma \ref{lem_estimation}, we have for all $\eps\leq\eta$:
\begin{equation}\label{eq_proof_raybis}
\frac{\EP_\Lcal(\rho_\eps)}{2\,\DentDF{\rho_\eps}}>C\,.
\end{equation}
Indeed, invoking again the compactness of $\Ncal_*(\Hcal)$, $\eta$ and $C$ can be chosen independently of $\sigma$. The restriction to faithful states is not a restriction since the set of invertible operators is dense in $\Scal(\Hcal)$. Then, for any faithful $\rho\in\Scal(\Hcal)\,\backslash\,\Ncal_*(\Pcal)$ such that $\norm{\rho-E_{\Ncal*}(\rho)}_1\leq\eta$, we can set $\sigma=E_{\Ncal*}(\rho)$ and $Y=\frac{1}{\eta}\left(\sigma^{-\frac12}\,\rho\,\sigma^{-\frac12}-I_\Hcal\right)$, so that $E_\Ncal(Y)=0$ and $\rho=\sigma+\eta\,\sigma_{\tr}^{\frac12}\,Y\,\sigma_{\tr}^{\frac12}$. Consequently, Inequality \eqref{eq_proof_raybis} holds, a contradiction with the fact that $\alpha_\Ncal(\Lcal)=0$.\\
We now prove our claim. Take a faithful $\sigma\in\Ncal_*(\Pcal)$, $Y\in\Bcal_{\text{sa}}(\Hcal)\backslash\Ncal_*(\Pcal)$ with $E_\Ncal(Y)=0$ and define $\rho_\eps$ as above. By Lemma \ref{lem_estimation},
\begin{equation}\label{eq_proof_ray}
\frac{\EP_\Lcal(\rho_\eps)}{2\,\DentDF{\rho_\eps}}\overset{\eps\to0}{\longrightarrow}\frac{\Ecal_\Lcal\left(Y\,,\,\Theta_{\sigma}(\sigma_{\tr}^{\frac12}\,Y\,\sigma_{\tr}^{\frac12})\right)}{\sca{Y}{\Theta_{\sigma}(\sigma_{\tr}^{\frac12}\,Y\,\sigma_{\tr}^{\frac12})}_{\sigma_{\Tr}}}\,.
\end{equation}
Both the numerator and the denominator in the right-hand side are the first possibly non-zero term in the Taylor expansion of a positive quantity, so that there are both non-negative. The denominator is furthermore easily seen to be positive as it is equal to $\norm{\Theta_{\sigma}^{1/2}(\sigma_{\tr}^{\frac12}\,Y\,\sigma_{\tr}^{\frac12})}_{2,\sigma_{\Tr}}$, and $\Theta_\sigma^{1/2}$ is a positive definite operator. Thus, it remains to prove that $\Ecal_\Lcal\left(Y\,,\,\Theta_{\sigma}(\sigma_{\tr}^{\frac12}\,Y\,\sigma_{\tr}^{\frac12})\right)>0$. In order to prove it, we shall expand the right-hand side of the $\Lbb_1$-regularity condition \eqref{eq_regularity} up to second order, with $\rho=\rho_\eps$, starting with the expansion of $\rho_\eps^{\frac12}$. Recall the integral representation
\[\sqrt x=\frac1\pi\,\int_0^{+\infty}\sqrt t\,(\frac 1t-\frac {1}{t+x})\,dt\,,\]
which together with the operator identity $X^{-1}-Y^{-1}=X^{-1}(X-Y)Y^{-1}$ gives
\[\rho_\eps^{\frac12}=\sigma^{\frac12}-\frac{\eps}{\pi}\,\int_0^{+\infty}\sqrt t\, \left(\frac{1}{t+\sigma}\,\sigma_{\tr}^{\frac12}\,Y\,\sigma_{\tr}^{\frac12}\,\frac{1}{t+\sigma}\right)dt+\mathcal O(\eps^2)\,.\]
Remark that, as $\frac{\sqrt x-\sqrt y}{x-y}=\frac{1}{\pi}\int_0^{+\infty}\frac{\sqrt t\,dt}{(t+x)(t+y)}$, we have
\[\frac1\pi\int_0^{+\infty}\sqrt t\, \left(\frac{1}{t+\sigma}\,\sigma_{\tr}^{\frac12}\,Y\,\sigma_{\tr}^{\frac12}\,\frac{1}{t+\sigma}\right)dt
=\pi_\sigma(\tilde f)\cdot(\sigma_{\tr}^{\frac12}\,Y\,\sigma_{\tr}^{\frac12})\,,\]
where $f:x\mapsto \sqrt x$ and where $\pi_\sigma$ is the $*$-representation introduced in Subsection \ref{sect31}. Write $\sigma=X\sigma_{\Tr}$, where $X\in\DF$ is necessarily a positive definite operator as $\sigma>0$. We thus have $\sigma_{\tr}^{-\frac14}\,\sigma^{\frac12}\,\sigma_{\tr}^{-\frac14}=X$. Then, using that $\Ecal_\Lcal(X,Z)=\Ecal_\Lcal(Z,X)=0$ for all $Z\in\Bcal(\Hcal)$ as $X\in\DF$, we get:
\[\Ecal_\Lcal(\sigma_{\tr}^{-\frac14}\,\rho_\eps^{\frac12}\,\sigma_{\tr}^{-\frac14})=\eps^2\,\Ecal_\Lcal\left(\sigma_{\tr}^{-\frac14}\,\big[\pi_\sigma(\tilde f)\cdot(\sigma_{\tr}^{\frac12}\,Y\,\sigma_{\tr}^{\frac12})\big]\,\sigma_{\tr}^{-\frac14}\right)+\mathcal O(\eps^3)\,.\]
By the $\Lbb_1$ regularity \eqref{eq_regularity} and Equation \eqref{eq_proof_primitive} in Lemma \ref{lem_estimation}, we get
\[2\,\Ecal_\Lcal\left(\sigma_{\tr}^{-\frac14}\,\big[\pi_\sigma(\tilde f)\cdot(\sigma_{\tr}^{\frac12}\,Y\,\sigma_{\tr}^{\frac12})\big]\,\sigma_{\tr}^{-\frac14}\right)
\leq\Ecal_\Lcal\left(Y\,,\,\Theta_{\sigma}(\sigma_{\tr}^{\frac12}\,Y\,\sigma_{\tr}^{\frac12})\right)\,.\]
Define $Z=\sigma_{\tr}^{-\frac14}\,\big[\pi_\sigma(\tilde f)\cdot(\sigma_{\tr}^{\frac12}\,Y\,\sigma_{\tr}^{\frac12})\big]\,\sigma_{\tr}^{-\frac14}$. If the right-hand side of the above inequality is equal to $0$, by Corollary \ref{coro_DFalgebra} we know that $Z\in\DF$. Furthermore, we have
\[\sigma_{\tr}^{\frac12}\,Y\,\sigma_{\tr}^{\frac12}=\pi_\sigma(\frac{1}{\tilde f})\cdot(\sigma_{\tr}^{\frac14}\,Z\,\sigma_{\tr}^{\frac14})\,.\]
As $(\frac1{\tilde f})(x,y)=\sqrt x+\sqrt y$ and using that $\sigma_{\tr}$ commutes with $X,Z\in\DF$, we finally get $Y=ZX^{\frac12}+X^{\frac12}Z\in\DF$, a contradiction. We thus obtain that $\Ecal_\Lcal\left(Y\,,\,\Theta_{\sigma}(\sigma_{\tr}^{\frac12}\,Y\,\sigma_{\tr}^{\frac12})\right)>0$. The proof is concluded by taking for instance
\[C=\frac{1}{2}\,\frac{\Ecal_\Lcal\left(Y\,,\,\Theta_{\sigma}(\sigma_{\tr}^{\frac12}\,Y\,\sigma_{\tr}^{\frac12})\right)}{\sca{Y}{\Theta_{\sigma}(Y)}_{\sigma}}>0\,,\]
so that there exists $\eta>0$ such that Inequality \eqref{eq_proof_raybis} holds for all $\eps\leq\eta$.
\end{proof}

\section{Two particular situations}\label{sect4}

Our goal in this section is to convince the reader that non-primitive QMS are more interesting (and less trivial) objects than their classical counterpart. We shall indeed highlight two manifestations of quantum phenomenons in our framework, the first one being the existence of quantum correlations in bipartite systems (Subsection \ref{sect41}) and the second one being quantum coherence (Subsection \ref{sect42}).

\subsection{The case where the Decoherence-Free algebra is a factor}\label{sect41}

In this subsection, we focus on the case where $\Hcal=\Hcal_A\otimes\Hcal_B$ and where $\DF=\Bcal(\Hcal_A)\otimes I_{\Hcal_B}$. We denote by $d_A$ and $d_B$ the respective dimensions of $\Hcal_A$ and $\Hcal_B$ and we assume that $d_B>1$. We write $H_A$ the Hamiltonian associated to the unitary evolution on system $A$ and $L$ the Lindbladian of the primitive Markovian evolution on system $B$, with unique faithful invariant state $\tau\in\Scal(\Hcal_B)$. Consequently, the evolution of the whole system is given by a QMS $\Pcal$ with generator
\begin{equation}\label{eq_lind_factor}
\Lcal(\cdot)=i[H_A\otimes I_{\Hcal_A},\cdot]+I\otimes L(\cdot)\,.
\end{equation}
The reference state is simply given by $\sigma_{\tr}=\frac{I_{\Hcal_A}}{d_A}\,\otimes\,\tau$, so that the conditional expectations are given by:
\[E_\Ncal(X)=\Tr_\tau[X]\otimes I_{\Hcal_B},\qquad E_{\Ncal*}(\rho)=\rho_A\otimes\tau\,,\]
where $\rho_A$ denotes the partial trace of the state $\rho$ with respect to $\Hcal_B$. Similarly, we write $\rho_B=\Tr_{\Hcal_A}[\rho]$. We recall that the modified log-Sobolev constant $\alpha_1(L)$ is defined as the best constant such that Inequality \eqref{eqdeflogsob1} holds. As $\Ncal(P)=\C I_{\Hcal_B}$, it is also true that $\alpha_1(L)=\alpha_\Ncal(L)$ (actually, everything that follows still holds if $P$ is not primitive and $\alpha_1(L)$ is replaced by $\alpha_\Ncal(L)$).

\begin{theo}\label{theo_complogsobolev}
With the above notations:
\begin{enumerate}
\item$\alpha_\Ncal(\Lcal)\leq\alpha_1(L)$.
\item Moreover, $P$ satisfies the \eqref{DFEP} if and only if $\Pcal$ does.
\item Consequently, $\alpha_\Ncal(\Lcal)>0$ if and only if $\alpha_1(L)>0$.
\end{enumerate}
\end{theo}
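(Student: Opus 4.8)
\noindent The plan is to reduce the three assertions to a single computation comparing the quantities of $\Pcal$ with those of $L$. First note that the Hamiltonian term $i[H_A\otimes I_{\Hcal_B},\cdot\,]$ acts inside $\DF=\Bcal(\Hcal_A)\otimes I_{\Hcal_B}$, so it affects neither $E_\Ncal$, $E_{\Ncal*}$, $\DentDF{\cdot}$, nor (since $[H_A\otimes I_{\Hcal_B},\log\sigma_{\tr}]=0$ and $[\rho,\log\rho]=0$ in $\EP_\Lcal(\rho)=-\Tr[\Lcal_*(\rho)(\log\rho-\log\sigma_{\tr})]$) the entropy production; hence $\alpha_\Ncal(\Lcal)$ and the validity of \eqref{DFEP} for $\Pcal$ are the same as when $H_A=0$, and I take $\Lcal=I_{\Hcal_A}\otimes L$ from now on. With $S(\cdot)$ the von Neumann entropy, $\rho_A,\rho_B$ the marginals of $\rho$, $\IM{\rho}=S(\rho_A)+S(\rho_B)-S(\rho)$, $\rho_t=(I_{\Hcal_A}\otimes e^{tL_*})(\rho)$ and $D(\rho)=-\tfrac{\partial}{\partial t}\IM{\rho_t}\big|_{t=0}$, the heart of the argument is the pair of identities
\[
\DentDF{\rho}=\IM{\rho}+\Dent{\rho_B}{\tau},\qquad\qquad\EP_\Lcal(\rho)=D(\rho)+\EP_L(\rho_B),
\]
whose four summands are all $\geq0$ ($D(\rho)\geq0$ by monotonicity of the mutual information under the local channel $I_{\Hcal_A}\otimes e^{tL_*}$). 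These follow from $(\rho_t)_A=\rho_A$ (constant, since $I_{\Hcal_A}\otimes L_*$ annihilates the $A$-marginal) and $(\rho_t)_B=e^{tL_*}(\rho_B)$, by expanding $\Dent{\rho_t}{\sigma_{\tr}}$ (with $\sigma_{\tr}=\tfrac{I_{\Hcal_A}}{d_A}\otimes\tau$) in terms of $S(\rho_t)$, $S(\rho_A)$, $S((\rho_t)_B)$, $\Tr[(\rho_t)_B\log\tau]$: the second is its derivative at $t=0$, the first is the value at $t=0$ combined with Lemma \ref{lem_DFentropy}.

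Part 1 and Part 2 then fall out. Evaluating the identities on $\rho=\tfrac{I_{\Hcal_A}}{d_A}\otimes\rho_B$ gives $\IM{\rho}=0$, $D(\rho)=0$ (the trajectory stays a product), so $\EP_\Lcal(\rho)/(2\DentDF{\rho})=\EP_L(\rho_B)/(2\Dent{\rho_B}{\tau})$, and the infimum over $\rho_B$ yields $\alpha_\Ncal(\Lcal)\leq\alpha_1(L)$ (Part 1). If $\Pcal$ satisfies \eqref{DFEP} and $\EP_L(\rho_B)=0$, this test state has $\EP_\Lcal(\rho)=0$, hence $\rho=\rho_A\otimes\tau$, i.e. $\rho_B=\tau$; conversely, if $P$ satisfies \eqref{DFEP} and $\EP_\Lcal(\rho)=0$, then $D(\rho)=0$ and $\EP_L(\rho_B)=0$, so $\rho_B=\tau$, whence $(\rho_t)_B=\tau$ for all $t$ and $g(t):=\IM{\rho_t}=\Dent{\rho_t}{\rho_A\otimes\tau}$ is the relative entropy of a QMS trajectory to the invariant state $\rho_A\otimes\tau$; as such $g$ is non-increasing, convex, and $\to0$ (because $L$ is primitive), with $g'(0)=-D(\rho)=0$, hence $g\equiv0$, so $\IM{\rho}=0$ and $\rho=\rho_A\otimes\tau$. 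This proves Part 2; the convexity of $t\mapsto\Dent{\rho_t}{\rho_A\otimes\tau}$ is the one step I would spell out with care (or replace by strictness of data processing for $I_{\Hcal_A}\otimes e^{tL_*}$, $t>0$).

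Finally, $\alpha_\Ncal(\Lcal)>0\Rightarrow\alpha_1(L)>0$ is immediate from Part 1. For the converse, $\alpha_1(L)>0$ makes $P$, hence (Part 2) $\Pcal$, satisfy \eqref{DFEP}, and also gives $\EP_L(\rho_B)\geq2\alpha_1(L)\Dent{\rho_B}{\tau}$; feeding this into the second identity and using $\tfrac{a+b}{c+d}\geq\min(\tfrac ac,\tfrac bd)$ (with $\IM{\rho}=0$ and $\Dent{\rho_B}{\tau}=0$ treated directly) gives $\EP_\Lcal(\rho)/(2\DentDF{\rho})\geq\min\!\big(D(\rho)/(2\,\IM{\rho}),\,\alpha_1(L)\big)$. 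Thus Part 3 reduces to bounding $\inf\{D(\rho)/\IM{\rho}:\IM{\rho}>0\}$ away from $0$, i.e. to showing that $I_{\Hcal_A}\otimes L$ contracts the mutual information at a positive rate --- this I expect to be the main obstacle. Since $D$, $\IM{\cdot}$ are continuous and $D(\rho)>0$ whenever $\IM{\rho}>0$ (strict data processing, $L$ primitive), the only issue is uniformity near the closed set of $A$--$B$ product states $\{\IM{\cdot}=0\}$, where a second-order expansion turns the quotient into a quadratic ``transverse Dirichlet form'' of $I_{\Hcal_A}\otimes L$ weighted by the Hessian of $\IM{\cdot}$, strictly positive on the transverse directions by positivity of the spectral gap in finite dimension (Corollary \ref{coro_DFalgebra} applied to $I_{\Hcal_A}\otimes L$); compactness then finishes. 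Alternatively one can mimic the compactness-plus-local-estimate scheme in the proof of Theorem \ref{theo_DFlogSob} around the invariant manifold $\{\rho_A\otimes\tau:\rho_A\in\Scal(\Hcal_A)\}$, using $\alpha_1(L)>0$ in the $B$-direction and the spectral gap of $I_{\Hcal_A}\otimes L$ transversally.
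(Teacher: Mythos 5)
Part 1 of your argument is correct and is essentially the paper's: your two identities $\DentDF{\rho}=\IM{\rho}+\Dent{\rho_B}{\tau}$ and $\EP_\Lcal(\rho)=\IP_\Lcal(\rho)+\EP_L(\rho_B)$ are exactly Proposition \ref{prop_EPandIP}, and evaluating on product test states gives $\alpha_\Ncal(\Lcal)\leq\alpha_1(L)$ just as in the paper (which uses $\eta\otimes\rho$ for arbitrary $\eta$). The forward direction of Part 2 is also fine. The problems start at the converse of Part 2, at exactly the step you flagged. From $\EP_\Lcal(\rho)=0$ you correctly get $\rho_B=\tau$ and $D(\rho)=0$, i.e.\ $g'(0)=0$ with $g(t)=\Dent{\rho_t}{\rho_A\otimes\tau}$; but to go from $g'(0)=0$ to $g\equiv 0$ you invoke convexity of $t\mapsto\Dent{\rho_t}{\rho_A\otimes\tau}$ along the trajectory. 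This is not an available fact: convexity of the relative entropy to an invariant state along a QMS flow is equivalent to monotonicity of the entropy production, which is not known for general (even primitive, reversible) QMS --- indeed the paper stresses in Section \ref{sect12} that it is not even known that $\EP$ vanishes only at the invariant state. Note that $D(\rho)=-\Tr[\Lcal_*(\rho)(\log\rho-\log(\rho_A\otimes\tau))]$ with $\rho_A\otimes\tau$ invariant is itself an entropy production relative to an invariant state, so the statement ``$D(\rho)=0\Rightarrow\IM{\rho}=0$'' is another instance of the very property \eqref{DFEP} you are trying to establish; your argument at this point no longer uses the hypothesis that $P$ satisfies \eqref{DFEP}, which is a sign of circularity. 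The fallback you offer (strict data processing for $t>0$) does not close the gap either: equality conditions in DPI control $g(t)=g(0)$, not $g'(0)=0$. The paper's proof supplies the missing ingredient via pinching: for $X\in\Bcal_{\text{sa}}(\Hcal_A)$ it shows $\EP_\Lcal(\rho)\geq\EP_\Lcal(E_X(\rho))$ by a chain rule plus DPI, and for the classical-quantum state $E_X(\rho)$ the entropy production decomposes into $B$-entropy-productions to which \eqref{DFEP} for $P$ applies, giving $E_X(\rho)=\rho_A\otimes\tau$ for every $X$ and hence $\rho=\rho_A\otimes\tau$ by duality.

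Part 3 inherits this gap (it needs Part 2), and in addition your route is strictly harder than what is needed. The bound $\alpha_\Ncal(\Lcal)\geq\min\{\beta_\Ncal(\Lcal),\alpha_1(L)\}$ with $\beta_\Ncal(\Lcal)=\inf\IP_\Lcal(\rho)/(2\IM{\rho})$ is Proposition \ref{prop_bad_comp} in the paper, but the paper explicitly leaves the positivity of $\beta_\Ncal(\Lcal)$ to further research; your sketch (Hessian of the mutual information, ``transverse Dirichlet form'') is not a proof, and the degeneracy set $\{\IM{\cdot}=0\}$ is the whole manifold of product states, not just the invariant ones, so the second-order analysis there is not controlled by the spectral gap alone --- the intertwining map $\Theta_\sigma$ from Lemma \ref{lem_estimation} reappears and must be handled. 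The paper instead proves Part 3 by running the compactness-plus-perturbation scheme of Theorem \ref{theo_DFlogSob} around $\rho_A\otimes\tau$ and showing that $\Ecal_\Lcal\left(Y,\Theta_\sigma(\sigma_{\tr}^{\frac12}Y\sigma_{\tr}^{\frac12})\right)=0$ forces $Y\in\DF$; the nondegeneracy is again extracted from the pinching inequality \eqref{eq_proof_bipartite} together with the observation that the DF-MLSI with constant $\alpha_1(L)$ already holds on classical-quantum states. If you want to repair your proof, the pinching maps $E_X$ are the tool to import at both places.
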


\begin{rem}
Point 3. in the above theorem is far from being obvious. For instance, it is not known if the $\Lbb_1$-regularity of $L$ implies the one of $\Lcal$. However, we know from Theorem \ref{theo_DFlogSob} that it implies that $\alpha_1(\Lcal)>0$. Thus, it could be true that $\Lcal$ fails to be $\Lbb_1$-regular whereas $\alpha_\Ncal(\Lcal)>0$. 
\end{rem}

\begin{proof}
We first prove that $\alpha_\Ncal(\Lcal)\leq\alpha_1(L)$. Let $\rho$ be a state on $\Hcal_B$. We have to prove that:
\[\alpha_\Ncal(\Lcal)\Dent{\rho}{\tau}\leq-\Tr\big[L_*(\rho)\left(\log\rho-\log\tau\right)\big]\,.\]
Let $\eta$ be any state on $\Hcal_A$. Then, as
\[\DentDF{\eta\otimes\rho}=\Dent{\eta\otimes\rho}{\eta\otimes\tau}=\Dent{\rho}{\tau}\,,\]
we have by definition of $\alpha_\Ncal(\Lcal)$
\[\alpha_\Ncal(\Lcal)\Dent{\rho}{\tau}\leq-\Tr\big[\Lcal_*(\eta\otimes\rho)\left(\log(\eta\otimes\rho)-\log(\frac{I_{\Hcal_A}}{d_A}\otimes\tau)\right)\big]\,.\]
To conclude, we show that the right-hand side of the two previous inequalities coincide.\\
Notice first that we have
\[\Lcal_*(\eta\otimes\rho )=-i[H_A,\eta]\otimes\rho+\eta\otimes L_*(\rho)\,.\]
Also notice that for $X\in\Bcal(\Hcal_A)$, $\Tr\big[[H_A,\eta]\,X\big]=0$ whenever $[\eta,X]=0$. Applying this to $X=\log\eta+(\log d_A) I_{\Hcal_A}$, we get
\begin{align*}
\Tr\big[\Lcal_*(\eta\otimes\rho)\left(\log(\eta\otimes\rho)-\log(\frac{I_{\Hcal_A}}{N}\otimes\tau)\right)\big]
& =\Tr\big[L_*(\rho)\left(\log\rho-\log\tau\right)\big]+\Tr[L_*(\rho)]\left(\Tr[\log\eta]+\log d_A\right)\\
&=\Tr\big[L_*(\rho)\left(\log\rho-\log\tau\right)\big]\,,
\end{align*}
where in the last step we used that $P_*$ is trace-preserving. This concludes the first part of the proof.\\

We now show the second part. By the first part, it is clear that if $\EP_L(\rho_B)=\EP_\Lcal(\rho_A\otimes\rho_B)$ for all $\rho_A\in\Scal(\Hcal_A)$, $\rho_B\in\Scal(\Hcal_B)$. This implies that $P$ satisfies the \eqref{DFEP} property whenever $\Pcal$ does. Suppose now that $P$ satisfies the \eqref{DFEP} and that $\EP_\Lcal(\rho)=0$ for a certain state $\rho\in\Scal(\Hcal)$. We need to show that $\rho=\rho_\Ncal=\rho_A\otimes\tau$. Consider the spectral decomposition of an observable $X\in\Bcal_{\text{sa}}(\Hcal_A)$: $X=\sum_k{x_k\,P_k}$ where $x_k\in\R$ and the $P_k$ form a family of mutually orthogonal projections that sum to the identity operator on $\Hcal_A$. We denote by $E_X$ the \emph{pinching} map associated to $X\otimes I_{\Hcal_B}$:
\[E_X\,:\,Y\in\Bcal(\Hcal)\mapsto E_X(Y)=\sum_k\,(P_k\otimes I_{\Hcal_B})\, Y\, (P_k\otimes I_{\Hcal_B})\,.\]
We claim that $\EP_\Lcal(\rho)\geq\EP_\Lcal(E_X(\rho))$. Indeed, leaving aside similar computations to what has already been done several times, we have
\[\DentDF{\rho}=\Dent{\rho}{E_X(\rho)}+\Dent{E_X(\rho)}{\rho_A\otimes\tau}\,.\]
Consequently:
\[\EP_\Lcal(\rho)-\EP_\Lcal(E_X(\rho))=\underset{t\to 0}{\lim}\,\frac{1}{t}\left(\Dent{\rho}{E_X(\rho)}-\Dent{\Pcal_{*t}(\rho)}{\Pcal_{*t}(E_X(\rho))}\right)\,.\]
The claim follows from the data-processing Inequality applied to the CP map $\Pcal_{*t}$. Consequently, we obtain that $\EP_\Lcal(E_X(\rho))=0$. It is a straightforward computation to show that this implies that $E_X(\rho)=\rho_A\otimes\tau$. Now, for all $X\in\Bcal(\Hcal_A)$ and $Y\in\Bcal(\Hcal_B)$, we get
\[\Tr[(X\otimes Y)\,\rho]=\Tr[(X\otimes Y)\,E_X(\rho)]=\Tr[(X\otimes Y)\,(\rho_A\otimes\tau)]\,.\]
As $\Bcal(\Hcal_A)\otimes\Bcal(\Hcal_B)=\Bcal(\Hcal_A\otimes\Hcal_B)$, this shows that $\rho=\rho_A\otimes\tau$, which proves our claim.\\

We now show the last part of the theorem. It is clear from the first part that $\alpha_\Ncal(\Lcal)>0$ implies that $\alpha_1(L)>0$, so assume that $\alpha_1(L)>0$. Consequently, $P$ satisfies the \eqref{DFEP} and by the second part, $\Pcal$ also does. Thus, in order to show that $\alpha_\Ncal(\Lcal)>0$, we can proceed as in the proof of Theorem \ref{theo_DFlogSob}. Take $\eps>0$, $\rho_A\in\Scal(\Hcal_A)$ and $Y\in\Bcal(\Hcal)$ with $E_\Ncal(Y)=0$, and define $\rho_\eps=\rho_A\otimes\tau+\eps\,\sigma_{\tr}^{\frac12}\,Y\,\sigma_{\tr}^{\frac12}$. As in the proof of the aforementioned theorem, we need to show that if $\Ecal_\Lcal\left(Y,\Theta_{\sigma}(\sigma_{\tr}^{\frac12}\,Y\,\sigma_{\tr}^{\frac12})\right)=0$, then necessarily $Y=0$. So assume that the first statement holds. Take $X\in\Bcal(\Hcal_A)$. As proved above, $\EP_\Lcal(\rho_\eps)\geq\EP_\Lcal(E_X(\rho_\eps))$, so that expanding both terms up to second order we obtain:
\begin{equation}\label{eq_proof_bipartite}
\Ecal_\Lcal\left(Y,\Theta_{\sigma}(\sigma_{\tr}^{\frac12}\,Y\,\sigma_{\tr}^{\frac12})\right)\geq\Ecal_\Lcal\left(E_X(Y),\Theta_{\sigma}(\sigma_{\tr}^{\frac12}\,E_X(Y)\,\sigma_{\tr}^{\frac12})\right)\,.
\end{equation}
Thus the right-hand side is null. However, it can be checked by a direct computation that for all $\rho\in\Scal(\Hcal)$,
\[\alpha_1(L)\,\DentDF{E_X(\rho)}\leq \EP_\Lcal\left(E_X(\rho)\right)\,,\]
or otherwise said, the DF log-Sobolev Inequality is satisfied with constant $\alpha_1(L)$ for states of the form $E_X(X)$ (that is, for classical-quantum states). As $\alpha_1(L)>0$, then necessarily the right-hand side of Inequality \eqref{eq_proof_bipartite} vanishes only on $\DF$. This again is a consequence of the proof of Theorem \ref{theo_DFlogSob}. Consequently $E_X(Y)\in\DF$. Remark now that $E_X$ and $E_\Ncal$ commute and therefore, for all $X\in\Bcal(\Hcal_A)$ and $Z\in\Bcal(\Hcal_B)$,
\[\Tr[(X\otimes Z)\,E_\Ncal(Y)]=\Tr[(X\otimes Z)\,E_X(E_\Ncal(Y))]=\Tr[(X\otimes Z)\,E_X(Y)]=\Tr[(X\otimes Z)\,Y]\,,\]
which shows that $Y\in\DF$. As we also have $E_\Ncal(Y)=0$, this concludes the proof.
\end{proof}

A question that remains is how much $\alpha_\Ncal(\Lcal)$ differ from $\alpha_1(L)$. Or, in other word, what is the best constant $C$ such that $C\,\alpha_1(L)\leq\alpha_\Ncal(\Lcal)$. Remark that in the classical case, the question is trivial as one can directly check that both constants are equal (they are also equal when the minimization is done over classical-quantum states, as mentioned in the above proof). We could not answer this question. However, we shall see that $\alpha_\Ncal(\Lcal)$ is intimately related to the rate of decay of the correlations between systems A and B.\\
\\As we shall see, the DF relative entropy and the entropy production are the sum of their trace on $B$ plus a term related to the mutual information of the state. Recall that the mutual information between systems $A$ and $B$ in the state $\rho\in\Scal(\Hcal)$ is given by:
\[\IM{\rho}=\Scal(\rho_A)+\Scal(\rho_B)-\Scal(\rho)=\Dent{\rho}{\rho_A\otimes\rho_B}\,.\]
This quantity quantifies the degree of correlation between the two subsystems $A$ and $B$, so in a sense it is natural that it should play a role in our context. We also define the \emph{Information Production (IP)} of the state $\rho\in\Scal(\Hcal)$ induced by $\Pcal$ as the quantity

\begin{equation}\label{eq_def_IP}
\IP_{\Lcal}(\rho)=-\frac{\partial}{\partial t}\, \IM{\Pcal_{*t}(\rho)}\,\Big|_{t=0}\,.
\end{equation}

\begin{prop}\label{prop_EPandIP}\ 
\begin{enumerate}
\item For any state $\rho\in\Scal(\Hcal)$, the DF entropy takes the form
\begin{equation}\label{eq_prop_EP}
\DentDF{\rho}=I_{\rho}(A:B)+\Dent{\rho_B}{\tau}\,.
\end{equation}
\item For any state $\rho\in\Scal(\Hcal)$, $\EP_\Lcal(\rho)=\IP_{\Lcal}(\rho)+\EP_L(\rho_B)$ and 
\begin{equation}\label{eq_prop_IP}
\IP_{\Lcal}(\rho)=-\Tr\left[\Lcal_*(\rho)\left(\log\,\rho-\log(\rho_A\otimes\rho_B)\right)\right]\,.
\end{equation}
\item $\IP_{\Lcal}\geq0$ and, consequently, $\EP_\Lcal(\rho)\geq\EP_L(\rho_B)$.
\end{enumerate}
\end{prop}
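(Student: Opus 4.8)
The plan is to dispatch the three items in order: \eqref{eq_prop_EP} is a bookkeeping identity for the logarithm of a tensor product, the two assertions in item 2 follow by differentiating \eqref{eq_prop_EP} along the semigroup, and item 3 is the monotonicity of the mutual information under local channels. Throughout one may assume $\rho$ faithful (all relative entropies occurring are in any case finite, since $\tau$ is faithful and $\operatorname{supp}\rho\subseteq(\operatorname{supp}\rho_A)\otimes\Hcal_B$). For item 1, expand $\DentDF{\rho}=\Dent{\rho}{\rho_\Ncal}$ with $\rho_\Ncal=E_{\Ncal*}(\rho)=\rho_A\otimes\tau$ and use $\log(\rho_A\otimes\tau)=\log\rho_A\otimes I_{\Hcal_B}+I_{\Hcal_A}\otimes\log\tau$, which gives $\DentDF{\rho}=\Tr[\rho\log\rho]-\Tr[\rho_A\log\rho_A]-\Tr[\rho_B\log\tau]$. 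On the other hand $\IM{\rho}+\Dent{\rho_B}{\tau}=\big(\Scal(\rho_A)+\Scal(\rho_B)-\Scal(\rho)\big)+\big(-\Scal(\rho_B)-\Tr[\rho_B\log\tau]\big)$; the two $\Scal(\rho_B)$ terms cancel and what remains is exactly the previous expression.

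For item 2, I would first record how the marginals move along the flow. Since the superoperators $\omega\mapsto -i[H_A\otimes I_{\Hcal_B},\omega]$ and $\operatorname{id}\otimes L_*$ commute — they act, as superoperators, on the two different tensor legs — one has the factorization $\Pcal_{*t}=\big(\operatorname{Ad}_{e^{-itH_A}}\otimes\operatorname{id}\big)\circ\big(\operatorname{id}\otimes e^{tL_*}\big)$, whence, tracing out one system and using unitary invariance and trace preservation of the local maps, $(\Pcal_{*t}\rho)_A=e^{-itH_A}\rho_A\,e^{itH_A}$, $(\Pcal_{*t}\rho)_B=e^{tL_*}(\rho_B)$ and $(\Pcal_{*t}\rho)_A\otimes(\Pcal_{*t}\rho)_B=\Pcal_{*t}(\rho_A\otimes\rho_B)$. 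Applying $-\partial_t\big|_{t=0}$ to \eqref{eq_prop_EP} evaluated along the trajectory $\Pcal_{*t}(\rho)$: the left-hand side equals $\EP_\Lcal(\rho)$ (as in the proof of Theorem \ref{theoDFentropy}, using that $\Dent{(\Pcal_{*t}\rho)_\Ncal}{\sigma_{\tr}}$ is constant), the first term on the right is $\IP_\Lcal(\rho)$ by definition, and the last is $-\partial_t\Dent{e^{tL_*}(\rho_B)}{\tau}\big|_{t=0}=\EP_L(\rho_B)$; this proves the first claim. For \eqref{eq_prop_IP}, differentiate $\IM{\Pcal_{*t}(\rho)}=\Scal((\Pcal_{*t}\rho)_A)+\Scal((\Pcal_{*t}\rho)_B)-\Scal(\Pcal_{*t}\rho)$ term-by-term, using $\partial_t\Scal(\omega_t)=-\Tr[\log\omega_t\,\dot\omega_t]$ together with $\partial_t(\Pcal_{*t}\rho)_A\big|_{t=0}=\Tr_{\Hcal_B}[\Lcal_*(\rho)]$ and $\partial_t(\Pcal_{*t}\rho)_B\big|_{t=0}=\Tr_{\Hcal_A}[\Lcal_*(\rho)]$; at $t=0$ this produces $-\IP_\Lcal(\rho)=\Tr[\Lcal_*(\rho)(\log\rho-\log\rho_A\otimes I_{\Hcal_B}-I_{\Hcal_A}\otimes\log\rho_B)]=\Tr[\Lcal_*(\rho)(\log\rho-\log(\rho_A\otimes\rho_B))]$.

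For item 3, write $\IM{\omega}=\Dent{\omega}{\omega_A\otimes\omega_B}$; by the factorization above $(\Pcal_{*t}\rho)_A\otimes(\Pcal_{*t}\rho)_B=\Pcal_{*t}(\rho_A\otimes\rho_B)$, and $\Pcal_{*t}$ is completely positive and trace preserving, so monotonicity of the relative entropy (data processing) gives $\IM{\Pcal_{*t}(\rho)}\leq\IM{\rho}$ for all $t\geq0$; since equality holds at $t=0$ this forces $\IP_\Lcal(\rho)=-\partial_t\big|_{t=0}\IM{\Pcal_{*t}(\rho)}\geq0$, and item 2 then yields $\EP_\Lcal(\rho)\geq\EP_L(\rho_B)$. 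The one genuine subtlety — the point I expect to need the most care — is the justification of the term-by-term differentiation of the von Neumann entropies at $t=0$ when $\rho$ fails to be faithful, since then $\log\rho$, and hence $\EP_\Lcal(\rho)$ itself, may be singular or infinite; this is handled exactly as for Spohn's formula \eqref{eq_comp_EP}, namely by establishing everything for faithful $\rho$ and then passing to the boundary of $\Scal(\Hcal)$ by density and lower semicontinuity (or by simply restricting the statement to faithful states, as is done implicitly elsewhere). Everything else is routine trace manipulation, and I expect no deeper obstruction.
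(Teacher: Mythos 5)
Your proposal is correct and follows essentially the same route as the paper: item 1 by splitting the logarithm of the tensor product, item 2 by differentiating along the flow using that the marginals evolve autonomously ($(\Pcal_{*t}\rho)_B=e^{tL_*}(\rho_B)$, which is the content of the paper's identity $\Tr[X\,\Tr_{\Hcal_B}\Lcal_*(\rho)]=0$), and item 3 by data processing applied to $\IM{\cdot}$ written as a relative entropy. The only cosmetic difference is that you obtain \eqref{eq_prop_IP} by differentiating the three von Neumann entropies term by term, whereas the paper subtracts Spohn's formula for $\EP_L(\rho_B)$ from that for $\EP_\Lcal(\rho)$; the two computations are equivalent.
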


\begin{proof}
We prove 1.
\begin{align*}
\DentDF{\rho}
& =\Tr\left[\rho\left(\log\rho-\log(\rho_A\otimes\rho_B)\right)\right]+\Tr\left[\rho\left(\log(\rho_A\otimes\rho_B)-\log E_{\Ncal*}(\rho)\right)\right] \\
& = \IM{\rho}+\Tr\left[\rho\,\left(I_{\Hcal_A}\otimes(\log\rho_B-\log\tau)\right)\right] \\
& = \IM{\rho}+\Tr\left[\rho_B\left(\log\rho_B-\log\tau\right)\right]=\IM{\rho}+\Dent{\rho_B}{\tau}\,.
\end{align*}
The first part of point 2. is a direct consequence of point 1. We prove Equation \eqref{eq_prop_IP}: for all $\rho\in\Scal(\Hcal)$, we have 
\begin{align*}
& \IP_{\Lcal}(\rho) = \EP_\Lcal(\rho)-\EP_L(\rho_B)\,, \\
& \Tr\left[\Lcal_*(\rho)\log\sigma_{\tr}\right]=\Tr\left[L_*(\rho_B)\log\tau\right] \,, \\
& \Tr\left[\Lcal_*(\rho)\log(\rho_A\otimes\rho_B)\right]=\Tr\left[L_*(\rho_B)\log\rho_B\right]\,.
\end{align*}
The first equality is clear. The second and third equalities follows from the fact that $\Tr_{\Hcal_B}[\Lcal_*(\rho)]=0$. Indeed, for all $X\in\Bcal(\Hcal_A)$, 
\begin{align*}
\Tr\big[X\,\Tr_{\Hcal_B}[\Lcal_*(\rho)]\big] = \Tr\big[(X\otimes I_{\Hcal_B})\Lcal_*(\rho)\big] = \Tr\big[\Lcal_*\left((X\otimes I_{\Hcal_B})\,\rho\right)\big] =0\,,
\end{align*}
where in the before-last equality we used that $\Lcal_*=I\otimes L_*$ and where in the last equality we used that $\Pcal_*$ is trace-preserving. 
Putting these three equalities together shows the result.\\
To prove point 3. remark that
\[\IP_\Lcal(\rho)=\underset{t\to 0}{\lim}\,\frac{1}{t}\left(\IM{\rho}-\IM{\Pcal_{*t}(\rho)}\right)\,,\]
so that the result follows by the data-processing inequality applied to the mutual information and the CP map $\Pcal_{*t}$.
\end{proof}

It is very tempting at this point to define a functional inequality that implies the exponential decay of the mutual information. Thus, define $\beta_\Ncal(\Lcal)$ as the best constant fulfilling the following inequality: for all $\rho\in\Scal(\Hcal)$, it holds that
\begin{equation}\label{eq_def_IPlogsob}
2\,\beta\,\IM{\rho}\leq\IP_\Lcal(\rho)\,.
\end{equation}
It is now clear that this inequality holds for some $\beta\geq0$ if and only if for all $\rho\in\Scal(\Hcal)$,
\[\IM{\Pcal_{*t}(\rho)}\leq e^{-2\,\beta t}\,I_\rho(A;B)\,.\]

\begin{prop}\label{prop_bad_comp}
We have $\alpha_\Ncal(\Lcal)\geq\min\{\beta_\Ncal(\Lcal)\,,\,\alpha_1(L)\}$.
\end{prop}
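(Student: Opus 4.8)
The plan is to combine the two additive decompositions established in Proposition \ref{prop_EPandIP} with the two functional inequalities at hand, applied term by term. First I would recall that for every state $\rho\in\Scal(\Hcal)$, Proposition \ref{prop_EPandIP} gives the splittings
\[
\DentDF{\rho}=\IM{\rho}+\Dent{\rho_B}{\tau}\,,\qquad \EP_\Lcal(\rho)=\IP_\Lcal(\rho)+\EP_L(\rho_B)\,,
\]
and that both summands $\IM{\rho}$ and $\Dent{\rho_B}{\tau}$ are non-negative (the mutual information is a relative entropy, and $\Dent{\rho_B}{\tau}\geq0$).

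Next, set $m=\min\{\beta_\Ncal(\Lcal)\,,\,\alpha_1(L)\}$. By definition of $\beta_\Ncal(\Lcal)$ one has $\IP_\Lcal(\rho)\geq2\,\beta_\Ncal(\Lcal)\,\IM{\rho}\geq2m\,\IM{\rho}$, where the second inequality uses $\beta_\Ncal(\Lcal)\geq m$ together with $\IM{\rho}\geq0$. Similarly, applying the modified log-Sobolev inequality for $L$ (recall $\alpha_1(L)=\alpha_\Ncal(L)$) to the state $\rho_B\in\Scal(\Hcal_B)$ gives $\EP_L(\rho_B)\geq2\,\alpha_1(L)\,\Dent{\rho_B}{\tau}\geq2m\,\Dent{\rho_B}{\tau}$. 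Summing these two bounds and re-assembling via the decompositions above yields
\[
\EP_\Lcal(\rho)=\IP_\Lcal(\rho)+\EP_L(\rho_B)\geq2m\big(\IM{\rho}+\Dent{\rho_B}{\tau}\big)=2m\,\DentDF{\rho}\,.
\]

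Since this holds for every $\rho\in\Scal(\Hcal)$, the semigroup $\Pcal$ satisfies $\text{MLSI}_\Ncal(m)$, and therefore $\alpha_\Ncal(\Lcal)\geq m=\min\{\beta_\Ncal(\Lcal)\,,\,\alpha_1(L)\}$, which is the claim. There is no genuine obstacle in this argument; the only points requiring a little care are that one must pair each inequality with the correct Lyapunov functional — $\IP_\Lcal$ against the mutual information and $\EP_L$ against $\Dent{\rho_B}{\tau}$ — and invoke non-negativity of these functionals in order to replace $\beta_\Ncal(\Lcal)$ and $\alpha_1(L)$ by their minimum in each term.
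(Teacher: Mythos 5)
Your proof is correct and follows essentially the same route as the paper: the paper's one-line argument is exactly the chain $2\min\{\beta_\Ncal(\Lcal),\alpha_1(L)\}\,\DentDF{\rho}\leq 2\alpha_1(L)\Dent{\rho_B}{\tau}+2\beta_\Ncal(\Lcal)\IM{\rho}\leq\EP_\Lcal(\rho)$, which is precisely your term-by-term application of the two inequalities to the additive decompositions from Proposition \ref{prop_EPandIP}. Your write-up merely makes explicit the non-negativity of each summand needed to pass to the minimum, which the paper leaves implicit.
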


\begin{proof}
This is clearly true as by Proposition \ref{prop_EPandIP},
\[2\,\min\{\beta_\Ncal(\Lcal)\,,\,\alpha_1(L)\}\,\DentDF{\rho}\leq 2\,\alpha_1(L)\,\Dent{\rho_A}{\tau}+2\,\beta_\Ncal(\Lcal)\,\IM{\rho}\leq\EP_\Lcal(\rho)\,.\]
\end{proof}

We leave the study of this inequality to further research.

\begin{ex}[The example of a partially depolarizing channel]
Let $\gamma$ be a positive constant and define, for a faithful state $\tau\in\Scal(\Hcal_B)$, the Lindbladian of the $\tau$-depolarizing channel in the Schrödinger picture acting on an observable $\rho_B\in\Scal(\Hcal_B)$ as:
\begin{equation}\label{eq_depol}
L_{\text{depol}*}^\gamma(\rho)=\gamma\left(\tau-\rho\right)\,.
\end{equation}
The whole evolution of the system $A+B$ is then given by the quantum Markov semigroup $(\Pcal_t^\gamma)_{t\geq0}$ whose Lindbladian in the Schrödinger picture acts on a state $\rho\in\Scal(\Hcal)$ as:
\begin{equation}\label{eq_depolAB}
\Lcal_{*}^\gamma(\rho)=\left(\Ind\otimes L_{\text{depol}*}^\gamma\right)(\rho)=\gamma\left(\rho_A\otimes\tau-\rho\right)\,.
\end{equation}
The depolarizing channel is a primitive QMS with unique invariant state $\tau$, so that $\Ncal(\Pcal^\gamma)=\Bcal(\Hcal_A)\otimes I_{\Hcal_B}$ and for any state $\rho\in\Scal(\Hcal)$, we have
\begin{equation}\label{eq_EIDdepol}
\Pcal_{*t}^\gamma(\rho)-\rho_A\otimes\tau \underset{t\to+\infty}{\longrightarrow}0\,.
\end{equation}
We can compute both the entropy production and the information production and we find, for $\rho\in\Scal(\Hcal)$:
\begin{align*}
& \EP_\Lcal(\rho)=\gamma\left(\Dent{\rho}{\rho_A\otimes\tau}+\Dent{\rho_A\otimes\tau}{\rho}\right)\gamma\,\Dent{\rho}{\rho_A\otimes\tau}\,,\\
& \IP_\Lcal(\rho)=\gamma\left(\IM{\rho}+\Dent{\rho_A\otimes\tau}{\rho}-\Dent{\tau}{\rho_B}\right)\geq \gamma\,\IM{\rho}\,.
\end{align*}
Consequently, we see that $\alpha_\Ncal(\Lcal^\gamma)\geq\gamma/2$ and similarly $\beta_\Lcal(\Lcal^\gamma)\geq\gamma/2$.
\end{ex}

\subsection{The case where the Decoherence-Free algebra is the algebra of diagonal operators}\label{sect42}

In this subsection, we focus on the case where $\DF$ is the algebra of diagonal operators: denoting by $(e_i)_{i=1,...,d}$ an orthonormal basis of $\Hcal$, we consider the case where 
\[\DF=\left\{X=\sum_{i=1}^d\,x_i\,\proj{e_i}{e_i}\ ;\ x_i\in\C\right\}\,.\]
Remark that necessarily the invariant states are diagonal in the o.n.b. $(e_i)$, so that $\sigma_{\tr}=\frac{I_\Hcal}{d}$, i.e. it is the maximally mixed state. As a consequence, $\Pcal$ is doubly stochastic and $\alpha_\Ncal(\Lcal)>0$. We can completely characterize this class of QMS.

\begin{prop}\label{prop_diag}
Assume that $\Pcal$ is a QMS on $\Bcal(\Hcal)$ with $\Hcal=\C^d$ and such that $\DF$ is isomorphic to the algebra of diagonal operators. We denote by $\Lcal$ the generator of $\Pcal$. With the above notation, there exists a selfadjoint $d\times d$-matrix $\Gamma=(\gamma_{i,j})_{i,j=1}^d$ such that $\gamma_{i,i}=0$, $\text{Re}\,\gamma_{i,j}<0$ for all $i\ne j$ and:
\begin{equation}\label{eq_prop_diag}
\Lcal(\proj{e_i}{e_j})=\gamma_{i,j}\,\proj{e_i}{e_j},\qquad\Pcal_t(\proj{e_i}{e_j})=e^{t\,\gamma_{i,j}}\proj{e_i}{e_j}\,.
\end{equation}
Moreover, $\Pcal$ is reversible if and only if $\Gamma$ can be chosen with real coefficients.
\end{prop}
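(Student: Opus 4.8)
The plan is to read the matrix $\Gamma$ off the structure of the Lindbladian recalled in Subsection~\ref{sect13} and then verify its properties one at a time. First I would note that, since $\DF$ is the \emph{abelian} algebra of diagonal operators, the summands in \eqref{eqtheostructlind1}--\eqref{eqtheostructlind2} must all be one-dimensional: abelianness of $\bigoplus_{i\in I}\Bcal(\Hcal_i)\otimes I_{\Kcal_i}$ forces $\dim\Hcal_i=1$ for every $i$, hence $\DF\cong\C^{|I|}$ and so $|I|=\dim\DF=d$, and then $d=\dim\Hcal=\sum_{i}\dim\Kcal_i$ with $|I|=d$ summands forces $\dim\Kcal_i=1$ for every $i$. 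Identifying each $\Hcal_i\otimes\Kcal_i$ with $\C e_i$ (the unitary of \eqref{eqtheostructlind1} is a permutation of the $e_i$ up to phases, which may be absorbed), the formulas \eqref{eqtheostructlind3}--\eqref{eqtheostructlind4} say precisely that every $L_k$ and $H$ is diagonal in $(e_i)$: write $L_k=\sum_m\ell_m^{(k)}\proj{e_m}{e_m}$ and $H=\sum_m h_m\proj{e_m}{e_m}$ with $h_m\in\R$.

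Next I would compute $\Lcal$ on the matrix units. Since a diagonal operator multiplies $\proj{e_i}{e_j}$ by a scalar on the left and on the right, plugging $X=\proj{e_i}{e_j}$ into \eqref{eqlindblad} gives $\Lcal(\proj{e_i}{e_j})=\gamma_{i,j}\proj{e_i}{e_j}$ with $\gamma_{i,j}=i(h_i-h_j)+\sum_k\big(\overline{\ell_i^{(k)}}\ell_j^{(k)}-\tfrac12|\ell_i^{(k)}|^2-\tfrac12|\ell_j^{(k)}|^2\big)$. As the $\proj{e_i}{e_j}$ form a basis of $\Bcal(\Hcal)$, this is a complete eigendecomposition of $\Lcal$, so $\Pcal_t=e^{t\Lcal}$ acts by $\Pcal_t(\proj{e_i}{e_j})=e^{t\gamma_{i,j}}\proj{e_i}{e_j}$, which is \eqref{eq_prop_diag}. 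From the formula $\gamma_{i,i}=0$, and $\Gamma=\Gamma^*$ comes for free from $\Lcal$ being $*$-preserving, since $\gamma_{j,i}\proj{e_j}{e_i}=\Lcal\big((\proj{e_i}{e_j})^*\big)=\Lcal(\proj{e_i}{e_j})^*=\overline{\gamma_{i,j}}\,\proj{e_j}{e_i}$.

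The one point that is not mere bookkeeping is the \emph{strict} inequality $\text{Re}\,\gamma_{i,j}<0$ for $i\ne j$: the explicit formula only yields $\text{Re}\,\gamma_{i,j}=-\tfrac12\sum_k|\ell_i^{(k)}-\ell_j^{(k)}|^2\le0$, and excluding equality is exactly where one must use that $\DF$ is no larger than the diagonal algebra. Here I would invoke Theorem~\ref{theo_deco}: in this situation $\sigma_{\tr}=I_\Hcal/d$, so by Lemma~\ref{lemcond} the conditional expectation $E_\Ncal$ is the Hilbert-Schmidt orthogonal projection onto $\DF$, i.e. the pinching to the diagonal; hence every off-diagonal $\proj{e_i}{e_j}$ lies in $\Ker E_\Ncal$, and Theorem~\ref{theo_deco}(2) forces $\Pcal_t(\proj{e_i}{e_j})=e^{t\gamma_{i,j}}\proj{e_i}{e_j}\to0$ as $t\to+\infty$, which is precisely $\text{Re}\,\gamma_{i,j}<0$.

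For the reversibility criterion I would finally observe that $\sca{\cdot}{\cdot}_{\sigma_{\tr}}=\tfrac1d\sca{\cdot}{\cdot}_{HS}$, so $\Pcal$ is reversible exactly when $\Lcal$ is selfadjoint for the Hilbert-Schmidt product. But in the Hilbert-Schmidt orthonormal basis $\{\proj{e_i}{e_j}\}$ the operator $\Lcal$ is diagonal with entries $\gamma_{i,j}$, and such a diagonal operator is selfadjoint iff all its entries are real; hence $\Pcal$ is reversible iff every $\gamma_{i,j}\in\R$, i.e. iff $\Gamma$ can be taken with real coefficients. The only real obstacle is the strict negativity in the previous paragraph; everything else is a direct consequence of the structure theorem of Subsection~\ref{sect13}.
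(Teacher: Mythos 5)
Your proof is correct and follows essentially the same route as the paper's: reduce to diagonal $L_k$ and $H$, compute $\Lcal$ on the matrix units to get \eqref{eq_prop_diag}, obtain the strict inequality $\text{Re}\,\gamma_{i,j}<0$ from Theorem \ref{theo_deco} applied to the off-diagonal units lying in $\Ker E_\Ncal$, and read off reversibility from selfadjointness of $\Lcal$ in the Hilbert--Schmidt orthonormal basis of matrix units. The only (equivalent) variation is that you derive diagonality of the $L_k$ and $H$ from the block decomposition \eqref{eqtheostructlind1}--\eqref{eqtheostructlind4} by a dimension count, whereas the paper invokes the characterization $\DF=\{L_k,L_k^*\}'$ obtained from the fixed-point algebra.
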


\begin{proof}
We consider the form of the Lindbladian given in Equation \eqref{eqlindblad}. First notice that necessarily $\DF$ is equal to the set of fixed-points of $\Pcal$. Indeed, if there exists a one-parameter group of unitary operators $(U_t)_{t\in\R}$ on $\Hcal$ such that Equation \eqref{eq_unitary_evolution} holds, then necessarily $U^*_t\,\proj{e_i}{e_i}\,U_t=\proj{e_i}{e_i}$ for all time $t\geq0$ so that $\proj{e_i}{e_i}$ are all fixed-points for all $i$. Consequently, as proved for instance in \cite{DFSU1}, 
\[\DF=\{L_k,L_k^*\,;\,k=1,...,d\}'\,,\]
where the prime denotes the commutant of the set. It implies that the $L_k$ are all diagonal in the o.n.b. $(e_i)$: there exist $l_i(k)\in\C$ such that
\[L_k=\sum_k\,l_i(k)\,\proj{e_i}{e_i}\,.\]
Writing $\tilde\gamma_{i,j}=\sum_k\,\left(l_i(k)\,\overline{l_j(k)}-\frac{1}{2}\,(|l_i(k)|^2+|l_j(k)|^2)\right)$ and using Equation \eqref{eqlindblad} leads to
\[\Lcal(\proj{e_i}{e_j})=-\big[H,\,\proj{e_i}{e_j}\,\big]+\tilde\gamma_{i,j}\,\proj{e_i}{e_j}\,,\]
where $H$ is defined in Equation \eqref{eqlindblad}. For all $i$,
\[0=\Lcal(\proj{e_i}{e_i})=i\big[H,\,\proj{e_i}{e_i}\,\big]\,.\]
Consequently $H$ is diagonal in the o.n.b. $(e_i)$. We write $h_i$ the eigenvalue corresponding to the eigenvector $e_i$. Now, taking $\gamma_{ij}=\tilde \gamma_{ij}+i(h_i-h_j)$ leads to Equation \eqref{eq_prop_diag}. Remark furthermore that when $i\ne j$, then $\text{Re}\,\gamma_{ij}<0$ as necessarily $\lim_{t\to 0}\,\Pcal_t(\proj{e_i}{e_j})=0$ by Equation \eqref{eq_theo_deco2}.\\
\\If $\Pcal$ is reversible, we have
\[\frac{\overline{\gamma_{i,j}}}d=\sca{\Lcal(\proj{e_i}{e_j})}{\proj{e_j}{e_i}}=\sca{\proj{e_i}{e_j}}{\Lcal(\proj{e_j}{e_i})}=\frac{\gamma_{i,j}}d\,,\]
which proves that $\Gamma$ has real coefficients.
\end{proof}

Remark that the decoherence QMS corresponds to the case where all the $\gamma_{i,j}$ are equals. In general, by Theorem \ref{theocomparconst}, Theorem \ref{theo_DFlogSob} and Remark \ref{rem_theo_comp}, we obtain:

\begin{theo}\label{theo_diag}
If $\DF$ is isomorphic to the algebra of diagonal operators, then, with the notations of Proposition \ref{prop_diag},
\begin{equation}\label{eq_theo_diag1}
0<\alpha_\Ncal(\Lcal)\leq\lambda(\Lcal)=\min_{i,j}\,\text{Re}\,-\gamma_{i,j}\,.
\end{equation}
\end{theo}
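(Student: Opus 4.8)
The plan is to prove the three assertions of the theorem in turn, each time reducing to a statement already established in the excerpt, the only computational step being the identification of the spectral gap.

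I would start with the equality $\lambda(\Lcal)=\min_{i\ne j}\,\text{Re}(-\gamma_{i,j})$. Since here $\sigma_{\tr}=\frac{I_\Hcal}{d}$ by the discussion opening Subsection \ref{sect42}, a one-line computation gives $\sca{\proj{e_i}{e_j}}{\proj{e_k}{e_l}}_{\sigma_{\tr}}=\frac1d\,\delta_{ik}\delta_{jl}$, so $\{\proj{e_i}{e_j}\}_{i,j}$ is an orthogonal basis of $\Bcal(\Hcal)$ for $\sca{\cdot}{\cdot}_{\sigma_{\tr}}$. By Proposition \ref{prop_diag} this basis already diagonalizes $\Lcal$ with $\Lcal(\proj{e_i}{e_j})=\gamma_{i,j}\proj{e_i}{e_j}$; hence its adjoint $\hat\Lcal$ is diagonal in the same basis with $\hat\Lcal(\proj{e_i}{e_j})=\overline{\gamma_{i,j}}\,\proj{e_i}{e_j}$, and therefore $\frac{\Lcal+\hat\Lcal}{2}$ is self-adjoint for $\sca{\cdot}{\cdot}_{\sigma_{\tr}}$ with eigenvalues $\{\text{Re}\,\gamma_{i,j}\}_{i,j}$. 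As $\gamma_{i,i}=0$ and $\text{Re}\,\gamma_{i,j}<0$ for $i\ne j$, the eigenvalue $0$ is attained precisely on the diagonal operators, i.e. $\Ker\tfrac{\Lcal+\hat\Lcal}{2}=\DF$ (consistently with Corollary \ref{coro_DFalgebra}), and the spectral gap of $\frac{\Lcal+\hat\Lcal}{2}$, namely the absolute value of its largest nonzero eigenvalue, is $\min_{i\ne j}(-\text{Re}\,\gamma_{i,j})$. By Theorem \ref{theo_speedEIDvar} this spectral gap equals $\lambda(\Lcal)$, which settles the equality (the diagonal indices, which contribute the value $0$, are of course not counted in the minimum).

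For the upper bound, I would observe that $\sigma_{\tr}=\frac{I_\Hcal}{d}$ makes $\Pcal$ doubly stochastic, so by Remark \ref{rem_theo_comp} the conclusion of Theorem \ref{theocomparconst} holds without the reversibility hypothesis (the inversion map $\Theta_{\sigma_{\tr}}$ being trivial in this case), which is exactly $\alpha_\Ncal(\Lcal)\le\lambda(\Lcal)$. For strict positivity, I would again use that $\Pcal$ is doubly stochastic: doubly stochastic QMS are $\Lbb_1$-regular in the sense of \eqref{eq_regularity}, this being one of the known cases recalled in Subsection \ref{sect31} (the verification of \eqref{eq_regularity} does not rely on primitivity), so Theorem \ref{theo_DFlogSob} applies and yields $\alpha_\Ncal(\Lcal)>0$. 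Combining the three facts gives \eqref{eq_theo_diag1}.

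The only genuinely non-mechanical step is the spectral-gap computation, and even there the work is minimal once one exploits that Proposition \ref{prop_diag} has already diagonalized $\Lcal$; the points to watch are that the minimum must be restricted to off-diagonal pairs $i\ne j$ (the diagonal contributions correspond to the kernel $\DF$, not to the gap), and that the two borrowed facts — doubly stochastic $\Rightarrow\Lbb_1$-regular, and the reversibility-free version of Theorem \ref{theocomparconst} via Remark \ref{rem_theo_comp} — remain available in the present non-primitive setting, which they do because $\sigma_{\tr}$ is maximally mixed.
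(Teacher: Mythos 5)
Your proposal is correct and follows exactly the route the paper takes: the paper states this theorem with no separate proof, simply citing Theorem \ref{theocomparconst}, Theorem \ref{theo_DFlogSob} and Remark \ref{rem_theo_comp}, which are precisely the three ingredients you invoke. The only thing you add is the explicit diagonalization of $\frac{\Lcal+\hat\Lcal}{2}$ in the basis $\{\proj{e_i}{e_j}\}$ giving the value $\min_{i\ne j}(-\operatorname{Re}\gamma_{i,j})$ of the spectral gap (together with the correct observation that the minimum must exclude the diagonal pairs), a computation the paper leaves implicit.
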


In the case of the decoherence QMS, we find a better upper bound on the DF log-Sobolev constant in terms of the MLS constant of the depolarizing channel of the maximally mixed stated, computed in \cite{MSW16}. 

\begin{prop}\label{prop_DFLSIdeco}
One has
\begin{equation}\label{eq_prop_DFLSIdeco}
\alpha_\Ncal(\Lcal^\deco) \leq \alpha_1(L_{\text{depol}}^\gamma)\,.
\end{equation}
\end{prop}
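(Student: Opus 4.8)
The plan is to compare the two constants state by state through their Rayleigh-type quotients. First, combining Definition~\ref{deDFE1} with the entropy production formula \eqref{eq_ex_decoQMS2}, for every $\rho\in\Scal(\Hcal)$ with $\DentDF{\rho}>0$ one has
\[\frac{\EP_{\Lcal^\deco}(\rho)}{2\,\DentDF{\rho}}=\frac{\gamma}{2}\left(1+\frac{\Dent{\rho_\Ncal}{\rho}}{\Dent{\rho}{\rho_\Ncal}}\right),\]
and Spohn's formula \eqref{eq_comp_EP} applied to $L^\gamma_{\text{depol}*}(\rho)=\gamma(\sigma_{\tr}-\rho)$ yields in exactly the same way, for every faithful $\sigma\neq\sigma_{\tr}$ (recall $\sigma_{\tr}=\tfrac{I_\Hcal}{d}$ here),
\[\frac{\EP_{L^\gamma_{\text{depol}}}(\sigma)}{2\,\Dent{\sigma}{\sigma_{\tr}}}=\frac{\gamma}{2}\left(1+\frac{\Dent{\sigma_{\tr}}{\sigma}}{\Dent{\sigma}{\sigma_{\tr}}}\right).\]
So it is enough to attach to each faithful $\sigma\neq\sigma_{\tr}$ a state $\rho\in\Scal(\Hcal)$ with $\rho_\Ncal=\sigma_{\tr}$ that has the same spectrum as $\sigma$: then $\Dent{\rho}{\rho_\Ncal}=\Dent{\rho}{\sigma_{\tr}}=\Dent{\sigma}{\sigma_{\tr}}$ and $\Dent{\rho_\Ncal}{\rho}=\Dent{\sigma_{\tr}}{\rho}=\Dent{\sigma_{\tr}}{\sigma}$, since $\Dent{\rho}{\frac{I_\Hcal}{d}}$ and $\Dent{\frac{I_\Hcal}{d}}{\rho}$ depend on $\rho$ only through its eigenvalues; the two quotients above then coincide, and taking the infimum over $\sigma$ gives $\alpha_\Ncal(\Lcal^\deco)\le\alpha_1(L^\gamma_{\text{depol}})$.

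The step carrying the actual content is the construction of such a $\rho$. Given the spectral decomposition $\sigma=\sum_{k=1}^{d}s_k\,\proj{g_k}{g_k}$, I would set $\rho=U\,\operatorname{diag}(s_1,\dots,s_d)\,U^{*}$ in the fixed orthonormal basis $(e_i)_{i=1}^{d}$ of $\Hcal=\C^{d}$, where $U=(U_{jk})$ is the discrete Fourier matrix $U_{jk}=d^{-1/2}e^{2\pi i jk/d}$ (any complex Hadamard matrix would do), which has $|U_{jk}|^{2}=\tfrac1d$ for all $j,k$. Then $\rho$ is unitarily equivalent to $\sigma$, so it has the same eigenvalues, while each diagonal entry of $\rho$ in the basis $(e_i)$ equals $\sum_{k}s_k|U_{ik}|^{2}=\tfrac1d$; hence $\rho_\Ncal=E_{\Ncal*}(\rho)=\tfrac{I_\Hcal}{d}=\sigma_{\tr}$, precisely what is needed. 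Faithfulness of $\sigma$ forces $\rho\neq\sigma_{\tr}$, so $\DentDF{\rho}>0$, and makes all relative entropies appearing above finite.

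Beyond this rotation I do not expect any real difficulty: the remaining ingredients — unitary invariance of $\Dent{\cdot}{\frac{I_\Hcal}{d}}$ and $\Dent{\frac{I_\Hcal}{d}}{\cdot}$, together with the two one-line entropy-production identities obtained from \eqref{eq_comp_EP} — are routine. The one point worth a sentence is that the infimum defining $\alpha_1(L^\gamma_{\text{depol}})$ is unchanged when restricted to faithful states $\sigma\neq\sigma_{\tr}$: for a non-faithful $\sigma$ one has $\Dent{\sigma_{\tr}}{\sigma}=+\infty$ while $\Dent{\sigma}{\sigma_{\tr}}$ stays finite, so such $\sigma$ only contribute an infinite quotient.
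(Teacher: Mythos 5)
Your proof is correct and takes essentially the same route as the paper: both reduce the two log-Sobolev quotients to the common Rayleigh form $\tfrac{\gamma}{2}\bigl(1+\Dent{I_\Hcal/d}{\cdot}/\Dent{\cdot}{I_\Hcal/d}\bigr)$ and both flatten the diagonal of a state with prescribed spectrum by passing to the Fourier basis (the paper phrases this as choosing the minimizer to commute with the Weyl shift $U=\sum_i\proj{e_{i+1}}{e_i}$, whose eigenbasis is exactly your DFT basis). Your state-by-state version, with the explicit remark that non-faithful states contribute an infinite quotient, is a marginally more careful rendering of the same argument.
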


\begin{proof}
Recall the definition of the depolarizing quantum Markov semigroup whose Lindbladian is given by
\[L_{\text{depol}}^\gamma(X)=\gamma\left(\Tr[X]\frac{I_{\Hcal}}{d}-X\right)\,.\]
By definition,
\[\alpha_1(L^\gamma_{\text{depol}})=\underset{\rho\in\Scal(\Hcal)}{\min}\,\frac{1}{2}\gamma\left(1+\frac{\Dent{\frac{I_\Hcal}{d}}{\rho}}{\Dent{\rho}{\frac{I_\Hcal}{d}}}\right)\,.\]
As $\Scal(\Hcal)$ is compact, the minimum is attained at a state that we denote by $\rho_{\min}$. We remark that, by invariance of the relative entropy under unitary conjugation of both states, only the spectrum on $\rho_{\min}$ matters and any state with the same spectrum achieved this minimum. We choose $\rho_{\min}$ so that it commutes with the Weyl unitary matrix:
\[U=\sum_{i=1}^d\,\proj{e_{i+1}}{e_i}\,,\]
where we set $e_{d+1}=e_1$. Otherwise stated, $\rho_{\min}$ is diagonal in the same o.n.b than $U$ and as a consequence, $E_{\Ncal*}(\rho_{\min})=\frac{I_\Hcal}{d}$. This implies
\[\alpha_\Ncal(\Lcal^\deco)\leq\frac{\EP_{\Lcal^\deco}(\rho_{\min})}{2\,\DentDF{\rho_{\min}}}=\frac{1}{2}\gamma\left(1+\frac{\Dent{\frac{I_\Hcal}{d}}{\rho}}{\Dent{\rho}{\frac{I_\Hcal}{d}}}\right)=\frac{\EP_{\Lcal_{\text{depol}}}(\rho_{\min})}{2\,\Dent{\rho_{\min}}{\frac{I_\Hcal}{d}}}=\alpha_1(L_{\text{depol}}^\gamma)\,.\]
\end{proof}

\section{Application to decoherence time for Quantum Markov Semigroups}\label{sect5}

We now state our main result concerning the decoherence time. Remark that, as in the primitive case, we do not ask the QMS to be reversible.

\begin{theo}\label{theo_decotime}
Let $\Pcal$ be a QMS on a finite dimensional Hilbert space $\Kcal$, with Lindbladian $\Lcal$. We assume that $\Pcal$ has a faithful invariant state and we denote by $\sigma_{\min}$ the smallest eigenvalue of $\sigma_{\tr}$.
\begin{itemize}
\item One obtain the following upper bound in terms of the DF Poincar\'e constant
\begin{equation}\label{eq_theo_decotime1}
\norm{\Pcal_{*t}\left(\rho-E_{\Ncal*}(\rho)\right)}_{\operatorname{Tr}}\leq\, \sqrt{1/\sigma_{\min}}\,e^{-\lambda(\Lcal)\,t}\,.
\end{equation}
\item The upper bound obtained using the DF log-Sobolev constant is
\begin{equation}\label{eq_theo_decotime2}
\norm{\Pcal_{*t}\left(\rho-E_{\Ncal*}(\rho)\right)}_{\operatorname{Tr}}\leq\, \sqrt{2\log\left(1/\sigma_{\min}\right)}\, e^{-\alpha_\Ncal(\Lcal)\,t}\,.
\end{equation}
\end{itemize}
\end{theo}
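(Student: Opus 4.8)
The plan is to bound the trace norm by a functional whose exponential decay is already available — the DF‑variance for \eqref{eq_theo_decotime1}, the DF‑relative entropy for \eqref{eq_theo_decotime2} — and then to estimate that functional at $t=0$ in terms of $\sigma_{\min}$. The basic observation, used in both parts, is that $\Pcal_{*t}$ and $E_{\Ncal*}$ commute (proof of Theorem \ref{theo_speedEIDvar}), so that $\Pcal_{*t}(\rho-E_{\Ncal*}(\rho))=\Pcal_{*t}(\rho)-E_{\Ncal*}(\Pcal_{*t}(\rho))$; writing $\rho_\Ncal=E_{\Ncal*}(\rho)$, the quantity to control is precisely the DF‑part removed from the evolved state.

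For \eqref{eq_theo_decotime1} I would dualize into the Heisenberg picture. Writing $\norm{A}_{\operatorname{Tr}}=\sup\{|\Tr[XA]|:X=X^*,\ \norm{X}_\infty\le1\}$ and using $\Tr[X\,\Pcal_{*t}(\rho-\rho_\Ncal)]=\Tr[\Pcal_t(X)(\rho-\rho_\Ncal)]$, the commutation of $\Pcal_t$ with $E_\Ncal$, and the identity $\Tr[(\rho-\rho_\Ncal)Y]=\Tr[(\rho-\rho_\Ncal)(Y-E_\Ncal(Y))]$ (duality of $E_{\Ncal*},E_\Ncal$ together with $E_\Ncal^2=E_\Ncal$), one rewrites $\Tr[X\,\Pcal_{*t}(\rho-\rho_\Ncal)]=\Tr[(\rho-\rho_\Ncal)\,\Pcal_t(X-E_\Ncal(X))]$. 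Passing to $\sca{\cdot}{\cdot}_{\sigma_{\tr}}$ via $W:=\sigma_{\tr}^{-1/2}(\rho-\rho_\Ncal)\sigma_{\tr}^{-1/2}$, which by Lemma \ref{lemcond} satisfies $E_\Ncal(W)=0$ and $\Tr[(\rho-\rho_\Ncal)Z]=\sca{W}{Z}_{\sigma_{\tr}}$, Cauchy--Schwarz bounds this by $\norm{W}_{2,\sigma_{\tr}}\,\norm{\Pcal_t(X-E_\Ncal(X))}_{2,\sigma_{\tr}}$. The second factor squared is $\Var_\Ncal(\Pcal_t(X))\le e^{-2\lambda(\Lcal)t}\Var_\Ncal(X)$ by Theorem \ref{theo_speedEIDvar}, and $\Var_\Ncal(X)\le\norm{X}_{2,\sigma_{\tr}}^2\le\norm{X}_\infty^2\le1$; the first factor squared equals $\Var_\Ncal(\sigma_{\tr}^{-1/2}\rho\,\sigma_{\tr}^{-1/2})$, which by Remark \ref{rem_DFvariance} is $\le\norm{\sigma_{\tr}^{-1/2}\rho\,\sigma_{\tr}^{-1/2}}_{2,\sigma_{\tr}}^2=\Tr[\rho\,\sigma_{\tr}^{-1/2}\rho\,\sigma_{\tr}^{-1/2}]\le\Tr[\rho\,\sigma_{\tr}^{-1}]\le1/\sigma_{\min}$, using $0\le\rho\le I_\Hcal$. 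Taking the supremum over $X$ gives \eqref{eq_theo_decotime1}.

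For \eqref{eq_theo_decotime2} I would invoke Pinsker's inequality for the states $\Pcal_{*t}(\rho)$ and $E_{\Ncal*}(\Pcal_{*t}(\rho))$, which yields $\norm{\Pcal_{*t}(\rho-\rho_\Ncal)}_{\operatorname{Tr}}^2\le2\,\DentDF{\Pcal_{*t}(\rho)}$. If $\alpha_\Ncal(\Lcal)>0$, then $\text{MLSI}_\Ncal(\alpha_\Ncal(\Lcal))$ holds and Theorem \ref{theoDFentropy} gives $\DentDF{\Pcal_{*t}(\rho)}\le e^{-2\alpha_\Ncal(\Lcal)t}\DentDF{\rho}$; if $\alpha_\Ncal(\Lcal)=0$ the same estimate is trivial, since non-negativity of the entropy production makes $t\mapsto\DentDF{\Pcal_{*t}(\rho)}$ non-increasing. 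Finally, the remark following Lemma \ref{lem_DFentropy} together with $\rho\le I_\Hcal$ and $\sigma_{\tr}\ge\sigma_{\min}I_\Hcal$ gives $\DentDF{\rho}\le\Dent{\rho}{\sigma_{\tr}}=\Tr[\rho\log\rho]-\Tr[\rho\log\sigma_{\tr}]\le-\Tr[\rho\log\sigma_{\tr}]\le\log(1/\sigma_{\min})$. Combining the three estimates yields \eqref{eq_theo_decotime2}.

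No single step is deep here: the argument is a synthesis of Theorems \ref{theo_speedEIDvar} and \ref{theoDFentropy} with Pinsker's inequality and a Cauchy--Schwarz comparison of norms. The points that require care are the bookkeeping of the weights $\sigma_{\tr}^{\pm1/2}$ when moving between the Schrödinger and Heisenberg pictures — routing the variance estimate through $\Var_{\sigma_{\tr}}$ via Remark \ref{rem_DFvariance} and using $\norm{\cdot}_{2,\sigma_{\tr}}\le\norm{\cdot}_\infty$ is what makes the prefactor come out as the sharp $\sqrt{1/\sigma_{\min}}$ rather than something weaker — and the degenerate case $\alpha_\Ncal(\Lcal)=0$ in the log‑Sobolev bound, which must be handled by monotonicity of the DF‑relative entropy rather than directly by Theorem \ref{theoDFentropy}.
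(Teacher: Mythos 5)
Your proposal is correct, and its second half (Pinsker's inequality, Theorem \ref{theoDFentropy}, and the bound $\DentDF{\rho}\leq\Dent{\rho}{\sigma_{\tr}}\leq\log(1/\sigma_{\min})$) is exactly the paper's argument; the explicit treatment of the degenerate case $\alpha_\Ncal(\Lcal)=0$ via monotonicity of the DF-relative entropy is a small but welcome addition. For the Poincar\'e bound \eqref{eq_theo_decotime1} you take a genuinely different route: the paper passes through the $\chi^2$-divergence, citing Ruskai's theorem for both $\norm{\rho-\sigma}_{\Tr}^2\leq\chi^2(\rho,\sigma)$ and $\chi^2(\rho,\sigma_{\tr})\leq1/\sigma_{\min}$, and then identifies $\chi^2\left(\rho,E_{\Ncal*}(\rho)\right)$ with $\Var_\Ncal(\sigma_{\tr}^{-\frac12}\rho\,\sigma_{\tr}^{-\frac12})$; you instead dualize the trace norm and run Cauchy--Schwarz directly in the $\sca{\cdot}{\cdot}_{\sigma_{\tr}}$ inner product. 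The two are morally the same estimate (Ruskai's inequality is itself a duality/Cauchy--Schwarz argument), but your version is self-contained and, more importantly, keeps the weight $\sigma_{\tr}^{\pm\frac12}$ consistent throughout: the paper's displayed identity equates the $\chi^2$-divergence, whose natural weight is $E_{\Ncal*}(\rho)^{-\frac12}$, with the $\sigma_{\tr}$-weighted DF-variance, which only matches if one reads the $\chi^2$-divergence as weighted by $\sigma_{\tr}$; your explicit chain $\norm{W}_{2,\sigma_{\tr}}^2=\Var_\Ncal(\sigma_{\tr}^{-\frac12}\rho\,\sigma_{\tr}^{-\frac12})\leq\Tr[\rho\,\sigma_{\tr}^{-1}]\leq1/\sigma_{\min}$ (using $\rho\leq I_\Hcal$) makes this bookkeeping unambiguous and still lands on the sharp prefactor. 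What the paper's formulation buys in exchange is the conceptual link to the $\chi^2$-divergence and the family of monotone metrics it alludes to via \cite{TKRW}; what yours buys is a proof readable without external references.
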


\begin{proof}
The second Inequality \eqref{eq_theo_deco2} is directly a consequence of Theorem \ref{theoDFentropy}, Pinsker's Inequality and the fact that
\[\DentDF{\rho}\,\leq\,\Dent{\rho}{\sigma_{\tr}}\leq -\log(\sigma_{\min}),\qquad\forall\rho\in\Scal(\Hcal)\,,\]
where the first inequality comes from Lemma \ref{lem_DFentropy} and the second can be proved directly using that $\sigma_{\tr}\geq\sigma_{\min}\, I_\Hcal$ and the operator monotonicity of the logarithm.\\
In order to prove the first Inequality \eqref{eq_theo_decotime1}, we need to introduce the $\chi_2$-divergence between two states defined as
\[\chi^2(\rho,\sigma)=\Tr\left[(\rho-\sigma)\,\sigma^{-\frac{1}{2}}\,(\rho-\sigma)\,\sigma^{-\frac{1}{2}}\right],\qquad\rho,\sigma\in\Scal(\Hcal)\,.\]
Then, it is proved in \cite{Rus94} Theorem 3 that $\norm{\rho-\sigma_{\tr}}_1^2\leq\chi_2(\rho,\sigma_{\tr})\leq 1/\sigma_{\min}$ (see also \cite{TKRW} for a generalization of this result to more general forms of divergences). Besides, a simple computation shows that
\[\chi^2\left(\rho,E_{\Ncal*}(\rho)\right)=\Var_\Ncal(\sigma_{\tr}^{-\frac{1}{2}}\,\rho\,\sigma_{\tr}^{-\frac{1}{2}})\,.\]
As, by Lemma \ref{lemDFvariance}, $\Var_\Ncal(X)\leq\Var_{\sigma_{\tr}}(X)$ for any $X\in\Bcal(\Hcal)$, Inequality \eqref{eq_theo_decotime1} follows by Theorem \ref{theo_speedEIDvar}
\end{proof}

\begin{example}
We conclude this section with a generalization of the examples discussed so far. Consider any $*$-algebra $\Ncal$ of $\Bcal(\Hcal)$ with $\Hcal=\C^d$ and denote by $E_\Ncal$ any faithful conditional expectation from $\Bcal(\Hcal)$ to $\Ncal$. We define $\sigma_{\tr}=\frac{1}{d}\Tr\circ E_\Ncal$. The conditional expectation $E_\Ncal$ is consequently the orthogonal projection on $\Ncal$ for the $\sca{\cdot}{\cdot}_{\sigma_{\Tr}}$ scalar product. Similarly to \eqref{eq_def_QMSdeco} and \eqref{eq_depolAB}, we can define on $\Hcal$ the QMS with Lindbladian:
\[\Lcal_{\Ncal}^\gamma(X)=\gamma\left(E_{\Ncal}(X)-X\right)\,.\]
This defines a reversible Lindbladian with respect to the state $\sigma_{\tr}$.
Then, by usual computations, $\lambda(\Lcal_\Ncal)=\gamma$ and
\[\EP_{\Lcal_\Ncal^\gamma}(\rho)=\gamma\left(\Dent{\rho}{\rho_\Ncal}+\Dent{\rho_\Ncal}{\rho}\right)\,,\]
and consequently $0\leq\frac{\gamma}{2}\leq\alpha_\Ncal(\Lcal_\Ncal)\leq\gamma$.\newline
\\Assume now that $\sigma_{\tr}$ is the maximally-mixed state: $\sigma_{\tr}=\frac{I_\Hcal}{d}$. Going back to the definition of the decoherence-time \eqref{eq_def_decotime2}, we obtain with the spectral gap the estimate:
\[\tau_\deco\geq\frac{1}{2\gamma}\,\log \left(d\,\eps^{-2}\right)\,,\]
and with the DF log-Sobolev constant:
\[\tau_\deco\geq\,\frac{1}{\gamma}\log \left(2\log d\,\eps^{-2}\right)\,.\]
We see that with the first estimate, $\tau_\deco=\Omega (\log d)$ while with the second estimate, $\tau_\deco=\Omega (\log \log d)$, which improves the estimate by a logarithm. As emphasized in \cite{KT2013}, such an improvement is possible only when the modified DF log-Sobolev constant is of the same order as the spectral gap.
\end{example}

\section{Conclusion}\label{sect6}

In this article we proposed a natural framework in order to study functional inequalities for not necessarily primitive quantum Markov semigroups. We illustrated this framework by studying generalizations of the Poincaré Inequality and the modified log-Sobolev Inequality. Such inequalities imply the rapid decoherence of the quantum Markov semigroup.

\paragraph{}Some new difficulties appear in this context relative to the quantum nature of the system and it seems more challenging to obtain estimates of the new log-Sobolev constant. We highlighted this fact in the case of a bipartite system where only one of the system undergoes a irreversible evolution. In this scenario, we introduced the production of information of the QMS, that we showed was a relevant quantity in order to estimate the DF log-Sobolev constant.

\paragraph{}We believe that the practical interest of this generalization can motivate the study of such inequalities. One natural continuation would be to generalize the notion of Hypercontractivity and its equivalence with Gross' log Sobolev Inequality (a similar direction has recently been proposed in \cite{BK16}). This will be the subject of a future article \cite{BR17}.

\paragraph{Acknowledgement:} The author is thankful to many persons for their advises and fruitful discussions, among which Raffaella Carbone in Pavia, and Nilanjana Datta and Cambyse Rouz\'e in Cambridge.

\bibliographystyle{abbrv}
\bibliography{biblio}

\end{document}